\newtheorem{definition}{Definition}
\newtheorem{theorem}{Theorem}
\newtheorem{lemma}{Lemma}
\newtheorem{assumption}{Assumption}
\newtheorem{corollary}{Corollary}
\newenvironment{proofsk}{\proof}{\endproof}
\title[AAMAS-2026 Formatting Instructions]{Constant-Memory Strategies in Stochastic Games:\\ Best Responses and Equilibria}
\author{Fengming Zhu}
\affiliation{
  \institution{The Hong Kong University of Science and Technology}
  \city{Hong Kong SAR}
  \country{China}}
\email{fzhuae@connect.ust.hk}
\author{Fangzhen Lin}
\affiliation{
  \institution{The Hong Kong University of Science and Technology}
  \city{Hong Kong SAR}
  \country{China}}
\email{flin@cs.ust.hk}
\begin{abstract}
%short version
Stochastic games have become a prevalent framework for studying long-term multi-agent interactions, especially in the context of multi-agent reinforcement learning. In this work, we comprehensively investigate the concept of constant-memory strategies in stochastic games. We first establish some results on best responses and Nash equilibria for behavioral constant-memory strategies, followed by a discussion on the computational hardness of best responding to mixed constant-memory strategies. Those theoretic insights are later verified on several sequential decision-making testbeds, including the \textit{Iterated Prisoner's Dilemma}, the \textit{Iterated Traveler's Dilemma}, and the \textit{Pursuit} domain.
This work aims to enhance the understanding of theoretical issues in single-agent planning under multi-agent systems, and uncover the connection between decision models in single-agent and multi-agent contexts.
\textit{
The code is available at \texttt{https://github.com/Fernadoo/Const-Mem}.
}

\end{abstract}
\keywords{Stochastic games; Bounded rationality; Best response; Restricted memory; Reinforcement learning}
\newcommand{\BibTeX}{\rm B\kern-.05em{\sc i\kern-.025em b}\kern-.08em\TeX}
\begin{document}

%%% The following commands remove the headers in your paper. For final 
%%% papers, these will be inserted during the pagination process.

\pagestyle{fancy}
\fancyhead{}

%%% The next command prints the information defined in the preamble.

\maketitle 

%%%%%%%%%%%%%%%%%%%%%%%%%%%%%%%%%%%%%%%%%%%%%%%%%%%%%%%%%%%%%%%%%%%%%%%%

\section{Introduction}

% real-world examples modelled by stochastic games; indeed POSG is more general, but SG is a good starting point.
Various real-world situations that involve long-term interactions among a group of participants can be modeled as stochastic games, such as negotiation between multiple stakeholders~\cite{baarslag2016learning, de2017negotiating, baarslag2024multi}, bidding and mechanism design in repeated auctions~\cite{iyer2014mean, balseiro2015repeated, guo2019learning, guo2024generative, su2024auctionnet, shen2020reinforcement}, multi-agent teamwork~\cite{sharon2015conflict, mirsky2022survey, stern2019multi-overview}, and even human-robot collaboration~\cite{zhang2023building,puig2023habitat}.
Stochastic games, also known as Markov games, model the interactions of these multi-agent systems as  a Markov chain over a set of states, where the transitions are triggered by joint actions and are potentially stochastic.

% 1. some existing result of stochastic games
% 2. some current usage of the above existence result under stationary policies: MARL, opponent modelling
The formalization of stochastic games was first proposed in Shapley's seminal work~\cite{shapley1953stochastic}.
A perfectly rational agent in a stochastic game is supposed to make use of all past histories to determine the next action, and therefore, the notion of strategies, defined as mappings from all possible histories to actions, is inherently complex.
The fact that there are infinitely many strategies prohibits the direct application of Nash's theorem for establishing any existence result of equilibria.
However, the stationary transitions of stochastic games inevitably draw attention to a highly special subclass of time-independent and memoryless strategies that only consider the current states while discarding all past histories, termed \textit{stationary strategies}.
Indeed, the existence of equilibria formed by stationary strategies in 
$n$-player general-sum stochastic games was later proven by~\citeauthor{fink1964equilibrium}~\cite{fink1964equilibrium} and~\citeauthor{takahashi1964equilibrium}~\cite{takahashi1964equilibrium}, under mild assumptions.
Despite being highly restricted in terms of expressiveness, the notion of stationary strategies has enabled the community to practically investigate some complex real-world applications, particularly by resorting to multi-agent reinforcement learning (MARL) techniques, as advocated by~\citeauthor{littman1994markov}~\cite{littman1994markov} and implemented in a line of subsequent work~\cite{lowe2017multi, foerster2018counterfactual, rashid2020monotonic, yu2022the}.

% 1. Some attempt beyond stationary memories in repeated games such as Turing machines / finite automata / constant-memory, mostly under IPD.
% 2. Significance of constant memory strategies, like RNN/Transformers
Notably, one would naturally expect strategies in other less restricted forms that can encode a broader class of behavioral patterns, hoping to achieve better payoff outcomes.
For example, in the Iterated Prisoner's Dilemma (IPD), if only stationary strategies are considered, there is a unique Nash equilibrium where both players choose to \textit{defect} all the time, resulting in the lowest overall payoff.
However, even with the ability to remember only one past action played by the opponent, the well-known \textit{Tit-For-Tat} (TFT) strategy (start with \textit{cooperation}) can be devised.
One can easily see that if both players adopt the TFT strategy, they will follow a trajectory of both \textit{cooperating} throughout the game, resulting in a Nash equilibrium with the highest possible payoff.
%One may additionally note that the proof of the \textit{Folk theorem}~\cite{shoham2008multiagent} (under the average-payoff criterion) also makes use of a combination of a K-cycle strategy together with a Grim trigger.
Apart from other forms of representation, such as strategies represented as finite automata~\cite{rubinstein1986finite, ben1990complexity, zuo2015optimal} and even Turing machines~\cite{megiddo1986play, knoblauch1994computable, nachbar1996non, chen2017bounded},
we focus our main effort on investigating the notion of \textit{constant-memory strategies}, i.e., mappings from history segments of bounded lengths to actions, mainly because it directly relates to the concept of bounded rationality~\cite{simon1990bounded} in general,
and is highly implementable using function approximators like Recurrent Neural Networks~\cite{hochreiter1997long, cho2014learning} and Transformers~\cite{vaswani2017attention} in practice.
Note that this notion has been preliminarily investigated by ~\citeauthor{chen2017k}~\cite{chen2017k} and ~\citeauthor{wang2019pure}~\cite{wang2019pure}.
However, they only focus on behavioral strategy best responses for repeated games, without further discussion on either Nash equilibria or mixed strategies.

%\textcolor{red}{assuming opponents are in constant memory strategies, learned somewhere else}

In this paper, we comprehensively study the theoretical properties associated with \textit{constant-memory strategies} in \textit{stochastic games}.
We begin by presenting the following two results:
\begin{enumerate}
	\item \textit{A Characterization of Best Responses:} Given a constant-memory strategy profile adopted by the opponents, there always exists a deterministic constant-memory strategy that makes use of the same length of memory acting as a pure strategy best response.
	\item \textit{An Existence Result of Equilibria:} Given any finite length of memory, there always exists a Nash equilibrium where all agents adopt constant-memory (but not necessarily deterministic) strategies using that length of memory.
\end{enumerate}
As a side benefit of using memories of constant lengths, any strategy that uses a shorter memory can always be implemented by one that uses a longer memory.
Therefore, the above two results directly imply that any NE formed by shorter-length-memory strategies can be transformed into an NE formed by longer-length-memory strategies, suggesting that the longer the memory used by the strategies, the richer the equilibria one can potentially expect.

Additionally, we provide further results about best responses against mixed constant-memory strategies, mathematically defined as those sampled from a set of support strategies with certain probabilities.
This is associated with broad applications in the domain of opponent modeling~\cite{carmel1998explore, albrecht2015game, albrecht2018autonomous, zhu2025single}, particularly for type-based methods~\cite{albrecht2015game, albrecht2019convergence, albrecht2016belief, zhu2025single}.
However, we demonstrate that:
\begin{enumerate}
	\item \textit{An Negative Result on Strategy Equivalence:} An opponent with a mixed constant-memory strategy may not correspond to an equivalent opponent with a single (behavioral) constant-memory strategy in terms of resulting in the same payoff.
	\item \textit{A Negative Result on Best Responses:} The best response against a mixed constant-memory strategy is not necessarily constant-memory, and computing such best responses is computationally hard, possibly even not computable.
\end{enumerate}
In spite of these negative results, we do provide a computational model for solving the best response against a mixed strategy, which 1) also serves as the evidence that those computational models proposed by~\citeauthor{zhu2025single}~\cite{zhu2025single} do not over-complicate the problem; and 2) can be carried over to the methods in~\cite{zhu2025single} which only assume stationary strategies.

%Finally, we translate the aforementioned theoretical insights into a simple generative framework to instantiate single-agent learning environments that can be arbitrarily difficult to learn a good policy.
%This, in turn, reveals the potential sources of hardness underlying some single-agent decision-making problems.

\section{Preliminaries}

The whole system where the agents interact is modelled as a \textit{stochastic game} (SG, also known as Markov games)~\citep{shapley1953stochastic,solan2015stochastic},
which 
can be seen as an extension of both \textit{normal-form games} (to dynamic situations with stochastic transitions) and \textit{Markov decision processes} (to strategic situations with multiple agents).
A stochastic game
is a 5-tuple $\langle \mathcal{N},\mathcal{S}, \mathcal{A}, T, R \rangle$ given as follows,
\begin{enumerate}
	\item $\mathcal{N}$ is a finite set of $n$ agents.
	\item $\mathcal{S}$ is a finite set of (environmental) states.
	\item $\mathcal{A} = \mathcal{A}_1 \times \cdots \times \mathcal{A}_n$ is a set of joint actions, where $\mathcal{A}_i$ is the action set of agent $i$. In particular, we write $a_i$ as the action of agent $i$ and the one without any subscript $a = (a_i, a_{-i})$  as the joint action.
	\item $T: \mathcal{S} \times \mathcal{A}_1 \times \cdots \mathcal{A}_n \mapsto \Delta(\mathcal{S})$ defines stochastic transitions among states.
	\item $R_i: \mathcal{S} \times \mathcal{A}_1 \times \cdots \mathcal{A}_n \mapsto \mathbb{R}$ denotes the immediate rewards for agent $i$.
\end{enumerate}

To define best responses and hence equilibria, we need to first define strategies and objectives.

Assuming complete observability and perfect recall,
a perfectly rational agent should utilize the entire history,
while in memory-restricted cases, an agent can only devise strategies based on past memories of finite lengths.
We denote the space of all possible histories of length $K\in \mathbb{N}$ as $\mathcal{H}^K \triangleq (\mathcal{S} \times \mathcal{A})^K$. In particular, when $K = 0$, we have $\mathcal{H}^0 = \emptyset$ meaning that no history can be utilized.
Then, given any non-negative integer $K$, a $K$-memory strategy for agent $i$ is a mapping from all possible histories with lengths less then or equal to $K$ and the current states to (possibly randomized) actions, mathematically denoted as $\pi_i: \mathcal{H}^{\leq K} \times \mathcal{S} \mapsto \Delta(\mathcal{A}_i)$ where $\mathcal{H}^{\leq K} \triangleq \cup_{k=0}^K \mathcal{H}^k$.
Let $\Pi_i^{K}$
%(as a shorthand of $\Pi_i^{\leq K}$)
denote the set of all such $K$-memory strategies for agent $i$.
For convenience, we let $\mathcal{H}^\infty = (\mathcal{S} \times \mathcal{A})^*$ denote the set of complete histories that an agent with perfect recall can possibly memorize, and therefore, $\Pi_i^\infty$ is the set of all possible infinite-memory strategies for agent $i$ of the form $\pi_i: (\mathcal{S} \times \mathcal{A})^* \times \mathcal{S} \mapsto \Delta(\mathcal{A}_i)$.
A direct consequence is that $\Pi_i^{K} \subseteq \Pi_i^{K'} \subseteq \Pi_i^\infty$ for any non-negative $K \leq K'$.
Among them, one of the most popular class of strategies is $\Pi_i^{0}$, termed \textit{stationary strategies.}
\textbf{Note that an agent capable of performing infinite-memory strategies can deliberately adopt a constant-memory strategy.}
%\textcolor{red}{emphasize behavioral strategy}
To be clear, we use the term \textit{constant-memory} to distinguish from those infinite-memory strategies, and use the term \textit{K-memory} when this specific $K$ needs to be emphasized.

The objective for each agent is to maximize its accumulated discounted rewards (a.k.a. the discounted-payoff scenario, as opposed to the average-payoff scenario).
We let $R_{i,t}$ denote the reward signaled to agent $i$ at step $t$, similarly for $S_t$ and $a_{i,t}$, then the overall utility under a policy profile $(\pi_i, \pi_{-i})$ starting from any arbitrary state $S\in \mathcal{S}$ is
\begin{equation}
u_i(S;\pi_i, \pi_{-i}) = \mathbb{E}_{(\pi_i, \pi_{-i})}\Big[ \sum_{t=0}^\infty \gamma^t R_{i,t} \Big| S_0=S \Big]
\label{eq:utility_def}
\end{equation}
$\pi_i$ is said to be the best response of $\pi_{-i}$, denoted as $\pi_i \in BR(\pi_{-i})$, if 
\begin{equation}
\forall S\in \mathcal{S}, \pi_i'\in \Pi_i^\infty, u_i(S;\pi_i, \pi_{-i}) \geq u_i(S;\pi_i', \pi_{-i})
\end{equation}
%\textcolor{red}{emphasize $\Pi^\infty$.}
requiring that a $\pi_i$ must outperform any other in $\Pi_i^\infty$ to serve as the best response.
Note that, to compare the values of two strategy profiles, one must ensure that the limit of the right-hand side (RHS) in Equation~(\ref{eq:utility_def}) exists in the first place.
Also note that, some pairs of $\pi_i$ and $\pi_i'$ may not be comparable in the above sense, as as this comparison requires value dominance across all possible states.

\section{Best Responses and Nash Equilibria}
\label{sec:main_res}

%%% lemma and assumption

One should be aware of the following fact for single-agent Markov Decision Processes (MDPs)~\cite{puterman2014markov} in the first place,  which will be considered as a lemma for the remainder of this paper.

\begin{lemma}
	For a (single-agent) MDP $\langle S, A, T, R, \gamma\rangle$, the following two are equivalent,
	\begin{enumerate}
		\item Searching for a policy $\pi_*: S \mapsto \Delta(A)$ that maximizes the accumulated rewards $\mathbb{E}_{\pi_*}\sum_{t=0}^\infty[\gamma^tR_t]$, for any initial $s\in S$.
		\item Solving the Bellman optimality equation below
		\[
		\forall s\in S, v_*(s) = \max_{a\in A}\Big[ R(s,a) + \gamma \sum_{s'\in S} T(s'|s,a)v_*(s') \Big],
		\]
		and then extracting the policy from the optimal value function
		\[
		\pi_*(s) \in \arg\max_{a\in A}\Big[ R(s,a) + \gamma \sum_{s'\in S} T(s'|s,a)v_*(s') \Big].
		\]
	\end{enumerate}
\end{lemma}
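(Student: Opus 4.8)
The plan is to prove the equivalence by analyzing the \emph{Bellman optimality operator} $\mathcal{T}$, defined on bounded value functions $v: S \mapsto \mathbb{R}$ by $(\mathcal{T}v)(s) = \max_{a \in A}[R(s,a) + \gamma \sum_{s'} T(s'|s,a) v(s')]$, under the sup-norm $\|\cdot\|_\infty$. First I would verify that $\mathcal{T}$ is a $\gamma$-contraction: for any two value functions $v, w$ one has $\|\mathcal{T}v - \mathcal{T}w\|_\infty \leq \gamma \|v - w\|_\infty$, which follows from the non-expansiveness of the max operator together with the fact that $T(\cdot|s,a)$ is a probability distribution that sums to one. Since $\gamma < 1$ and the space of bounded value functions is complete, Banach's fixed-point theorem guarantees that $\mathcal{T}$ has a unique fixed point $v_*$; this fixed point is, by definition, exactly the solution to the Bellman optimality equation in item (2).

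Next I would connect this fixed point to the optimization problem in item (1). For any stationary policy $\pi$, its value function $v^\pi$ is the unique fixed point of the (linear) policy-evaluation operator $\mathcal{T}^\pi$, which is likewise a $\gamma$-contraction and is monotone. The key inequality to establish is $v^\pi \leq v_*$ pointwise for every $\pi$. This follows by observing $\mathcal{T}^\pi v_* \leq \mathcal{T} v_* = v_*$ and then iterating the monotone operator $\mathcal{T}^\pi$, whose repeated application $(\mathcal{T}^\pi)^k v_*$ converges to $v^\pi$ while staying below $v_*$. Hence $v_*$ upper-bounds the value achievable by any policy.

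Finally I would show this bound is attained by the greedy policy. Letting $\pi_*(s) \in \arg\max_{a}[R(s,a) + \gamma \sum_{s'} T(s'|s,a) v_*(s')]$ as in item (2), by construction $\mathcal{T}^{\pi_*} v_* = \mathcal{T} v_* = v_*$, so $v_*$ is a fixed point of $\mathcal{T}^{\pi_*}$; by uniqueness of that operator's fixed point we conclude $v^{\pi_*} = v_*$. Combining with the previous step, $\pi_*$ achieves the optimal value and therefore solves item (1). Conversely, any solution of item (1) must attain value $v_*$ and hence place its action mass only on maximizers of the Bellman backup, i.e.\ be greedy with respect to $v_*$, which closes the equivalence in both directions.

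The step I expect to be the main obstacle is the domination argument $v^\pi \leq v_*$, since it requires care in justifying the monotone convergence of $(\mathcal{T}^\pi)^k v_*$ to $v^\pi$ and, if one wishes full generality, in extending the bound from stationary policies to history-dependent or randomized ones. A clean route is to lean uniformly on the contraction-plus-monotonicity structure shared by $\mathcal{T}$ and every $\mathcal{T}^\pi$; throughout, one must be explicit that boundedness of the rewards is precisely what places all value functions in a complete normed space and thereby licenses each fixed-point argument above.
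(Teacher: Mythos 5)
Your proof is correct. Note that the paper itself does not prove this lemma at all: it is stated as a known textbook fact and simply cited to Puterman (2014), whose standard treatment is precisely the argument you give (Bellman optimality operator as a $\gamma$-contraction, Banach fixed point, monotonicity-based domination $v^\pi \leq v_*$, and attainment by the greedy policy). Your handling of the one subtle point is also adequate for the statement as written: item (1) quantifies only over stationary randomized policies $\pi_* : S \mapsto \Delta(A)$, and for such policies the evaluation operator $\mathcal{T}^\pi$ remains a monotone $\gamma$-contraction with $\mathcal{T}^\pi v_* \leq \mathcal{T}v_* = v_*$ (an average is at most a max), so the extension to history-dependent policies you flag as a possible obstacle is not actually required here.
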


\begin{assumption}
	We assume that agents are independent of each other and rewards are bounded.
\end{assumption}

We first characterize the best response of an agent
when all the other opponents are equipped with constant-memory strategies with the same non-negative (and finite) memory length.

\begin{theorem}
Given $\pi_{j} \in \Pi_{j}^K$ with $K \in \mathbb{Z}$ for all $j\neq i$ , i.e., all the other agents are adopting constant-memory strategies with the same finite memory length $K$, it is sufficient for agent $i$ to best respond with a $K$-memory strategy as well.	
\label{thm:kmem_br}
\end{theorem}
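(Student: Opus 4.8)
The plan is to reduce agent $i$'s best-response problem against the fixed $K$-memory opponents to a single-agent MDP over an \emph{augmented} state space, and then invoke Lemma~1. The guiding intuition is that when every opponent consults only the last $K$ history pairs, the pair (current state, bounded history window) is a sufficient statistic for everything agent $i$ needs to predict about the future, so agent $i$'s decision problem collapses into an ordinary MDP whose states already encode exactly $K$ steps of memory.

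First I would construct the augmented MDP. Define its state space as $\tilde{\mathcal{S}} \triangleq \mathcal{H}^{\leq K} \times \mathcal{S}$, so each augmented state $(h,s)$ records the current environmental state $s$ together with the window of the most recent history of length at most $K$. Since $\pi_{-i} \in \prod_{j \neq i}\Pi_j^K$ depends only on $(h,s)$, the opponents' joint action distribution $\pi_{-i}(h,s) \in \Delta(\mathcal{A}_{-i})$ is a fixed function of the augmented state. Consequently, once agent $i$ selects $a_i$, the next environmental state is sampled as $s' \sim \sum_{a_{-i}} \pi_{-i}(a_{-i}\mid h,s)\, T(\cdot \mid s, (a_i, a_{-i}))$, and the next window $h'$ is obtained deterministically by appending $(s,(a_i,a_{-i}))$ to $h$ and truncating to the last $K$ pairs. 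Taking the augmented reward $\tilde{R}((h,s), a_i) = \sum_{a_{-i}}\pi_{-i}(a_{-i}\mid h,s)\, R_i(s,(a_i,a_{-i}))$ yields a well-formed MDP $\langle \tilde{\mathcal{S}}, \mathcal{A}_i, \tilde T, \tilde R, \gamma\rangle$ with a \emph{finite} state space (as $|\mathcal{H}^{\leq K}\times\mathcal{S}|<\infty$) and bounded rewards by Assumption~1, so all discounted values converge and the objective in Equation~(\ref{eq:utility_def}) is well-defined.

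Next I would apply Lemma~1 to this MDP to obtain a stationary optimal policy $\sigma_*: \tilde{\mathcal{S}} \mapsto \Delta(\mathcal{A}_i)$ that is optimal from every augmented state. The key observation is that, by the very definition of $\tilde{\mathcal{S}}$, a stationary policy on $\tilde{\mathcal{S}}$ is exactly a $K$-memory strategy $\pi_i^* \in \Pi_i^K$ in the original game, as it maps (history window, current state) to an action distribution. Moreover, the $\arg\max$ extraction in Lemma~1 yields a \emph{deterministic} $\sigma_*$, so $\pi_i^*$ can be taken deterministic, matching the characterization claimed in the introduction. At the start of play the window is empty, so the augmented initial state is $(\emptyset, S)$, and the optimality of $\sigma_*$ from $(\emptyset, S)$ for every $S \in \mathcal{S}$ exactly matches the per-state requirement in the definition of best response.

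Finally, and this is where the main work lies, I would argue that $\pi_i^*$ is a best response over the full class $\Pi_i^\infty$, not merely over $\Pi_i^K$. The crucial point is that because the opponents read only the last $K$ history pairs, the augmented state $(h,s)$ is a sufficient statistic for the future evolution of rewards and transitions. Hence any infinite-memory strategy $\pi_i' \in \Pi_i^\infty$, executed against the fixed $\pi_{-i}$, induces in the augmented MDP a generally history-dependent, randomized policy, and the standard MDP fact that such policies cannot outperform the optimal Markov policy — the content underlying Lemma~1 — gives $u_i(S;\pi_i^*,\pi_{-i}) \geq u_i(S;\pi_i',\pi_{-i})$ for all $S$. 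The main obstacle is to make this sufficiency argument rigorous: one must verify that the conditional distribution over the next augmented state and reward, given the entire past, depends on the past only through the current augmented state, no matter how much extra history $\pi_i'$ remembers. Once this Markov property of the induced process is established, optimality of $\pi_i^*$ over $\Pi_i^\infty$ follows from the MDP optimality in Lemma~1.
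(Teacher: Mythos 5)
Your proposal is correct and takes essentially the same approach as the paper: your augmented MDP over $\mathcal{H}^{\leq K}\times\mathcal{S}$ is exactly the paper's $\mathcal{M}^K(\pi_{-i})$, and applying Lemma~1 to extract a stationary deterministic optimal policy mirrors the paper's proof sketch. The sufficiency step you flag as the remaining obstacle---that no history-dependent strategy in $\Pi_i^\infty$ can outperform the optimal Markov policy of this MDP---is precisely what the paper's appendix establishes by a Puterman-style (Chapter~5.5) induction showing that for any $\pi_i^\infty$ there is a marginalized $K$-memory strategy with identical occupancy measures $Pr^{\pi_i^K}(H^K,S,a_i)=Pr^{\pi_i^\infty}(H^K,S,a_i)$, so your outline closes in exactly the way the paper does.
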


\begin{proofsk}

As the full proof involves some fundamental (and probably tedious) derivations, we defer it to Appendix~\ref{app:proof:thm_kmem_br}.
The main issue is that, in general the decision process from the perspective of agent $i$ is an infinite MDP where states encompass infinitely many histories (of all possible lengths), and transitions/rewards are jointly controlled by the stochastic game itself as well as the other opponents.
Thus, the goal of this proof is to show that there indeed exists a finite MDP with the same effect.
Here, we present a proof sketch by construction, which is, in fact, a consequence of the full proof, and can be approached from a more direct perspective.

Given $\pi_{j} \in \Pi_{j}^K$ for all $j\neq i$, agent $i$ is then faced with an MDP with environmental states augmented by finite-length histories,
denoted as $\mathcal{M}^K(\pi_{-i}) = \langle \mathcal{H}^{\leq K}\times \mathcal{S}, \mathcal{A}_i, T^K_{\pi_{-i}}, R^K_{\pi_{-i}}, \gamma \rangle$,
	\begin{itemize}
		\item $\mathcal{A}_i$ and $\gamma$ are inherited from the previous setup,
		\item A state is now consisting of the past $K$ steps plus the current environmental state, resulting in a space of $\mathcal{H}^{\leq K}\times \mathcal{S}$,
		\item Transitions are now made among the augmented states, i.e., for every pair $(H', S')$, $(H, S) \in \mathcal{H}^{\leq K}\times \mathcal{S}$, and $a_i \in \mathcal{A}_i$,
		\[
		\begin{split}
		& T^K_{\pi_{-i}}(H', S'| H, S, a_i) \triangleq \\
		& 
		\begin{cases}
%			\sum_{a_{-i} \in \mathcal{A}_{-i}}
			T(S'|S,a)\pi_{-i}(a_{-i}|H, S),
			& \text{ if } H' = slide_K(H, S, (a_i, a_{-i})) \\
			0,
			& \text{ otherwise}\\
		\end{cases}	
		\end{split}
		\]
		where $slide_K(H, S, (a_i, a_{-i}))$ means to discard the earliest step if it is more then $K$ steps away, and append the latest state and action profile.
		\item The reward for each $(H, S) \in \mathcal{H}^{\leq K}\times \mathcal{S}$ and $a_i \in \mathcal{A}_i$ is
		\[
		R^K_{\pi_{-i}}(H, S, a_i) \triangleq \sum_{a_{-i} \in \mathcal{A}_{-i}}R_i(S,a)\pi_{-i}(a_{-i}|H, S)
		\]
	\end{itemize}
Among all the optimal solutions of $\mathcal{M}^K(\pi_{-i})$, there must exist a stationary (and deterministic) policy, i.e. $\pi_{*}: \mathcal{H}^{\leq K}\times \mathcal{S} \mapsto \mathcal{A}_i$, which corresponds to a $K$-memory (and pure) strategy best response of agent $i$ against $\pi_{-i} \in \Pi_{-i}^K$.

\textit{This proof is summarized in our code implementation.\footnote{Please refer to the notebook \texttt{code/kMemBR.ipynb} in the codebase.}}
\end{proofsk}

%%%%%%%%%%%%%%%%%%%%%

%%% Corollary hetero-memory, detailed proof left to the readers

%A natural next question is: what if the opponents are heterogeneous in their adoption of constant-memory strategies with different memory lengths?
%The following summarizes a related corollary.

One can immediately see the following corollary where the opponents may use constant-memory strategies but with potentially different memory lengths.
The justification is straightforward: all the opponents can be jointly viewed as a ``super-agent'', and consequently this ``super-agent'' is adopting a $(\max\{K_j\}_{j\neq i})$-memory strategy.
Therefore, the best response for the pivotal agent shall also be of $(\max\{K_j\}_{j\neq i})$-memory.

\begin{corollary}
Given $\pi_{j} \in \Pi_{j}^{K_j}$ with each $K_j \in \mathbb{Z}$ for all $j\neq i$, i.e., all the other agents are adopting constant-memory strategies but with varying memory lengths, it is sufficient for agent $i$ to best respond with a $(\max\{K_j\}_{j\neq i})$-memory strategy.	
\end{corollary}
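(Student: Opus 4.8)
The plan is to reduce the multiple-memory-length case directly to the single-memory-length case that Theorem~\ref{thm:kmem_br} already resolves. The key observation is that the collection of opponents $\{j : j \neq i\}$, each running a strategy $\pi_j \in \Pi_j^{K_j}$, can be bundled into a single composite opponent whose action set is the product $\prod_{j \neq i} \mathcal{A}_j = \mathcal{A}_{-i}$ and whose strategy is the product distribution $\pi_{-i} = \prod_{j\neq i}\pi_j$. I would first make this bundling formally precise: define $\bar{K} \triangleq \max\{K_j\}_{j\neq i}$, and argue that the composite strategy $\pi_{-i}$ can be realized as a $\bar K$-memory strategy, i.e.\ $\pi_{-i} \in \Pi_{-i}^{\bar K}$.

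The heart of the argument is this last claim, and it rests on the subset relation $\Pi_j^{K_j} \subseteq \Pi_j^{\bar K}$ established in the Preliminaries (``$\Pi_i^{K} \subseteq \Pi_i^{K'}$ for any $K \leq K'$''), together with the remark that an agent capable of a longer memory can deliberately ignore the surplus and emulate a shorter-memory strategy. Concretely, each $\pi_j$, although it only inspects the most recent $K_j$ steps, is well-defined as a function on histories of length up to $\bar K$ simply by discarding the oldest $\bar K - K_j$ steps before applying $\pi_j$. Hence every $\pi_j$ lifts to an element of $\Pi_j^{\bar K}$, and since each factor depends only on (a truncation of) the common history $(H,S)$, the product $\pi_{-i}(a_{-i}\mid H,S) = \prod_{j\neq i}\pi_j(a_j\mid H,S)$ is a legitimate $\bar K$-memory strategy for the composite opponent.

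Once $\pi_{-i} \in \Pi_{-i}^{\bar K}$ is in hand, I would simply invoke Theorem~\ref{thm:kmem_br} with memory length $K = \bar K$: since all opponents, viewed as a single super-agent, are adopting a constant-memory strategy of uniform length $\bar K$, the theorem guarantees that it is sufficient for agent $i$ to best respond with a $\bar K$-memory strategy, which is exactly the $(\max\{K_j\}_{j\neq i})$-memory claim of the corollary. The independence assumption (Assumption~1) is what licenses writing $\pi_{-i}$ as a product of the individual $\pi_j$, so I would note its use explicitly.

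The main obstacle I anticipate is purely a matter of bookkeeping rather than mathematical depth: one must be careful that the history truncation used to lift each $\pi_j$ into $\Pi_j^{\bar K}$ is consistent across all opponents and with the $slide_{\bar K}$ dynamics of the augmented MDP $\mathcal{M}^{\bar K}(\pi_{-i})$ from the proof of Theorem~\ref{thm:kmem_br}. In particular, I want to verify that a single shared window of length $\bar K$ genuinely suffices to evaluate every $\pi_j$ simultaneously (it does, since $K_j \leq \bar K$ for all $j$), so that no opponent requires information that the $\bar K$-length memory has already forgotten. Beyond this consistency check, the corollary follows immediately and requires no new machinery.
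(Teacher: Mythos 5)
Your proposal is correct and follows essentially the same route as the paper: the paper also bundles the opponents into a single ``super-agent'' whose joint strategy, by the inclusion $\Pi_j^{K_j} \subseteq \Pi_j^{\max\{K_j\}_{j\neq i}}$, is a $(\max\{K_j\}_{j\neq i})$-memory strategy, and then invokes Theorem~\ref{thm:kmem_br}. Your additional bookkeeping (explicit lifting by truncation and consistency with the $slide$ dynamics) only makes precise what the paper leaves implicit.
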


%\begin{proofsk}
%We omit the rigorous proof here.
%The only difference from the previous ones is that when agent $i$ is looking ahead for possible successor states, it should take into account the opponent that is equipped with the longest memory, since the history may appear distinguishable only to that particular agent while remaining indistinguishable to the rest. That is, the transition from agent $i$'s perspective is primarily controlled by the opponent with the longest memory.
%
%Therefore, the states of the induced MDP will be over the space of $\mathcal{H}^{\leq \max\{K_j\}_{j\neq i}}\times \mathcal{S}$, and then there will be an optimal solution as a stationary (and deterministic) mapping $\mathcal{H}^{\leq \max\{K_j\}_{j\neq i}}\times \mathcal{S} \mapsto \mathcal{A}_i$, which corresponds to the $(\max\{K_j\}_{j\neq i})$-memory (and pure) strategy best response of agent $i$.
%\end{proofsk}

%%%%%%%%%%%%%%%%%%

%%% Establish the existence of NE

As best responses are well established, we will examine whether an equilibrium exists when everyone best responds to one another.

\begin{definition}[Nash Equilibrium]
	A strategy profile $\{\pi_i^*\}_{i \in \mathcal{N}}$ is a Nash equilibrium (NE) if
	\[
	\forall i \in \mathcal{N}, \pi_i^* \in BR(\pi_{-i}^*)
	\]
\end{definition}

We first need the following lemma. It is important to note that the following lemma only asserts the existence of a fixed point, but does not guarantee the presence of a contraction mapping.

\begin{lemma}[Brouwer's Fixed-Point Theorem~\cite{brouwer1911abbildung}]
	Let $\Delta = \prod_{l=1}^L \Delta_{m_l}$, where each $\Delta_{m_l}$ is a simplex in $\mathbb{R}^{m_l+1}$. If $f: \Delta \mapsto \Delta$ is a continuous mapping, then $f$ has a fixed point.
\label{lm:fixpoint}
\end{lemma}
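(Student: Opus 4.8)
The plan is to establish the fixed-point property for the specific domain $\Delta = \prod_{l=1}^L \Delta_{m_l}$ by first reducing to a single standard simplex and then running a combinatorial argument. The key reduction is that the fixed-point property is a topological invariant: if $\phi: \Delta \to \Delta'$ is a homeomorphism and every continuous self-map of $\Delta'$ has a fixed point, then so does every continuous self-map of $\Delta$ (conjugate the map by $\phi$ and transport the fixed point back). Since a finite product of simplices is a compact convex polytope, it has nonempty interior within its affine hull and is therefore homeomorphic to a standard simplex $\Delta_d$ of dimension $d = \sum_{l=1}^L m_l$. Hence it suffices to prove the theorem for $\Delta_d$ alone.

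First I would describe $\Delta_d$ in barycentric coordinates as $\{x \in \mathbb{R}^{d+1} : x_k \geq 0,\ \sum_k x_k = 1\}$ and take a sequence of simplicial triangulations $\mathcal{T}_n$ whose mesh tends to $0$. For each vertex $v$ of $\mathcal{T}_n$ I would assign a label $\ell(v) \in \{0,\dots,d\}$ chosen from the candidate set $\{k : v_k > 0 \text{ and } f(v)_k \leq v_k\}$. This set is nonempty: were it empty, one would have $f(v)_k > v_k$ at every coordinate where $v_k > 0$ and $f(v)_k \geq 0 = v_k$ elsewhere, forcing $\sum_k f(v)_k > \sum_k v_k = 1$ and contradicting $\sum_k f(v)_k = 1$. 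Moreover, requiring $v_k > 0$ at the chosen label makes this a proper Sperner coloring, since a vertex lying on the face spanned by a coordinate subset $I$ has $v_k = 0$ off $I$ and thus receives a label in $I$.

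Next I would invoke Sperner's Lemma: such a coloring of $\mathcal{T}_n$ contains at least one fully-labeled ("rainbow") small simplex $\sigma_n$ carrying every label $0,\dots,d$. By compactness of $\Delta_d$, the sequence $\{\sigma_n\}$ has a subsequence whose barycenters converge to a limit $x^*$; because $\mathrm{mesh}(\mathcal{T}_n)\to 0$, every vertex of $\sigma_n$ along this subsequence also converges to $x^*$. For each coordinate $k$, some vertex of $\sigma_n$ carries label $k$, so $f(\cdot)_k \leq (\cdot)_k$ holds there; passing to the limit with continuity of $f$ yields $f(x^*)_k \leq x^*_k$ for every $k$. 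Summing and using that both $f(x^*)$ and $x^*$ lie in $\Delta_d$ forces coordinatewise equality, hence $f(x^*) = x^*$.

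I expect the main obstacle to be Sperner's Lemma itself, whose proof rests on a parity/counting argument (inducting on dimension, or a door-doubling count of fully-labeled facets) and carries the genuine combinatorial content of the theorem; the analytic limiting step is then routine, though one must verify carefully that the labeling is both well-defined (a nonempty candidate set at every vertex) and boundary-admissible, which is precisely where the coordinate-sum constraint $\sum_k x_k = 1$ of the simplex is used.
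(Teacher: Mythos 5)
Your proposal is correct, but note that the paper itself does not prove this lemma at all: it is imported verbatim as Brouwer's classical fixed-point theorem with a citation to Brouwer's 1911 paper, and is then used as a black box in the proof of the Nash-equilibrium existence theorem (Theorem~\ref{thm:ne_beh}). So the comparison here is between your self-contained argument and a bare citation. Your argument is the standard Sperner-lemma proof, and its two nonroutine points are handled properly: (i) the reduction from the product $\prod_{l=1}^L \Delta_{m_l}$ to a single simplex $\Delta_d$, $d=\sum_l m_l$, is done via a homeomorphism together with conjugation, which is the right way to do it --- a naive ``product of spaces with the fixed-point property has the fixed-point property'' claim would be unjustified in general, whereas topological invariance of the fixed-point property plus the fact that a compact convex body of dimension $d$ is homeomorphic to $\Delta_d$ is airtight; (ii) the Sperner labeling $\ell(v)\in\{k: v_k>0,\ f(v)_k\le v_k\}$ is shown nonempty using $\sum_k f(v)_k = \sum_k v_k = 1$, and the requirement $v_k>0$ correctly enforces the boundary (face-admissibility) condition, after which the mesh-to-zero limit and the summing argument force coordinatewise equality at the limit point. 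The one remaining dependency is Sperner's lemma itself, which you explicitly flag and which carries the combinatorial content; that is a standard, well-documented parity argument, so deferring it is acceptable. In short: the paper's route buys brevity and is entirely appropriate for a known textbook result, while your route supplies the actual mathematical content and would make the lemma self-contained at the cost of also having to include a proof of Sperner's lemma.
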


\begin{theorem}
\label{thm:ne_beh}
	There exists an NE when the agents are all adopting $K$-memory strategies, for any arbitrary non-negative finite $K$.
\end{theorem}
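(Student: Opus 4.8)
The plan is to invoke Brouwer's fixed-point theorem (Lemma~\ref{lm:fixpoint}) on the space of $K$-memory behavioral profiles, using a Nash-style continuous ``improvement'' map whose fixed points coincide exactly with the equilibria. The crucial enabling fact is Theorem~\ref{thm:kmem_br}: once all opponents are restricted to $\Pi_{-i}^K$, agent $i$ can best respond within $\Pi_i^K$ by solving the finite augmented-state MDP $\mathcal{M}^K(\pi_{-i})$. This confines the entire search to the compact convex set $\Pi^K = \prod_i \Pi_i^K$. Since $K$, $\mathcal{S}$, and each $\mathcal{A}_i$ are finite, the augmented-state space $\mathcal{H}^{\leq K}\times \mathcal{S}$ is finite, and a $K$-memory behavioral strategy $\pi_i$ assigns an independent distribution in $\Delta(\mathcal{A}_i)$ to each augmented state. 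Hence $\Pi^K$ is precisely a finite product of simplices $\prod_{i}\prod_{x}\Delta(\mathcal{A}_i)$, matching the hypothesis of Lemma~\ref{lm:fixpoint}.

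Next I would define, for a profile $\pi=(\pi_i,\pi_{-i})$, the policy-evaluation value $v_i^{\pi}$ and the action values $Q_i^{\pi}$ of agent $i$ inside $\mathcal{M}^K(\pi_{-i})$, and at each augmented state $x$ and action $a\in\mathcal{A}_i$ the nonnegative gain $g_i^{x,a}(\pi)=\max\{0,\,Q_i^{\pi}(x,a)-v_i^{\pi}(x)\}$. The improvement map renormalizes toward improving actions,
\[
\tilde\pi_i(a\mid x)=\frac{\pi_i(a\mid x)+g_i^{x,a}(\pi)}{1+\sum_{a'\in\mathcal{A}_i} g_i^{x,a'}(\pi)},
\]
which sends $\Pi^K$ into itself. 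Brouwer then yields a fixed point $\pi^*$. The standard Nash argument finishes: at each $x$, any action in the support of $\pi_i^*$ whose $Q_i^{\pi^*}$ value is not above the average $v_i^{\pi^*}(x)$ has zero gain, and the fixed-point equation at that action forces $\sum_{a'} g_i^{x,a'}(\pi^*)=0$, so every gain vanishes; combined with policy evaluation this says $v_i^{\pi^*}$ satisfies the Bellman optimality equation and $\pi_i^*$ is optimal in $\mathcal{M}^K(\pi_{-i}^*)$. By the Bellman-optimality characterization (our first lemma) and Theorem~\ref{thm:kmem_br}, each $\pi_i^*$ is therefore a best response to $\pi_{-i}^*$, i.e.\ $\pi^*$ is an NE formed by (possibly randomized) $K$-memory strategies.

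The main obstacle is establishing continuity of the map, which reduces to continuity of $v_i^{\pi}$ (hence of $Q_i^{\pi}$ and the gains) in $\pi$. Writing $v_i^{\pi}=(I-\gamma P^{\pi})^{-1} r^{\pi}$, the induced chain matrix $P^{\pi}$ and reward vector $r^{\pi}$ are polynomial in the entries of $\pi$ through the construction of $T^K_{\pi_{-i}}$ and $R^K_{\pi_{-i}}$; because $\gamma<1$ the spectral radius of $\gamma P^{\pi}$ stays below $1$, so $I-\gamma P^{\pi}$ is invertible throughout $\Pi^K$ and its inverse varies continuously, making $v_i^{\pi}$ continuous (indeed rational). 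The outer $\max$ and the strictly positive denominator $1+\sum_{a'} g_i^{x,a'}$ preserve continuity, so the improvement map is continuous on $\Pi^K$ and Lemma~\ref{lm:fixpoint} applies. A secondary point worth care is that the gains use the same $v_i^{\pi}$ that itself depends on $\pi_i$; I would verify that ``all gains zero'' yields genuine MDP-optimality rather than mere stationarity, which is exactly where the policy-evaluation identity $v_i^{\pi}(x)=\sum_a \pi_i(a\mid x)\,Q_i^{\pi}(x,a)$ is invoked.
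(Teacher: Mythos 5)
Your proposal is correct and follows essentially the same route as the paper's own proof: the same advantage function $\max\{0, Q_i - v_i\}$, the same Nash-style renormalizing improvement map on the product of simplices, Brouwer's fixed-point theorem, and the same support-based argument that a fixed point forces all advantages to vanish and hence satisfies the Bellman optimality equations. The only difference is cosmetic: you justify continuity via $(I-\gamma P^{\pi})^{-1}$ explicitly, whereas the paper asserts continuity and relegates the well-posedness of policy evaluation to a contraction-mapping argument in its appendix.
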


\begin{proof}
\textit{As previously, we also summarize this proof into 
a ready-to-run code implementation.\footnote{Please refer to \texttt{code/kMemNE\_full.ipynb} in the codebase.}}

Given a non-negative finite $K$, to establish a Nash equilibrium we need to prove there is a solution to the system of equations defined by
\[
	\forall i \in \mathcal{N}, \pi_i \in \Pi_i^K \land \pi_i \in BR(\pi_{-i})
\]
More specifically, the following equations should be satisfied simultaneously for any $(H,S) \in \mathcal{H}^{\leq K}\times \mathcal{S}$, and for every $i \in \mathcal{N}$,
\begin{equation}
\begin{split}
v_i(H,S) = 
& \max_{a_i\in\mathcal{A}_i} \Big[ R^K_{\pi_{-i}}(H, S, a_i)\\
& + \gamma \sum_{H',S'} T^K_{\pi_{-i}}(H', S'| H, S, a_i) \cdot v_i(H',S') \Big] \\
\pi_i(H,S) \in
& \arg\max_{a_i\in\mathcal{A}_i} \Big[ R^K_{\pi_{-i}}(H, S, a_i) \\
& + \gamma \sum_{H',S'} T^K_{\pi_{-i}}(H', S'| H, S, a_i) \cdot v_i(H',S') \Big] \\
\end{split}
\label{eq:ne_boe}
\end{equation}
We will first construct a mapping to iteratively refine the strategies, and then show that there is a bijection between the fixed points of this mapping and the solutions to the above system of equations.

From each agent $i$'s perspective, with the opponent's strategies given as $\pi_{-i}$, it shall evaluate the value of its own strategy by the Bellman expectation equation, i.e.,
\begin{equation}
\begin{split}
%v_i|_{\pi_i}^{\pi_{-i}}(H,S) = 
%& \sum_{a_i\in\mathcal{A}_i} \pi_i(a_i|H, S) 
%\Big[ R^K_{\pi_{-i}}(H, S, a_i) \\
%& + \gamma \sum_{H',S'} T^K_{\pi_{-i}}(H', S'| H, S, a_i) v_i|_{\pi_i}^{\pi_{-i}}(H',S') \Big]	
v_i|_{\pi_i}^{\pi_{-i}}(H,S) & = \sum_{a_i\in\mathcal{A}_i} \pi_i(a_i|H, S) \cdot Q_i|_{\pi_i}^{\pi_{-i}}(H,S, a_i) \\
Q_i|_{\pi_i}^{\pi_{-i}}(H,S, a_i) & = 
 R^K_{\pi_{-i}}(H, S, a_i)  \\
& + \gamma \sum_{H',S'} T^K_{\pi_{-i}}(H', S'| H, S, a_i) \cdot v_i|_{\pi_i}^{\pi_{-i}}(H',S')	
\end{split}
\label{eq:kmem_bee}
\end{equation}
where $v_i|_{\pi_i}^{\pi_{-i}}$ is the value function evaluated using $\pi_i$ against $\pi_{-i}$.
One should first note that there is a unique solution satisfying Equation~(\ref{eq:kmem_bee}) simultaneously for all $i \in \mathcal{N}$. 
Please refer to Appendix~\ref{app:proof:thm_ne} for this omitted proof.
%We define the refinement
%\[
%\phi_{i,a_i}(\pi_i, H, S) = \max\{0, v_i(H,S)|_{\pi_i\upharpoonright_{a_i}} - v_i(H,S)|_{\pi_i}\}
%\]
%where $\pi_i\upharpoonright_{a_i}(a_i|H, S) = 1$.
We then define the advantage as
\begin{equation}
\phi_{i,a_i}(\pi_i, H, S) = \max\{0, Q_i|_{\pi_i}^{\pi_{-i}}(H,S, a_i) - v_i|_{\pi_i}^{\pi_{-i}}(H,S)\}
\label{eq:refine}
\end{equation}
%where
%\[
%\begin{split}
%Q_i|_{\pi_i}^{\pi_{-i}}(H,S, a_i) & = 
% R^K_{\pi_{-i}}(H, S, a_i)  \\
%& + \gamma \sum_{H',S'} T^K_{\pi_{-i}}(H', S'| H, S, a_i) v_i|_{\pi_i}^{\pi_{-i}}(H',S')	
%\end{split}
%\]
A refinement mapping $\Gamma: \{\Pi_i^K\}_{i\in \mathcal{N}} \mapsto \{\Pi_i^K\}_{i\in \mathcal{N}}$ is constructed
for each $i \in \mathcal{N}$,
\begin{equation}
\pi_i(a_i|H, S) \mapsto
\frac{\pi_i(a_i|H, S) + \phi_{i,a_i}(\pi_i, H, S)}{\sum_{b_i\in \mathcal{A}_i}\pi_i(b_i|H, S) + \phi_{i,b_i}(\pi_i, H, S)}
\label{eq:ref_mapping}
\end{equation}
By Lemma~\ref{lm:fixpoint}, $\Gamma$ has at least one fix point, as each state-action mapping is a simplex $\Delta_{|\mathcal{A}_i| - 1}$ and $\Gamma$ is continuous.

If $\{\pi_i\}_{i \in \mathcal{N}}$ is already an NE, then all $\phi$'s will be zeros, making it a fixed point of $\Gamma$.
%\textcolor{red}{as $\pi_i = \max Q$ but $\pi_i$ might not be a deterministic one}

Conversely, we can show that any arbitrary fixed point of $\Gamma$, say $\{\hat \pi_i\}_{i \in \mathcal{N}}$, is also an NE.
%Note that for any $a_i\in \mathcal{A}_i$,
%\[
%\begin{split}
%v_i(H,S)|_{\pi_i\upharpoonright_{a_i}}
%& =  R^K_{\pi_{-i}}(H, S, a_i) + \gamma \sum_{H',S'} T^K_{\pi_{-i}}(H', S'| H, S, a_i) v_i(H',S')|_{\pi_i}\\
%& = Q_i(H,S, a_i)|_{\pi_i} \text{(actually the Q-function, but we do not need it here)}
%\end{split}
%\]
As $v$-functions are averaging over $Q$-functions, there must exist an $a_i'\in \mathcal{A}_i$, such that (fixing an $(H,S)$) $$\hat\pi_i(a_i'|H,S) > 0,\text{ and }Q_i|_{\hat\pi_i}^{\hat\pi_{-i}}(H,S, a_i') - v_i|_{\hat\pi_i}^{\hat\pi_{-i}}(H,S)\leq 0$$
By Equation~(\ref{eq:refine}), we have $\phi_{i,a_i'}(\hat\pi_i, H, S) = 0$. Given that $\{\hat \pi_i\}_{i \in \mathcal{N}}$ is already a fixed point, by definition $\{\hat \pi_i\}_{i \in \mathcal{N}} = \Gamma(\{\hat \pi_i\}_{i \in \mathcal{N}})$, and therefore, the normalization term (the denominator) must be exactly one. Due to the fact that $\phi$'s are always non-negative, then we can conclude that for all $b_i\in \mathcal{A}_i$, it must be the case $\phi_{i,b_i}(\hat \pi_i, H, S) =~0$.
Hence, $v_i|_{\hat\pi_i}^{\hat\pi_{-i}}(H,S) \geq \max_{a_i \in \mathcal{A}_i}{Q_i|_{\hat\pi_i}^{\hat\pi_{-i}}(H,S, a_i')}$. Consequently, the equality shall hold. One can then see it it exactly the case when the aforementioned Equation~(\ref{eq:ne_boe}) is satisfied. 
%In a vectorized notation, we write $V\in \mathbb{R}^{n\times|\mathcal{H}^K\times \mathcal{S}|}$, where $V_{i, (H,S)}$ denotes the entry for $v_i(H,S)$.
%We also rewrite the above system of equations as $V = \Gamma\cdot V$, where $\Gamma: \mathbb{R}^{n\times|\mathcal{H}^K\times \mathcal{S}|} \mapsto \mathbb{R}^{n\times|\mathcal{H}^K\times \mathcal{S}|}$ denotes the grand operator that summarizes all the operations of the above bellman equations.
%
%Once we can prove $\Gamma$ is a mapping with a fixed point, we can then conclude the existence of such a Nash equilibrium. 
\end{proof}

%%%%%%%%%%%%%%%%%%%%%%%%%%

%%%%% Implications by NE

The above existence result indicates the following.
Consider two agents playing a stochastic game, where agent 1 employs a two-memory strategy and agent 2 uses a three-memory strategy.
If agent~1 asserts that it will adhere to its two-memory strategy, agent~2 may identify another two-memory strategy as a best response, potentially yielding the same payoff but allowing for memory saving.
Conversely, if agent 2 can convince agent 1 that it will maintain its three-memory strategy, agent 1 may find it advantageous to switch to a three-memory strategy as a better response.

Note that the above theorem is not a direct consequence of Nash's existence theorem, as we only discuss randomizing actions within a single strategy, rather than randomizing across multiple strategies, which we will refer to as \textit{mixed strategies} in the next section.

Another benefit of constant-memory strategies is that any $K$-memory strategy can be implemented using a $K'$-memory strategy, provided that $K' \geq K$, by simply utilizing the most recent $K$ historical records. Thus, we arrive at the following corollary.

\begin{corollary}
	Any payoff profile that is reached by an NE under a $K$-memory strategy profile can also be realized by another NE under a $K'$-memory strategy profile, as long as $K' \geq K$.
\end{corollary}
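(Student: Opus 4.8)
The plan is to exhibit an explicit Nash equilibrium whose members all lie in the respective $\Pi_i^{K'}$ by \emph{embedding} the given $K$-memory equilibrium, and then to argue that the embedding preserves every value comparison relevant to the Nash condition. Let $\{\pi_i^*\}_{i\in\mathcal{N}}$ with each $\pi_i^*\in\Pi_i^K$ be the given NE realizing the payoff profile in question. For each $i$ I would define $\tilde\pi_i\in\Pi_i^{K'}$ by prescribing, on any history $H'\in\mathcal{H}^{\leq K'}$ together with a current state $S$, the action distribution $\tilde\pi_i(\cdot\mid H',S)\triangleq\pi_i^*(\cdot\mid\mathrm{last}_K(H'),S)$, where $\mathrm{last}_K(H')$ denotes the most recent $\min(|H'|,K)$ steps of $H'$. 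This is well-defined since $\mathrm{last}_K(H')\in\mathcal{H}^{\leq K}$, so $\tilde\pi_i$ simply applies $\pi_i^*$ to the last $K$ records of whatever it is handed.

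The key observation, which I would establish first, is that, viewed as maps from \emph{complete} histories to actions, the profile $\{\tilde\pi_i\}$ and $\{\pi_i^*\}$ coincide. After $t$ stages the original $K$-memory strategy is fed the most recent $\min(t,K)$ steps, whereas its embedding is fed the most recent $\min(t,K')$ steps and then internally retains only the last $K$ of those, i.e. the most recent $\min(\min(t,K'),K)=\min(t,K)$ steps, where the last equality uses $K\leq K'$. Hence the embedding's effective input equals the original's, so at every stage and along every realized history the two profiles issue identical action distributions.

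From this I would draw the two consequences needed. First (\emph{payoff preservation}), since $\{\tilde\pi_i\}$ induces exactly the same per-step action distributions as $\{\pi_i^*\}$ along any history, it induces the same trajectory distribution in the stochastic game, whence $u_i(S;\tilde\pi_i,\tilde\pi_{-i})=u_i(S;\pi_i^*,\pi_{-i}^*)$ for every $S$ and $i$; these discounted sums converge under bounded rewards. Second (\emph{value invariance of deviations}), because $\tilde\pi_{-i}$ and $\pi_{-i}^*$ are the same map from complete histories to actions, any deviation $\pi_i'\in\Pi_i^\infty$ faces identical transitions and rewards against either, giving $u_i(S;\pi_i',\tilde\pi_{-i})=u_i(S;\pi_i',\pi_{-i}^*)$ for all $\pi_i'\in\Pi_i^\infty$.

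Combining these, the Nash condition $\tilde\pi_i\in BR(\tilde\pi_{-i})$ — namely $u_i(S;\tilde\pi_i,\tilde\pi_{-i})\geq u_i(S;\pi_i',\tilde\pi_{-i})$ for all $S$ and all $\pi_i'\in\Pi_i^\infty$ — reduces, term by term, to $u_i(S;\pi_i^*,\pi_{-i}^*)\geq u_i(S;\pi_i',\pi_{-i}^*)$, which is exactly the hypothesis that $\pi_i^*\in BR(\pi_{-i}^*)$. Thus $\{\tilde\pi_i\}$ is an NE with each $\tilde\pi_i\in\Pi_i^{K'}$, and by payoff preservation it realizes the same payoff profile. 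I expect the only delicate point to be the bookkeeping in the key observation — verifying that the truncation-of-a-truncation collapses to a single truncation uniformly across the initial transient (history shorter than $K$) as well as in steady state — together with noting that the best-response quantifier in the NE definition ranges over all of $\Pi_i^\infty$ rather than only $\Pi_i^{K'}$; it is precisely the latter that makes the value-invariance-of-deviations step do real work, since the deviator is never confined to the embedded memory length.
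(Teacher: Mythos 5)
Your proposal is correct and follows exactly the route the paper intends: the paper justifies this corollary in a single sentence (any $K$-memory strategy can be implemented as a $K'$-memory strategy by using only the most recent $K$ records), and your argument is precisely that embedding, fleshed out with the two verifications (payoff preservation and invariance of deviation values) needed to transfer the Nash condition, whose deviation quantifier over $\Pi_i^\infty$ is unchanged by the embedding. Your write-up is more rigorous than the paper's remark but adds no genuinely different idea.
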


%%%%%%%%%%%%%%%%%%%%%%%%

\section{Best Responses to Mixed Strategies: A Tournament Perspective }
\label{sec:mixed_str}

%\textcolor{teal}{type-based methods?}

%%% Brief summary

We have established that the best response to a single (possibly randomized) constant-memory strategy results in another constant-memory strategy. The next natural question is: what is the best response to a set of constant-memory strategies played according to a specified distribution? A further related question is: can a mixed strategy be converted into a singleton constant-memory strategy? If this is feasible, then the best response must also be a constant-memory strategy.

In the following two subsections, we will show:
\begin{enumerate}
	\item In repeated games, if an agent encounters an opponent using a mixed zero-memory strategy, it will yield the same expected utility for this agent to play against an opponent with a transformed singleton zero-memory strategy.
	\item In general, when the game involves multiple states or the opponent employs a non-zero-memory strategy, then the best response will be hard to compute (possibly even not computable) and time-dependent, which may not be encoded as a finite-memory strategy. Consequently, it implies that the opponent's mixed strategy cannot be equivalently transformed into a singleton constant-memory strategy.
\end{enumerate}

%%%%%%%%%%%%%%%%%%%%%%

\subsection{Mixed Strategies vs Behavioral Strategies}

%%% More definitions, strategy-wise mixing vs state-wise mixing

We first emphasize the notion of \textit{match}.
When we say an agent $i$ adopts a $K$-memory strategy, it means that agent $i$ will select one strategy $\pi_i \in \Pi_i^K$ just before a match begins.
Once the agent has ``confirmed'' its strategy, it will \textbf{not} deviate to any other strategies during the match until the termination.
Note that some strategies, especially those in $\Pi_i^\infty$, may be semantically interpreted as ``learning'' or ``evolving'' strategies, as they gradually modify the decisions based on accumulated observations; however, each of them remains a singleton strategy within the strategy space $\Pi_i^\infty$.
From the perspective of a single agent, we may also use the term \textit{episode} interchangeably with \textit{match}, as is commonly done in the context of MDPs.
The \textit{overall utility} will be calculated as the expectation over all possible matches.

Now we are ready to explain the difference between a \textit{behavioral} strategy and a \textit{mixed} strategy.
%A \textit{behavioral} strategy is a randomized policy (for agent $i$ and of $K$-memory) $\pi_i: \mathcal{H}^{\leq K} \times \mathcal{S} \mapsto \Delta(\mathcal{A}_i)$, given that $K\in \mathbb{N}$. By definition, a pure strategy is also a behavioral strategy.
Recall that a $K$-memory strategy of agent $i$ is defined as $\pi_i: \mathcal{H}^{\leq K} \times \mathcal{S} \mapsto \Delta(\mathcal{A}_i)$; it is also referred to as a \textit{behavioral} strategy as it can randomize over actions. By definition, a pure strategy that performs deterministic actions is also considered a behavioral strategy.
A \textit{mixed} strategy (for agent $i$ and of $K$-memory) first specifies its support set $\Pi^{K+}_i \subseteq \Pi_i^K$, where each behavioral strategy $\pi_i^\iota\in \Pi^{K+}_i$ will be selected with a positive probability $p_\iota$, before each match begins.
Thus, we use a tuple $(\Pi^{K+}_i, \vec p)$ to denote a mixed strategy for agent $i$.
Intuitively, when an agent is playing against a mixed strategy $(\Pi^{K+}_i, \vec p)$, it simply means this particular agent will encounter an opponent using the behavioral strategy $\pi_i^\iota \in \Pi^{K+}_i$ for a fraction $p_\iota$ of the whole time.

One may be particularly interested in a specific type of strategies, namely the behavioral strategy obtained by state-wise randomization over the actions according to the probability distribution provided by the mixed strategy.

\begin{definition}[Mixed-Strategy-Induced Behavioral Strategy]
	Given a mixed strategy $(\Pi^{K+}_i, \vec p)$, we define $\omega_{(\Pi^{K+}_i, \vec p)}$ as the behavior strategy induced by this mixed strategy.
	Mathematically, for each $(H,S) \in \mathcal{H}^{\leq K}\times \mathcal{S} $, 
	$
	\omega_{(\Pi^{K+}_i, \vec p)}(a_i|H, S) \triangleq \sum_\iota p_\iota \cdot \pi_i^\iota(a_i|H, S)
	$.
\label{def:mix2beh}
\end{definition}

The underlying intuition is that, instead of randomly selecting one of the support strategies at the beginning and sticking to it, we also allow an agent to switch to another strategy within the same probability distribution at each timestep during play, resulting in a single behavioral strategy that randomizes over each support strategy at every state.
One can see that if the original strategy is a mixed one over a set of $K$-memory support strategies,
then its induced behavioral strategy, according to Definition~\ref{def:mix2beh}, will still be a 
$K$-memory strategy, and its best response will also be a 
$K$-memory strategy, as stated in Theorem~\ref{thm:kmem_br}.

%%%%%%%%%%%%%%%%%%%

%%% Mixing in repeated games

We will first demonstrate that in a special case where a stochastic game is reduced to a repeated game and the agents use stationary  strategies, a mixed strategy has the same effect as its induced behavioral strategy.
However, in general, if a game involves transitions across multiple states or the opponents adopt non-zero-memory strategies, such equivalence does not necessarily hold.

\begin{theorem}[Utility equivalence for repeated games]
\label{thm:equiv_repeated}
If the stochastic game is merely a repeated game, i.e. $\mathcal{S}$ is a singleton,
then an agent $i$'s overall utility when it plays against a mixed strategy $(\Pi^{0+}_{-i}, \vec p)$ will be the same as that when it plays against the induced behavioral strategy $\omega_{(\Pi^{0+}_{-i}, \vec p)}$.
\end{theorem}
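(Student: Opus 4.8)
The plan is to reduce both utilities to closed-form geometric sums and then exploit the fact that, in a single-state game with memoryless players, the expected per-step reward is affine in the opponent's current action distribution. Since $\mathcal{S}=\{s\}$ is a singleton, every $0$-memory strategy of the opponents collapses to a fixed distribution over $\mathcal{A}_{-i}$: I would write $q_\iota(\cdot)\triangleq\pi_{-i}^\iota(\cdot\mid s)$ for each support strategy, so that by Definition~\ref{def:mix2beh} the induced behavioral strategy corresponds to $q_\omega=\sum_\iota p_\iota\, q_\iota$. Consistently with the stationary setting stated above (``the agents use stationary strategies''), I take agent $i$'s own strategy to be a $0$-memory $\pi_i\in\Pi_i^0$, i.e.\ a fixed distribution $\pi_i(\cdot\mid s)\in\Delta(\mathcal{A}_i)$. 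By the definition of a \emph{match}, playing against the mixed strategy means sampling a support index $\iota\sim\vec p$ once and then facing $q_\iota$ for the entire episode, so the overall utility is the average $u_i^{\mathrm{mix}}=\sum_\iota p_\iota\, u_i(s;\pi_i,\pi_{-i}^\iota)$, whereas playing against $\omega$ is a single episode giving $u_i^{\mathrm{beh}}=u_i(s;\pi_i,\omega_{(\Pi^{0+}_{-i},\vec p)})$.

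First I would compute a single-episode utility against an arbitrary fixed opponent distribution $q$. Because the state never changes and both players are memoryless, at every step $t$ the action $a_{i,t}$ is drawn from $\pi_i(\cdot\mid s)$ and $a_{-i,t}$ from $q$, independently of each other and of $t$. Hence the expected immediate reward is the time-independent quantity $\bar R(q)\triangleq\sum_{a_i,a_{-i}}\pi_i(a_i\mid s)\,q(a_{-i})\,R_i(s,a_i,a_{-i})$, which is manifestly linear in $q$. Using the boundedness of rewards (Assumption~1) together with $\gamma<1$, I would invoke dominated convergence to interchange expectation with the discounted sum, yielding $u_i(s;\pi_i,q)=\bar R(q)/(1-\gamma)$.

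Next I would assemble the two sides. Substituting gives $u_i^{\mathrm{mix}}=\frac{1}{1-\gamma}\sum_\iota p_\iota\,\bar R(q_\iota)$, and the linearity of $\bar R$ in its argument lets me pull the average inside: $\sum_\iota p_\iota\,\bar R(q_\iota)=\bar R\!\big(\sum_\iota p_\iota q_\iota\big)=\bar R(q_\omega)$. Since $u_i^{\mathrm{beh}}=\bar R(q_\omega)/(1-\gamma)$ by the same closed form, the two utilities coincide, which is the claim.

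The computation itself is light; the real care lies in isolating exactly which hypotheses make it go through, since this is precisely where the companion negative results will bite. The pivotal fact is that the per-step reward depends on the opponent only through its \emph{marginal} action distribution, so the very different correlation structures of the opponent's action sequence---perfectly correlated across time under the mixed strategy, i.i.d.\ under $\omega$---are washed out. I expect the main obstacle to be stating this cleanly and flagging the two places where memorylessness is indispensable: if agent $i$ could condition on history it could infer the sampled support strategy and exploit that correlation, and if $\mathcal{S}$ had more than one state the opponent's distribution would shape the state/history occupancy and make the episode utility a nonlinear function of $q$. Both break the affinity argument, so I would make the $0$-memory assumption on agent $i$ and the single-state assumption explicit rather than incidental.
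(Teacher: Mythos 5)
Your proof is correct and follows essentially the same route as the paper's: both reduce each episode to the closed form $\bar R(q)/(1-\gamma)$ for a fixed opponent action distribution $q$ (the paper via solving the one-state Bellman expectation equation, you via a direct geometric sum) and then conclude by linearity of the per-step expected reward in $q$, which lets the average over support strategies pass inside. Your explicit flagging that agent $i$ must be $0$-memory is a welcome sharpening of a hypothesis the paper's proof uses only implicitly (it writes $\pi_i(a_i)$ with no history argument while claiming to handle ``any arbitrary strategy''), but it does not change the substance of the argument.
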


\begin{proof}
Assume agent $i$ is performing any arbitrary strategy $\pi_i$.
% actually lacks a discussion of \pi_i's class
To compute her expected return against the mixed strategy $(\Pi^{0+}_{-i}, \vec p)$, one needs to establish the Bellman expectation equation for each MDP $\mathcal{M}^0(\pi^\iota_{-i})$ induced by the opponent strategy $\pi_{-i}^\iota \in \Pi^{0+}_{-i}$,
\[
\begin{split}
V_\iota
& = \sum_{a_i \in \mathcal{A}_i} \pi_i(a_i)\cdot [R_{\pi_{-i}^\iota}(a_i) + \gamma V_\iota] \\
& = \sum_{a_i \in \mathcal{A}_i} \pi_i(a_i)\cdot [\sum_{a_{-i}\in \mathcal{A}_{-i}}\pi_{-i}^\iota(a_{-i})\cdot R_i(a_i, a_{-i}) + \gamma V_\iota]
\end{split}
\]
where $V_\iota$, as a shorthand,
%for $V_{\mathcal{M}(\pi^\iota_{-i})}$ as $\mathcal{M}(\pi^\iota_{-i})$ is no longer a multi-state MDP,
denotes the expected return for agent~$i$ when it is playing $\pi_i$ against $\pi_{-i}^\iota$. Note that each $\mathcal{M}^0(\pi^\iota_{-i})$ is simply a one-state MDP. Then, one can get the following
\[
V_\iota = \frac{\sum_{a_i \in \mathcal{A}_i} \pi_i(a_i) \sum_{a_{-i}\in \mathcal{A}_{-i}}\pi_{-i}^\iota(a_{-i}) R_i(a_i, a_{-i})}{1 - \gamma}
\]
The overall expected utility against this mixed strategy is therefore
\begin{equation}
V_{mix} = \sum_\iota p_\iota \frac{\sum_{a_i \in \mathcal{A}_i} \pi_i(a_i) \sum_{a_{-i}\in \mathcal{A}_{-i}}\pi_{-i}^\iota(a_{-i}) R_i(a_i, a_{-i})}{1 - \gamma}
\label{eq:mixed_oppo}
\end{equation}
When this agent is instead playing against the mixed-strategy-induced behavioral strategy $\omega_{(\Pi^{0+}_{-i}, \vec p)}$, the consequent Bellman equation is
\[
\begin{split}
V_{beh}
& = \sum_{a_i \in \mathcal{A}_i} \pi_i(a_i) [R_{\omega_{(\Pi^{0+}_{-i}, \vec p)}}(a_i) + \gamma V_{beh}] \\
& = \sum_{a_i \in \mathcal{A}_i} \pi_i(a_i) [\sum_{a_{-i}\in \mathcal{A}_{-i}} R_i(a_i, a_{-i})\cdot \omega_{(\Pi^{0+}_{-i}, \vec p)}(a_{-i})+ \gamma V_{beh}] \\
& = \sum_{a_i \in \mathcal{A}_i} \pi_i(a_i) [\sum_{a_{-i}\in \mathcal{A}_{-i}} R_i(a_i, a_{-i})  \cdot (\sum_{\iota}p_\iota\cdot \pi^\iota_{-i}(a_{-i})) + \gamma V_{beh}] \\
\end{split}
\]
Thus, solving the equation yields
\begin{equation}
\begin{split}
V_{beh} = \frac{\sum_{a_i \in \mathcal{A}_i} \pi_i(a_i) \sum_{a_{-i}\in \mathcal{A}_{-i}} R_i(a_i, a_{-i})\sum_{\iota}p_\iota\cdot \pi^\iota_{-i}(a_{-i})}{1 - \gamma} 	
\end{split}
\label{eq:mixed_pi}
\end{equation}
By comparing Equation (\ref{eq:mixed_oppo}) and (\ref{eq:mixed_pi}), it is clear that $V_{mix} = V_{beh}$, up to different orders of summation.	
\end{proof}

%%%%%%%%%%%%%%%%%%%%%%%%

%%% Mixing in general SGs (failed even for stationary strategies)

%\textcolor{red}{More likely a Remark below instead of a Theorem}
\begin{theorem}
[Utility Equivalence Does Not Hold for General Stochastic Games]
In general, when a stochastic game involves multiple states,
an agent $i$'s overall utility when playing against a mixed stationary strategy $(\Pi^{0+}_{-i}, \vec p)$ is not necessarily the same as when playing against the induced behavioral strategy $\omega_{(\Pi^{0+}_{-i}, \vec p)}$.
\label{thm:noneq_sg}
\end{theorem}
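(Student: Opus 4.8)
The statement is a non-equivalence claim, so the plan is to exhibit a single explicit counterexample: a stochastic game with more than one state, a mixed stationary opponent strategy $(\Pi^{0+}_{-i}, \vec p)$, and a fixed strategy of agent $i$ for which the overall utility against the mixture differs from the utility against $\omega_{(\Pi^{0+}_{-i}, \vec p)}$. The guiding intuition is \emph{why} the equivalence of Theorem~\ref{thm:equiv_repeated} breaks: in the repeated (single-state) case the return is the geometric sum $\frac{(\cdots)}{1-\gamma}$, which is \emph{linear} in the opponent's action distribution, so averaging over support strategies (the mixture) and averaging the action probabilities pointwise (the induced behavioral strategy) coincide. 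With several states the value is instead a component of $(I-\gamma P)^{-1} r$, which is \emph{nonlinear} in the transition matrix $P$ and hence in the opponent's action distribution. A mixed strategy commits to one support strategy for the \emph{entire} episode, inducing transitions that are perfectly correlated across time, whereas the induced behavioral strategy re-randomizes independently at every step; the gap between the two utilities is precisely a Jensen-type discrepancy $\sum_\iota p_\iota\, f(P_\iota) \neq f\!\left(\sum_\iota p_\iota P_\iota\right)$ for the nonlinear map $f$.

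Concretely, I would take two environmental states $\mathcal{S}=\{s_0,s_1\}$ and let agent $i$ have a single dummy action, so that the dynamics are driven entirely by the opponent, whose action set is $\{\mathrm{stay},\mathrm{switch}\}$: $\mathrm{stay}$ self-loops on the current state and $\mathrm{switch}$ moves to the other state, both deterministically. Rewards depend only on the state, say $R(s_0)=0$ and $R(s_1)=1$, and the initial state is $s_0$. The mixture uses two pure stationary support strategies, $\pi^1_{-i}\equiv\mathrm{stay}$ and $\pi^2_{-i}\equiv\mathrm{switch}$, each selected with probability $\tfrac12$. Its induced behavioral strategy $\omega_{(\Pi^{0+}_{-i},\vec p)}$ plays each action with probability $\tfrac12$ in every state, turning the chain into a symmetric two-state random walk.

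The computation then proceeds along two short tracks. For the mixture I evaluate the per-episode returns separately: committing to $\mathrm{stay}$ keeps the trajectory in $s_0$ forever, giving return $0$; committing to $\mathrm{switch}$ produces the oscillation $s_0,s_1,s_0,\dots$, giving $\gamma+\gamma^3+\cdots=\frac{\gamma}{1-\gamma^2}$. Averaging yields $V_{\mathrm{mix}}(s_0)=\frac{\gamma}{2(1-\gamma^2)}$. For the induced behavioral strategy I solve the Bellman expectation equations of the symmetric chain, obtaining $V_{\mathrm{beh}}(s_0)=\frac{\gamma}{2(1-\gamma)}$. Since $V_{\mathrm{beh}}(s_0)=(1+\gamma)\,V_{\mathrm{mix}}(s_0)$, the two utilities differ for every $\gamma\in(0,1)$, which establishes the claim.

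The only real obstacle is selecting a game whose chain \emph{actually} separates the two quantities: several natural choices (e.g.\ one support strategy that funnels into an absorbing high-reward state) keep the value linear in the relevant parameter and make the mixture and its induced behavioral strategy collapse to the same number, so the nonlinearity must be made to act through a \emph{recurrent} transition where the opponent's committed choice is revisited. The stay/switch construction above is engineered precisely so that the temporal correlation of the opponent's action is felt, which is what produces the $(1+\gamma)$ factor; verifying that this factor is strictly away from $1$ is then immediate.
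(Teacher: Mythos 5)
Your proposal is correct, and it takes a genuinely different route from the paper. The paper's own proof never constructs a counterexample: it writes out the Bellman evaluation of $V_{mix}$ as a $\vec p$-weighted average of the values $V_{\mathcal{M}(\pi^\iota_{-i})}$ of the separate support-strategy MDPs, observes that one cannot substitute $V_{mix}$ for $V_{\mathcal{M}(\pi^\iota_{-i})}$ inside that expression, and notes that the resulting equation differs from the fixed-point equation defining $V_{beh}$ --- concluding that the two values are ``not necessarily'' equal because they solve different contraction-mapping equations. That argument is illuminating about \emph{where} the repeated-game proof of Theorem~\ref{thm:equiv_repeated} breaks (the loss of linearity of the return in the opponent's action distribution), but strictly speaking it only shows the defining equations differ, not that their solutions ever actually disagree on a concrete instance. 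Your stay/switch construction closes exactly that gap: with $V_{mix}(s_0)=\frac{\gamma}{2(1-\gamma^2)}$ and $V_{beh}(s_0)=\frac{\gamma}{2(1-\gamma)}=(1+\gamma)V_{mix}(s_0)$, you exhibit a two-state game, a mixed stationary strategy, and a (trivial) strategy for agent $i$ on which the utilities provably differ for every $\gamma\in(0,1)$, which is the gold-standard way to prove a ``does not hold in general'' statement. Your Jensen-type framing, $\sum_\iota p_\iota f(P_\iota)\neq f\bigl(\sum_\iota p_\iota P_\iota\bigr)$ with $f$ the nonlinear map $P\mapsto$ component of $(I-\gamma P)^{-1}r$, is also a sharper articulation of the paper's informal ``different contraction mappings'' intuition, and your closing remark about needing the opponent's committed choice to be revisited (recurrence rather than absorption) explains why the counterexample must be engineered this way. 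In short: the paper's argument is structural and generic; yours is a concrete witness that fully establishes the theorem and quantifies the gap --- if anything, it is the stronger proof.
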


\begin{proofsk}
We here provide some intuitions,
while the full proof is deferred to
Appendix~\ref{app:proof:thm_noneq_sg}.
The key issue is as follows.
Even when an agent plays against a mixed stationary strategy, 
her overall utility is the expectation of the returns of playing against each of the support strategies (each corresponding to a multi-step MDP), involving $|\Pi^{0+}_{-i}|$ contraction mappings.
However, when it plays against the induced behavioral strategy, its utility is computed by evaluating only one MDP induced by $\omega_{(\Pi^{0+}_{-i}, \vec p)}$, involving a contraction mapping that differs from any of the aforementioned~$|\Pi^{0+}_{-i}|$.
\end{proofsk}

As increasing either the length of memory or the number of environmental states results in a multi-state MDP from an individual agent's perspective,
a natural implication is that utility equivalence between a mixed strategy and its induced behavioral strategy does not necessarily hold for $K$-memory strategies once $K$ is positive, even in repeated games.	

One may further wonder whether a group of agents can form some equilibrium if all of them play mixed strategies, i.e.,
\[
\forall i \in \mathcal{N},\ (\Pi^{K+}_{i}, \vec p_i) \in BR((\Pi^{K+}_{-i}, \vec p_{-i})).
\]
With some additional assumptions, one can invoke Nash's existence theorem, as the game becomes finite.
Due to the page limit, we defer detailed formalization to Appendix~\ref{app:ne_mixed} for interested readers, while the application of such theoretic results remains an open problem.

\subsection{Computing BR to Mixed Strategies is Hard}

%%% Summary: Mixed BR <=> CMDP

We will first show that computing the best response against a mixed $K$-memory strategy can be reduced to optimally solving an infinite-horizon \textit{partially observable MDPs} (POMDPs)~\cite{sondik1978optimal, kaelbling1998planning}.
%\textcolor{teal}{so that the complexity/computability results~\cite{madani1999undecidability, madani2003undecidability} can naturally follow.}
It turns out the reduced ones belong to a subclass of generic POMDPs, namely \textit{Contextual MDP} (CMDPs), although it may not necessarily imply less challenging computation.
To show that this reduction does not complicate the original problem, we also construct a reduction from the problem of optimally solving CMDPs back to that of computing best responses against mixed strategies in stochastic games.

%%%%%%%%%%%%%%%%%%%%%%%%%%%%

%%%% Mixed BR reduces to POMDP

\begin{theorem}
Given a mixed strategy profile $(\Pi^{K+}_{-i}, \vec p)$ of the opponents, computing the best response for agent $i$ can be reduced to optimally solving an infinite-horizon POMDP.
\label{thm:memk_br_pomdp}
\end{theorem}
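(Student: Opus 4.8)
The plan is to construct, from the mixed strategy $(\Pi^{K+}_{-i}, \vec p)$, an explicit infinite-horizon discounted POMDP whose optimal policies are exactly agent $i$'s best responses. The guiding intuition is that the only thing agent $i$ cannot observe is \emph{which} support strategy $\pi_{-i}^\iota$ the opponents committed to at the start of the match; the environmental state and all past joint actions remain observable under complete observability. Thus the identity of the sampled support strategy plays the role of an unobserved, fixed ``context'' that is never revealed and never switches, which is precisely what renders the decision problem partially observable rather than a plain MDP.

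Concretely, I would define the hidden state space as $\mathcal{X} \triangleq \Pi^{K+}_{-i} \times \mathcal{H}^{\leq K} \times \mathcal{S}$, pairing a context $\iota$ with the history-augmented state used in Theorem~\ref{thm:kmem_br}. The action set is $\mathcal{A}_i$ and the observation set is $\mathcal{O} \triangleq \mathcal{H}^{\leq K} \times \mathcal{S}$, with the observation function deterministically projecting out the context component. Transitions leave the context untouched and move the augmented state exactly as in $\mathcal{M}^K(\pi_{-i}^\iota)$, i.e. $\hat T((\iota, H', S') \mid (\iota, H, S), a_i) = T^K_{\pi_{-i}^\iota}(H', S' \mid H, S, a_i)$ and $0$ whenever the context would change; rewards are inherited as $\hat R((\iota, H, S), a_i) = R^K_{\pi_{-i}^\iota}(H, S, a_i)$; the discount is $\gamma$; and the initial belief, for any starting environmental state $S_0$, is $b_0 = \sum_\iota p_\iota\, \delta_{(\pi_{-i}^\iota,\, \emptyset,\, S_0)}$, placing the mixing weights on the contexts. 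Since $\Pi^{K+}_{-i}$, $\mathcal{H}^{\leq K}$, and $\mathcal{S}$ are all finite, this is a finite, infinite-horizon, discounted POMDP.

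Next I would verify that the two optimization problems coincide. The key is that agent $i$'s observable information in the original game — the stream of states and joint actions — is exactly the observation stream of this POMDP, so a strategy $\pi_i \in \Pi_i^\infty$ maps observable histories to actions in precisely the way a POMDP policy does, and the two policy classes are in bijection. Under any such policy, the expected discounted return from $b_0$ decomposes by linearity over the initial belief into $\sum_\iota p_\iota$ times the return of running that same observation-based policy inside $\mathcal{M}^K(\pi_{-i}^\iota)$; because the context neither switches nor is ever observed, this equals agent $i$'s overall utility of $\pi_i$ against the mixed strategy, namely the $\vec p$-expectation of the per-support returns established in the discussion preceding this theorem. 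Maximizing over POMDP policies therefore maximizes agent $i$'s utility, so an optimal POMDP policy yields a best response, completing the reduction.

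The main obstacle I anticipate is the policy-class bookkeeping rather than any deep difficulty. One must check that restricting attention to observation-history policies loses nothing, which holds because the context influences payoffs only through the observed augmented-state dynamics and rewards. One must also reconcile the best-response requirement of value dominance \emph{across all starting states} with the single-belief POMDP objective: this is handled cleanly by noting that a POMDP optimal value function is optimal from \emph{every} point of the belief simplex, so a single optimal policy is simultaneously optimal from each $b_0$ obtained by varying $S_0$, hence dominates across starting states. A secondary point worth stating carefully is that the belief over contexts is a genuine sufficient statistic that can be Bayes-updated from the observed opponent actions; this is what will later let one recognize the reduced POMDP as a Contextual MDP, but it is not needed for the reduction itself.
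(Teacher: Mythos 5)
Your construction is essentially identical to the paper's: the same POMDP with hidden states $\Pi^{K+}_{-i}\times\mathcal{H}^{\leq K}\times\mathcal{S}$, observations projecting out the frozen opponent-strategy context, transitions/rewards inherited from $\mathcal{M}^K(\pi_{-i}^\iota)$, and a correctness argument showing that observable information, trajectory probabilities, and hence payoffs coincide under both models (your linearity-over-$b_0$ decomposition is just a compact phrasing of the paper's three-step verification). Your explicit treatment of the initial belief and of optimality across all starting states is a welcome refinement, but the route is the same.
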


\begin{proofsk}
Here, we only provide the reduction to the corresponding POMDP, while the correctness of this reduction is left to
Appendix~\ref{app:proof:thm_memk_br_pomdp}. 
%	, and then show it does not overcomplicate the problem.
	The POMDP is given as the following tuple
	$$\langle \mathcal{H}^K\times\mathcal{S}\times\Pi^{K+}_{-i}, \mathcal{A}_i, \mathcal{H}^K\times\mathcal{S}, \mathbf{T}, \mathbf{O}, \mathbf{R}, \gamma\rangle$$
	\begin{enumerate}
		\item The set of underlying states is denoted by $\mathcal{H}^K\times\mathcal{S}\times\Pi^{K+}_{-i}$. That is, a state in this POMDP is the history segment and environment state of the completely observable stochastic game, along with the unobservable opponent strategies.
		\item As previously, $\mathcal{A}_i$ is the set of available  actions of agent $i$, while $\gamma$ is the discount factor.
		\item The set of observations that can be made by agent $i$ is denoted as $\mathcal{H}^K\times\mathcal{S}$.
		\item $\mathbf{T}: (\mathcal{H}^K\times\mathcal{S}\times\Pi^{K+}_{-i}) \times \mathcal{A}_i \mapsto \Delta(\mathcal{H}^K\times\mathcal{S}\times\Pi^{K+}_{-i})$ denotes the  transition function, mathematically defined as
		\[
		\begin{split}
		\mathbf{T}\Big((H', S', \pi_{-i}') \Big| & (H, S, \pi_{-i}), a_i \Big) \triangleq \\
		& \begin{cases}
			T^K_{\pi_{-i}}(H', S'| H, S, a_i) & \text{, if } \pi_{-i}' = \pi_{-i} \\
			0 & \text{, otherwise} 
		\end{cases}	
		\end{split}
		\]
		\item $\mathbf{O}: (\mathcal{H}^K\times\mathcal{S}\times\Pi^{K+}_{-i}) \mapsto \mathcal{H}^K\times\mathcal{S}$ denotes the deterministic observation function, mathematically defined as
		\[
		\mathbf{O}\big((H, S, \pi_{-i})\big) \triangleq (H, S)
		\]
		\item $\mathbf{R}: (\mathcal{H}^K\times\mathcal{S}\times\Pi^{K+}_{-i}) \times \mathcal{A}_i \mapsto \mathbb{R}$ is the reward function,
		\[
		\mathbf{R}\big((H, S, \pi_{-i}), a_i\big) \triangleq R^K_{\pi_{-i}}(H, S, a_i)		\]
	\end{enumerate}
An optimal solution, in terms of maximizing infinite-horizon discounted rewards, of such a POMDP is typically obtained as a mapping from all possible histories (or equivalently, from beliefs over states) to potentially randomized actions~\cite{sondik1978optimal, kaelbling1998planning}, and therefore, may not correspond to finite-memory strategies in general.
\end{proofsk}

%\textcolor{red}{solution concept, history to actions}

One can see that the constructed POMDPs in the above theorem belong to a subclass of generic POMDPs, where a state is composed of directly observable variables and other hidden ones. This subclass is specially termed as \textit{Mixed observability MDPs} (MOMPDs)~\cite{ong2010planning, araya2010closer, lee2007makes}.
Existing research has shown that planning algorithms originally developed for POMDPs are significantly faster for those factorized models like MOMDPs in practice.
In fact, our case fits an even more restricted model called \textit{Contextual MDPs} (CMDPs)~\cite{hallak2015contextual, benjamins2023contextualize}, which can be viewed as a special case of MOMDPs where there are no transitions among the hidden state variables.
While CMDPs and MOMDPs are special cases of POMDPs, the complexity/computability results for the former two remain unresolved.
So far, the common conjecture is that neither CMDP nor MOMDP is significantly easier to solve than POMDP, and it is proven that optimally solving infinite-horizon POMDPs is undecidable~\cite{madani2003undecidability}.

This result highly pertains to discussions on type-based methods for single-agent planning in the presence of multiple other agents~\cite{albrecht2015game, albrecht2019convergence, albrecht2016belief, zhu2025single}.
\citeauthor{albrecht2015game}~\cite{albrecht2015game} characterized the general problem from a conceptual standpoint, where each opponent's strategy acts as an oracle that can be queried; however, they left the specific implementation issues unresolved.
As a supplementary, \citeauthor{zhu2025single}~\cite{zhu2025single} offered a spectrum of implementable planners for the stationary base case, where each support strategy within the opponent's mixed strategy is stationary.
Here, Theorem~\ref{thm:memk_br_pomdp} further generalizes this to constant-memory strategies, enabling all the formulations in~\cite{zhu2025single} to be extended to the entire family of constant-memory strategies.

%%%%%%%%%%%%%%%%%%%%%

%%% CMDP reduces to mixed-BR stationary SG

We will also present a reduction in the reversed direction.

\begin{theorem}
	Optimally solving a CMDP can be reduced to computing the best response for an agent $i$ against a profile of mixed zero-memory (i.e., stationary) strategies $(\Pi^{0+}_{-i}, \vec p)$ of its opponents.
\label{thm:cmdp_to_br}
\end{theorem}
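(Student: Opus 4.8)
The plan is to reverse the construction underlying Theorem~\ref{thm:memk_br_pomdp} specialized to $K=0$. Fix an arbitrary CMDP with a finite context set $\mathcal{C}=\{c_1,\dots,c_m\}$ drawn once from a prior $\vec p$ and held fixed throughout an episode, a shared state space $\mathcal{S}$ and action space $\mathcal{A}$, per-context dynamics $(T_\iota, R_\iota)$ for each context $c_\iota$, and discount $\gamma$, where the agent observes the state but not the context and maximizes expected discounted return over $\vec p$. The core idea is that, since a stationary opponent's action can be made to \emph{name} the context, I would build a two-agent stochastic game in which agent $i$ faces the CMDP exactly when each opponent support strategy is a constant strategy that always announces one fixed context.

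Concretely, I would set $\mathcal{S}$ as the game's state space, let agent $i$ keep $\mathcal{A}_i=\mathcal{A}$, give the single opponent the dummy action set $\mathcal{A}_{-i}=\{1,\dots,m\}$, and define the game's transition and reward by $T(S'\mid S,(a_i,\iota))\triangleq T_\iota(S'\mid S,a_i)$ and $R_i(S,(a_i,\iota))\triangleq R_\iota(S,a_i)$, inheriting $\gamma$. For each $\iota$ I would take the stationary strategy $\pi^\iota_{-i}$ that plays action $\iota$ with probability one in every state, and form the mixed stationary profile $(\Pi^{0+}_{-i},\vec p)$ with $\Pi^{0+}_{-i}=\{\pi^\iota_{-i}\}_\iota$ and weight $p_\iota$ on $\pi^\iota_{-i}$. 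A one-line check then shows that the induced MDP $\mathcal{M}^0(\pi^\iota_{-i})$ from the proof of Theorem~\ref{thm:kmem_br} has $T^0_{\pi^\iota_{-i}}(S'\mid S,a_i)=T_\iota(S'\mid S,a_i)$ and $R^0_{\pi^\iota_{-i}}(S,a_i)=R_\iota(S,a_i)$, since $\pi^\iota_{-i}$ concentrates all mass on action $\iota$; that is, $\mathcal{M}^0(\pi^\iota_{-i})$ is precisely the $\iota$-th context MDP of the CMDP. Assumption~1 is met because the opponent is independent and the rewards, being copied from a finite CMDP, are bounded.

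To finish, I would invoke Theorem~\ref{thm:memk_br_pomdp} at $K=0$: computing agent $i$'s best response against $(\Pi^{0+}_{-i},\vec p)$ reduces to solving the POMDP whose hidden variable is the opponent's chosen strategy $\pi_{-i}$, whose observation is the state $S$, and whose observable dynamics and rewards are the $T^0_{\pi_{-i}}$ and $R^0_{\pi_{-i}}$ above. By the previous paragraph this POMDP is literally the given CMDP, with hidden context identified with the hidden support strategy and initial belief $(S,\vec p)$ at each start state $S$. Hence an optimal best response over $\Pi_i^\infty$ is exactly an optimal context-unaware CMDP policy, and conversely; since the construction is linear in the size of the CMDP, the reduction is polynomial.

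The main obstacle I anticipate is not the algebra but pinning down the semantic alignment cleanly: I must argue that the mixed-strategy \emph{match} semantics, in which the opponent commits to a single support strategy for the whole episode and agent $i$ never observes which, coincides exactly with the CMDP's hidden-yet-fixed-context semantics, so that agent $i$'s overall utility $\sum_\iota p_\iota\, u_i$ over support strategies equals the CMDP's expected return over $\vec p$. I also need to confirm that best-response optimality across all initial states $S$ matches "optimally solving" the CMDP across all initial beliefs $(S,\vec p)$, so that no incomparability issue in the $BR$ definition weakens the correspondence; once this alignment is established, the equality $\mathcal{M}^0(\pi^\iota_{-i})=$ context-$\iota$ MDP and the appeal to Theorem~\ref{thm:memk_br_pomdp} are routine.
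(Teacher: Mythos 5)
Your construction takes a genuinely different route from the paper's. The paper does not use ``announcing'' strategies: it poses the reduction as a factorization problem, seeking one shared kernel $T_0:\mathcal{S}\times\mathcal{A}\times\mathcal{A}'\mapsto\Delta(\mathcal{S})$ and randomized stationary strategies $\{\pi_c\}_{c\in\mathcal{C}}$ satisfying Equation~(\ref{eq:cmdp_decomp}), written in matrix form $M_C=M_\Pi\cdot M_T$ and solved by the Gram--Schmidt-style Algorithm~\ref{alg:gram_schmidt}, which additionally tries to minimize $|\mathcal{A}'|$. Your trivial factorization ($\mathcal{A}'=\mathcal{C}$, $M_\Pi$ the identity, $M_T=M_C$) satisfies that equation immediately, treats rewards on the same footing (the paper explicitly omits the $R^c$ discussion), is linear in the size of the CMDP, and sidesteps a real soundness worry in the paper's procedure: orthogonal residuals can have negative entries (e.g., rows $(0.75,0.25)$ and $(0.25,0.75)$ yield the residual $(-0.2,0.6)$), so the normalized rows of $M_T$ and the coefficients in $M_\Pi$ need not be probability distributions. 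On the pure question of existence of the reduction instance, your argument is the simpler and more complete one; what the paper's approach buys is a potentially smaller opponent action set, $|\mathcal{A}'|\le\min(|\mathcal{C}|,|\mathcal{S}|)$.

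There is, however, a genuine gap, and it is exactly the point you flagged as ``semantic alignment'' and then deferred as routine: for your construction it is not routine, it is where the proof can break. In the paper's framework, histories live in $(\mathcal{S}\times\mathcal{A})^*$ with $\mathcal{A}$ the \emph{joint} action space, and best responses are quantified over $\Pi_i^\infty$, i.e., over strategies that may condition on the opponent's realized actions. Under your deterministic announcing strategies, the opponent's first observed action reveals the hidden context exactly, so the best response in $\Pi_i^\infty$ is ``play anything once, read off $\iota$, then follow the optimal policy of the $\iota$-th context MDP.'' Its value strictly exceeds the CMDP optimum whenever knowing the context helps (take two contexts with identical transitions but opposite reward-maximizing actions: the CMDP agent can never learn the context, your SG agent learns it in one step), and such a best response neither equals nor yields a context-unaware CMDP policy; the reduction then fails. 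Your argument is sound only under the $K=0$ instantiation of Theorem~\ref{thm:memk_br_pomdp}, where the responder's observation is the bare state $S$ and opponent actions are never observed, and you should state that restriction explicitly rather than inherit it silently. Note that this is also the one respect in which the paper's randomized construction is not mere over-engineering: mixing contexts through a common $T_0$ at least permits instances where the opponent's visible action carries no more information than the state transition already does (e.g., when $T_0(\cdot\mid S,a,a')$ is deterministic in $a'$, observing $a'$ and observing $S'$ are equivalent), whereas a deterministic announcement maximally violates this. To repair your proof, either adopt the restricted observation model explicitly, or redesign the opponent so that its action distribution reveals nothing beyond what the induced kernel $T^0_{\pi_{-i}}$ reveals.
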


\begin{proof}
We prove the above theorem by constructing a reduction for any given CMDP instance that requires an optimal solution, to an SG instance that requires a best response for one of the agents against its opponent's mixed strategy.
Consider a CMDP formally defined as a tuple $\langle \mathcal{C}, \mathcal{S}, \mathcal{A}, f_T, f_R, \gamma \rangle$, where
\begin{enumerate}
	\item $\mathcal{C}$ is a finite set of unobservable contexts, one of which will be selected at the beginning of each episode.
	\item $f_T$ and $f_R$ take a context $c\in \mathcal{C}$ and output a transition function $T^{c}: \mathcal{S} \times \mathcal{A} \mapsto \Delta(\mathcal{S})$ as well as a reward function $R^{c}: \mathcal{S}\times \mathcal{A} \mapsto \mathbb{R}$, respectively. The tuple $\langle \mathcal{S}, \mathcal{A}, T^c, R^c, \gamma \rangle$ then constitutes an MDP.
\end{enumerate}
One can construct an SG with two players $$\langle \{1,2\}, \mathcal{S}, \{\mathcal{A}, \mathcal{A'}\}, T_0, \{R_1, R_2\} \rangle,$$ where agent 1 with action set $\mathcal{A}$ is playing against agent 2 (as a context controller/switcher), who holds a set of stationary strategies $\{\pi_{c}: \mathcal{S} \mapsto \Delta(\mathcal{A}')\}_{c\in \mathcal{C}}$. We need to prove that there exists a $T_0: \mathcal{S} \times \mathcal{A} \times \mathcal{A}' \mapsto \Delta(\mathcal{S})$ and $\{\pi_c\}_{c\in \mathcal{C}}$, such that the following system of equations holds simultaneously,
\begin{equation}
\forall c\in \mathcal{C},\
T^{c}(S' | S, a) = \sum_{a'\in \mathcal{A}'} T_0(S' | S, a, a')\cdot \pi_c(a'|S)
\label{eq:cmdp_decomp} 	
\end{equation}
We omit the similar discussion on $R^c$.
One can see that the above equation operates independently for each $(S,a)$ pair but should hold simultaneously for all $c\in\mathcal{C}$ while fixing a pair of $(S,a)$.
For each $(S, a)$, Equation~(\ref{eq:cmdp_decomp}) can be written in matrix notation as
\[
M_C = M_\Pi \cdot M_T
\]
where $M_C[c,S'] = T^{c}(S' | S, a),  M_\Pi[c, a'] = \pi_c(a'|S), M_T[a', S'] = T_0(S' | S, a, a')$, thus, $M_C \in \mathbb{R}^{\mathcal{C}\times \mathcal{S}}, M_\Pi \in \mathbb{R}^{\mathcal{C}\times \mathcal{A}'}, M_T \in \mathbb{R}^{\mathcal{A}' \times \mathcal{S}}$.
Specifically, the $j$-th row $M_C[j] \in \mathbb{R}^{1\times \mathcal{S}} $ of $M_C$ is a linear combination of all rows in $M_T$, with the linear weights provided by the $j$-th row $M_\Pi[j] \in \mathbb{R}^{1\times \mathcal{A}'}$ of $M_\Pi$.

%% trivial
%Then, a trivial way to find such $M_\Pi$ and $M_T$ is as follows,
%\begin{enumerate}
%	\item If $|C| \leq |S|$, then one can make $|A'|= |C|$, and $M_\Pi = \mathbf{I}_{C\times C}$ and $M_T = M_C$. That is, the policy of the context switcher is to simply select a context in a deterministic way.
%	\item If $|C| > |S|$, to avoid $M_\Pi$ and $M_T$ being unboundedly large, a straightforward way is to make $M_T = \mathbf{I}_{S\times S}$, i.e. $|A'| = |S|$, as the dimension of the row space of $M_C$ is at most $|S|$, and therefore, a set of orthonormal basis vectors (i.e., each row of $M_T$) can already span the whole row space of $M_C$.
%\end{enumerate}
%One may note that even if in the sub-case when $|C| \leq |S|$, having $|A'|= |C|$ may be redundant.

A natural question arises: how can we find such $M_\Pi$ and $M_T$ of minimum sizes, i.e. with smallest $|\mathcal{A}'|$. This can be further reduced to finding a minimum set of $|\mathcal{S}|$-dimensional vectors whose linear combination can represent all the row vectors in $M_C$.
We now describe a procedure that iteratively construct such $M_\Pi$ and $M_T$, formally given in Algorithm~\ref{alg:gram_schmidt}. The idea is quite clean and elegant: start with the first row of $M_C$ as the first basis vector, project the $j$-th subsequent row onto all previous $(j-1)$ basis vectors, and treat the orthogonal residual as the $j$-th basis vector if it is non-zero. 
One may note that this procedure resembles the \textit{Gram-Schmidt Orthogonalization}~\cite{cheney2009linear}, which can be done in \textit{strongly-polynomial time}. The only difference is, the standard \textit{Gram-Schmidt Orthogonalization} starts with a set of vectors that are already linearly independent, though they may not be orthogonal to each other. In contrast, here we start with a set of vectors that may be linearly dependent, and the goal is to find the minimum set of basis vectors.
\end{proof}

\begin{algorithm}[!ht]
    \caption{Find the minimum $M_\Pi$ and $M_T$}
    \label{alg:gram_schmidt}
    \begin{algorithmic}[1]
        \State \textbf{Input:} $M_C$ % with $|C| \leq |S|$
        \State \textbf{Output:} $M_\Pi$ and $M_T$ of minimum sizes
        \State Initialize: $M_T$ as an empty matrix
        \State $M_T.append\_row(M_C[1])$
        \Comment {index starts from 1}
        \For{$j = 2 \to |C|$}
        	\State $new \gets M_C[j] - \sum_{k=1}^{j-1}\frac{\langle M_T[k],M_C[j] \rangle}{\langle M_T[k],M_T[k] \rangle}M_T[k]$
%        	\COMMENT $\mathbf{Proj}_{M_T[k]}M_C[i]$ 
        	\If{$new \neq \vec 0$}
        		\State $M_T.append\_row(new)$
        	\EndIf
        \EndFor
        \State $M_T \gets normalize\_each\_row(M_T)$
        \State $M_\Pi \gets M_C \cdot transpose(M_T)$
        \State \Return $M_\Pi$, $M_T$
    \end{algorithmic}
\end{algorithm}

%%%%%%%%%%%%%%%%%%%%%%%%%%%%

%%% A few more implications and remarks

Therefore, one can conclude that the theorem below directly follows from Theorem~\ref{thm:memk_br_pomdp} and Theorem~\ref{thm:cmdp_to_br}.
\begin{theorem}
The computational problem of computing the best response to a mixed constant-memory strategy is as hard as that of optimally solving CMDPs.
\label{thm:br=cmdp}
\end{theorem}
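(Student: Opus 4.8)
The plan is to show that the two computational problems are mutually reducible, so that the assertion ``as hard as'' follows simply by chaining the two reductions already furnished by Theorem~\ref{thm:memk_br_pomdp} and Theorem~\ref{thm:cmdp_to_br}. Concretely, I would establish two directions: (i) computing a best response to a mixed constant-memory strategy reduces to optimally solving a CMDP; and (ii) optimally solving a CMDP reduces to computing a best response to a mixed constant-memory strategy. Together these two reductions witness that the problems are computationally equivalent, which is precisely the statement.

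For direction (i), I would start from Theorem~\ref{thm:memk_br_pomdp}, which already reduces the best-response problem to optimally solving the infinite-horizon POMDP $\langle \mathcal{H}^K\times\mathcal{S}\times\Pi^{K+}_{-i}, \mathcal{A}_i, \mathcal{H}^K\times\mathcal{S}, \mathbf{T}, \mathbf{O}, \mathbf{R}, \gamma\rangle$. The only remaining step is to certify that this POMDP is in fact a CMDP rather than a generic POMDP. This amounts to inspecting the transition function $\mathbf{T}$: the hidden coordinate $\pi_{-i}$ is preserved by every transition (the branch with $\pi_{-i}' = \pi_{-i}$ carries all the probability mass, and every other branch is $0$), so the latent variable is drawn once according to $\vec p$ at the start of an episode and never changes thereafter. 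This is exactly the defining property of a context in a CMDP---a hidden variable that governs the dynamics and rewards of the observable part while itself undergoing no transitions. Hence the reduction of Theorem~\ref{thm:memk_br_pomdp} actually lands in the CMDP subclass, giving direction (i).

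For direction (ii), I would invoke Theorem~\ref{thm:cmdp_to_br} directly: it reduces optimally solving an arbitrary CMDP to computing a best response against a profile of mixed zero-memory strategies $(\Pi^{0+}_{-i}, \vec p)$, and the reduction is carried out in strongly-polynomial time via the Gram--Schmidt-style Algorithm~\ref{alg:gram_schmidt}. Since $\Pi_i^0 \subseteq \Pi_i^K$ for every $K$, a mixed zero-memory strategy is in particular a mixed constant-memory strategy, so this is already a reduction from CMDP-solving to best-responding against a mixed constant-memory strategy.

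Chaining directions (i) and (ii) yields reductions in both directions between the two problems, and hence each is as hard as the other. The main obstacle---indeed the only genuinely nontrivial point beyond citing the two theorems---is the verification in direction (i) that the constructed latent dynamics are contextual, i.e., that there are no transitions on the hidden coordinate, together with checking that both reductions are efficient enough (polynomial in the respective instance descriptions) to preserve the intended notion of ``as hard as.'' Neither step requires new machinery; the argument is essentially bookkeeping layered on top of Theorem~\ref{thm:memk_br_pomdp} and Theorem~\ref{thm:cmdp_to_br}.
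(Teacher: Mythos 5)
Your proposal is correct and follows essentially the same route as the paper: the paper likewise obtains this theorem by chaining the reduction of Theorem~\ref{thm:memk_br_pomdp} (whose constructed POMDP it observes, in the surrounding discussion, to be a CMDP since the hidden coordinate $\pi_{-i}$ never transitions) with the reverse reduction of Theorem~\ref{thm:cmdp_to_br}. Your explicit verification of the contextual property and the remark that $\Pi_i^0 \subseteq \Pi_i^K$ are just the bookkeeping the paper leaves implicit.
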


%\textcolor{red}{Can delete ---
%It has been proved that in infinitely repeated games, there exists a Turing machine strategy such that no Turing machine can implement its best response~\cite{knoblauch1994computable, nachbar1996non}.
%Based on the widely accepted belief that CMDP is in the same complexity class as POMDP,
%Theorem~\ref{thm:br=cmdp} further implies that to render the best response computationally undecidable, it is probably sufficient for the opponent to mix over just a few constant-memory strategies, rather than implementing a complex strategy with perfect memorization via a computer program.}

Finally, we highlight some connections to the existing literature:
\begin{enumerate}
	\item If in each turn the opponent is allowed to switch to a different support strategy independently of previous actions, which can be reduced to a mixed-strategy-induced behavioral strategy, then how the best response is computed in our work is equivalent to solving a \textit{belief-induced MDP} in~\cite{zhu2025single}.
	\item \citeauthor{wang2019pure}~\cite{wang2019pure} observed that there may not exist a pure one-memory strategy as a best response against a population of one-memory opponents, each potentially adopting a different one-memory strategy (as if in a tournament). Our work provides some formal evidence: the best response in general is not even within constant-memory; instead, it may incorporate infinite memory. 
	\item Best responses to mixed strategies here can be seen as one level of recursion in a bottom-up construction of dynamic programming in I-POMDPs~\cite{gmytrasiewicz2005framework}. Therefore, our work can serve as the missing justification for why solving POMDPs or CMPDs, rather than MDPs, is essential in I-POMDPs.
\end{enumerate}

%\textcolor{red}{ours is one level of recursion, bottom-up leads to I-POMDPs, and I-POMDP~\cite{gmytrasiewicz2005framework} is then correct in the sense that every step should involve a POMDP solving if no one is going to switch her strategy.}
%
%
%\textcolor{red}{\cite{wang2019pure} may not exist a pure strategy best response for mixed constant-memory strategies when non-zero memory}
%
%
%one should note that for \textcolor{red}{mixed-strategy-induced behavioral strategies}, if in each turn the opponent is allowed to switch to a different strategy independently of previous actions, then solving such a belief-induced MDP is sufficient for computing the best response, which is definitely decidable and even has constant-memory strategies as optimal policies.

%\textcolor{red}{reduce to solving infinite-horizon POMDPs, undecidable,
%coincide with the result that no Turning-machinery best response can be found (no need for the opponent strategy to be a generic Turing machine)}

%%%%%%%%%%%%%%%%%%%%%%%%%%

\section{Empirical Study}

The purpose these empirical studies is not to benchmark the algorithms mentioned in this section; rather, it is to present an intuitive illustration of the effects of memory.
The tested domains include two matrix games played sequentially, and one domain borrowed from robotics with raw image inputs.

%\textcolor{red}{br to mix is comprehensive in Zhu2025, all the theoretic results can be carried over to that, we focus on the missing part}

\subsection{Sequential Matrix Games}

We consider two matrix games that are played in a repeated manner, namely the \textit{Iterated Prisoner's Dilemma} (IPD), and the \textit{Iterated Traveler's Dilemma} (ITD).

\subsubsection{The Iterated Prisoner's Dilemma}

The payoff matrix is shown in the table below.
We also remind the readers of 
a library \cite{knight_2023_7861907} that implements most of the strategies from the well-known Axelrod's IPD tournament.
\textit{Our approach can compute the best response of any constant-memory strategy in this library, whether deterministic or randomized.}
In particular, we would like to highlight a family of strategies, called
\textit{$N$-Tit(s)-for-$M$-Tat(s)} (originally named by~\citeauthor{harper2017reinforcement}~\cite{harper2017reinforcement}), which is a parameterized version of the classic \textit{Tit-for-Tat}.
An agent adopting \textit{$N$-Tit(s)-for-$M$-Tat(s)} will retaliate immediately after it has been \textit{defected} $M$ times, by responding with \textit{defection} in the next $N$ rounds.
Thus, it is a $\max(N,M)$-memory strategy.

\begin{table}[ht]
\begin{tabular}{c|c|c|}
\cline{2-3}
                        & C      & D      \\ \hline
\multicolumn{1}{|c|}{C} & (1, 1) & (-1, 2) \\ \hline
\multicolumn{1}{|c|}{D} & (2, -1) & (0, 0)  \\ \hline
\end{tabular}
%\caption{The payoff matrix of the Prisoner's Dilemma}
\label{tab:pd}
\end{table}
\vspace{-2mm}

\begin{figure}[!ht]
	\centering
	\includegraphics[width=60mm]{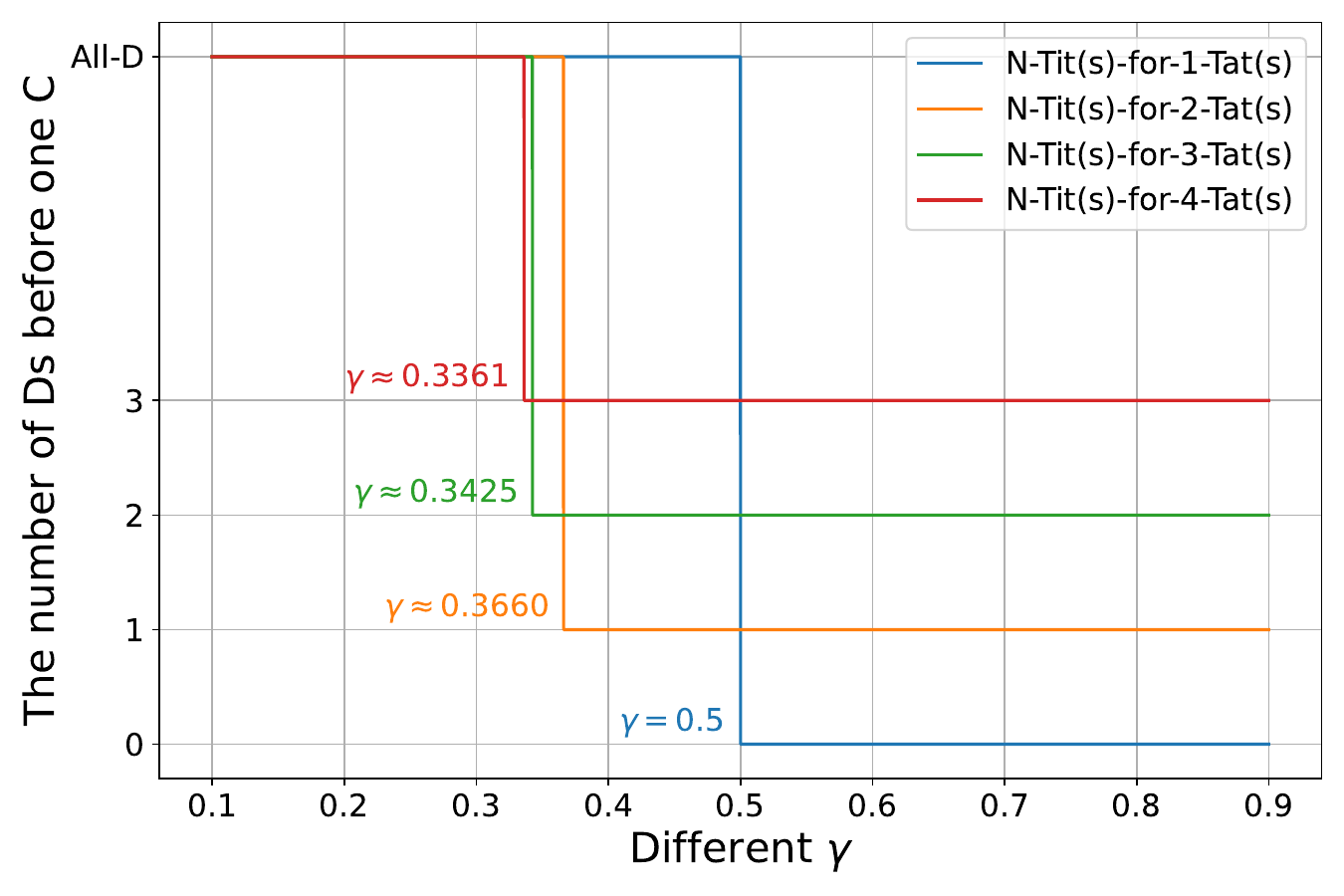}
	\vspace{-3mm}
	\caption{Experimental results of the IPD.}
	\label{fig:ipd_results}
	\Description{Experimental results of the Iterated Prisoner's Dilemma}
\end{figure}

\begin{figure}[!ht]
	\centering
	\includegraphics[width=60mm]{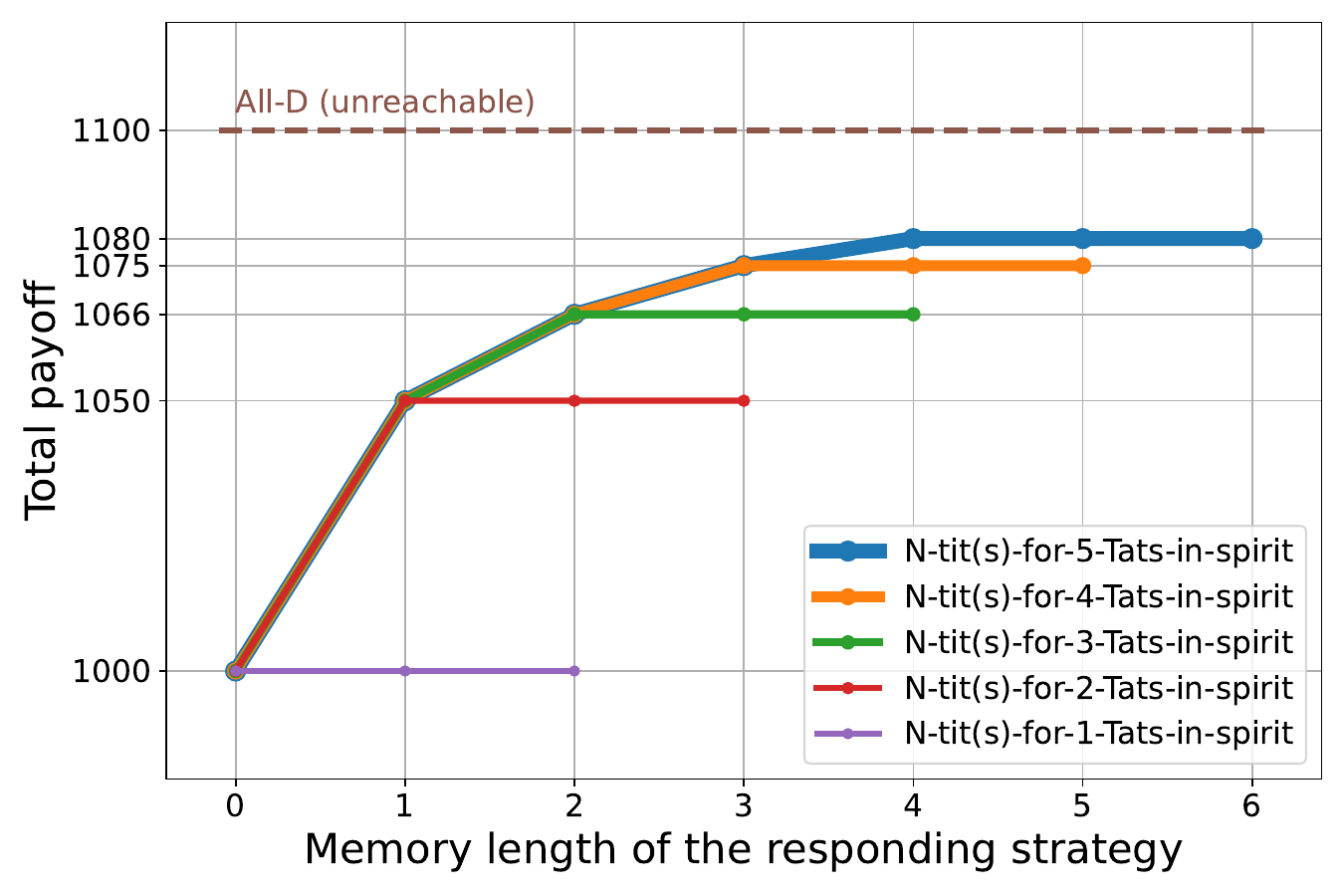}
	\vspace{-3mm}
	\caption{Experimental results of the ITD.
	}
	\label{fig:itd_results}
	\Description{Experimental results of the Iterated Traveler's Dilemma}
\end{figure}

\begin{figure*}[!ht]
	\centering
	\includegraphics[width=56mm]{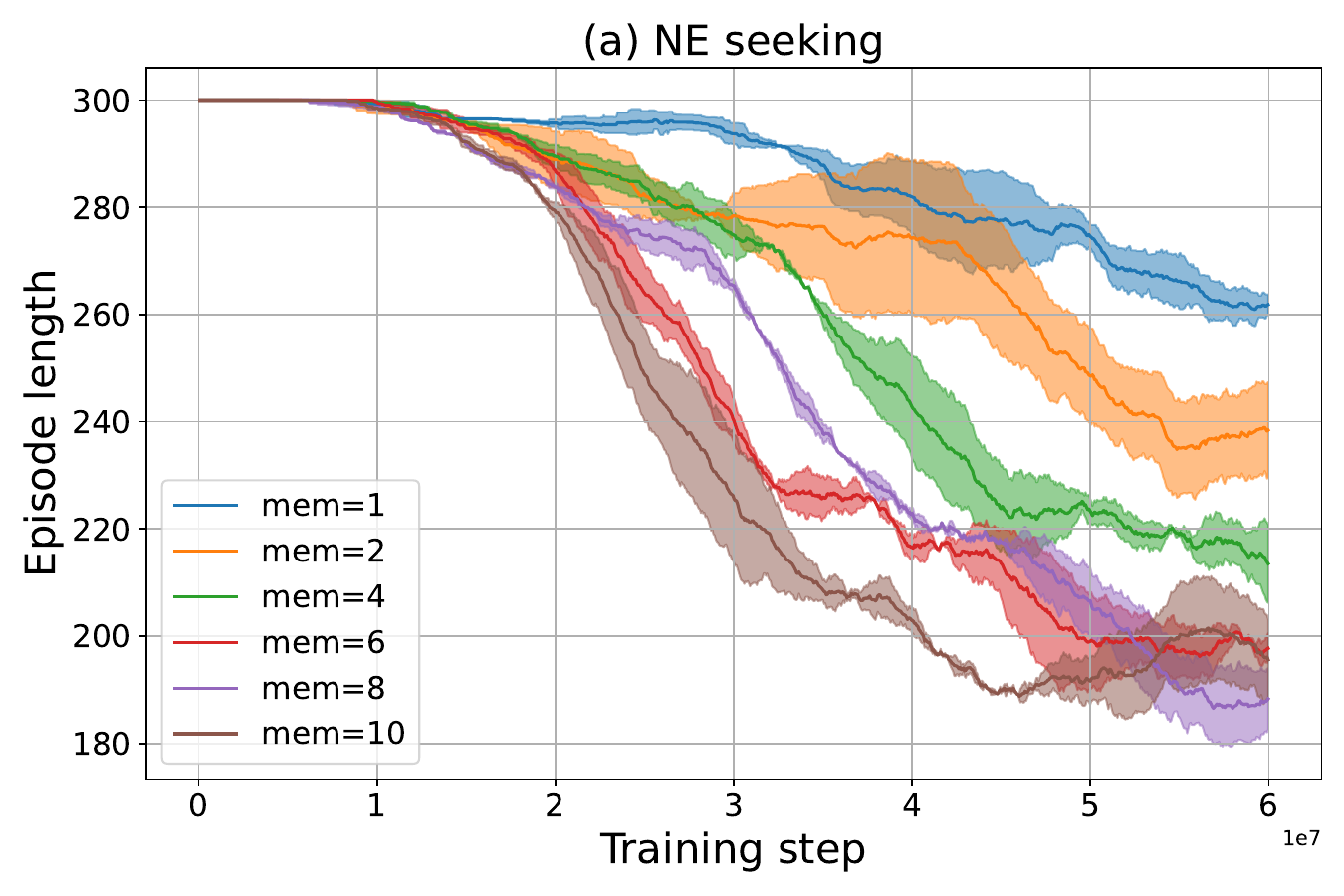}
	\includegraphics[width=56mm]{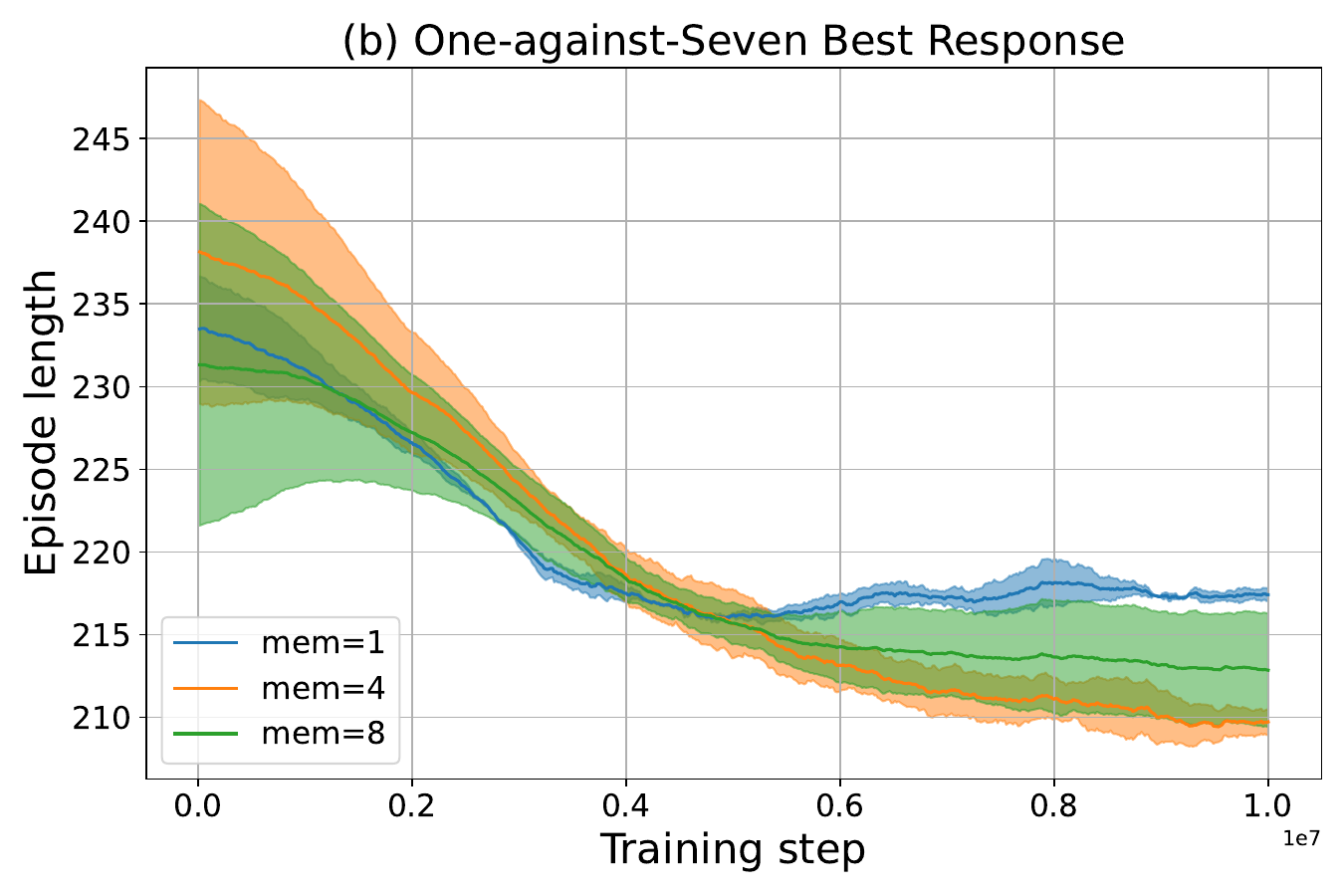}
	\includegraphics[width=56mm]{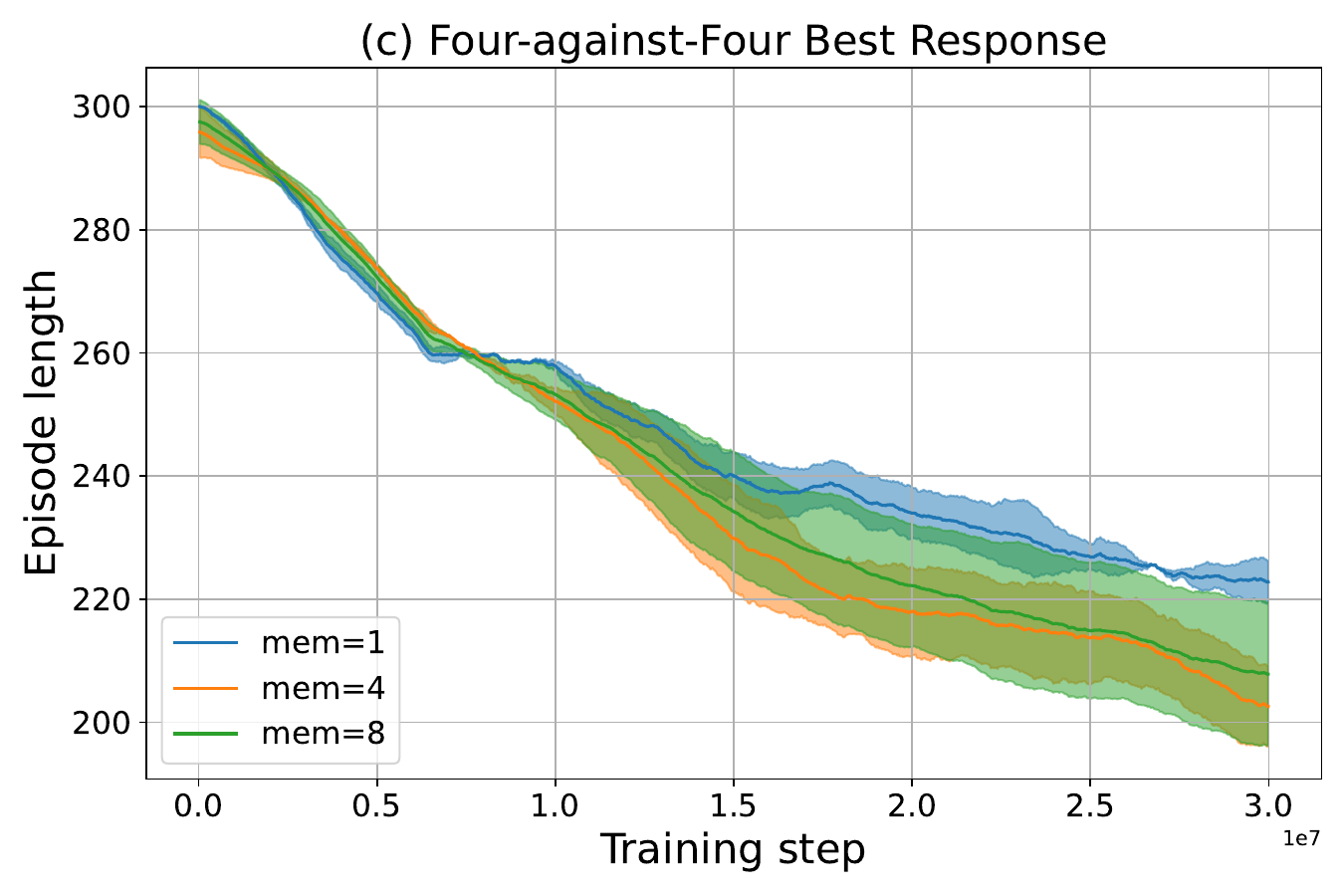}
	\caption{Experimental results of the Pursuit domain. (a) NE seeking; (b) Single agent best responding to the the rest 7 agents; (c) One team of four agents best responding to the other team of four agents.}
	\label{fig:pursuit_results}
	\Description{Experimental results of the Pursuit domain}
\end{figure*}

%\begin{figure*}[!ht]
%    \centering
%    \begin{subfigure}[b]{0.3\textwidth}
%        \centering
%        \includegraphics[width=\textwidth]{fig/pursuit_MidCatchMap16_DQN} % Replace with your image file
%        \caption{Caption for image a}
%        \label{fig:image_a}
%    \end{subfigure}
%    \hfill
%    \begin{subfigure}[b]{0.3\textwidth}
%        \centering
%        \includegraphics[width=\textwidth]{fig/pursuit_MidCatchMap16_SBR_DQN} % Replace with your image file
%        \caption{}
%        \label{fig:image_b}
%    \end{subfigure}
%    \hfill
%    \begin{subfigure}[b]{0.3\textwidth}
%        \centering
%        \includegraphics[width=\textwidth]{fig/pursuit_MidCatchMap16_TeamSBR_DQN} % Replace with your image file
%        \caption{Caption for image c}
%        \label{fig:image_c}
%    \end{subfigure}
%    \caption{Overall caption for the three images}
%    \label{fig:three_images}
%\end{figure*}

We compute the best responses for various settings of $(N,M)$ and the discount factor $\gamma$, and illustrate our findings in Figure~\ref{fig:ipd_results}.
We utilize 
MDPtoolbox~\cite{chades2014mdptoolbox}
to compute the exact solutions of the formulated MDPs, where the resulted policies are all deterministic ones.
One can observe clear phase transitions in the best responses.
Reading the figure from right to left, it indicates that
when the discount factor is sufficiently large, the best response to \textit{$N$-Tit(s)-for-$M$-Tat(s)} is to \textit{defect} ($M$-1) times followed by one \textit{cooperation}, and to repeat this pattern periodically, regardless of $N$.
However, when the agent is less patient by placing less value on future reciprocity, it will consider taking one additional round of \textit{defection}, leading to permanent mutual defection from the ($M$+1)-th round onwards.

This finding is summarized in a formal theorem that can be mathematically justified.
We also compute the closed-form solutions for those values of 
$\gamma$ that trigger the phase transition. \textit{Please refer to Appendix~\ref{app:prisoner} for details.}
Please note that while this paper focuses on the discounted-payoff setting, our code also includes the computation of best responses under the average-payoff setting, although a detailed discussion is omitted here.

%\begin{theorem}
%	Patient enough, M-D-before-one-C
%\end{theorem}

\subsubsection{The Iterated Traveler's Dilemma}

One may notice that the aforementioned \textit{($M$-1)-D-before-One-C} strategy against \textit{$N$-Tit(s)-for-$M$-Tat(s)} can be implemented using only $(M-1)$ memory instead of $\max(N,M)$ memory. 
Specifically, if there has been no \textit{cooperation} played by itself in the previous ($M$-1) rounds, it should \textit{cooperate} in the current round; otherwise, it should continue \textit{defecting}.
However, we have not addressed the theoretic case of computing the best response 
against a $K$-memory strategy using only $K'$-memory with $K'<K$.
Note that one cannot formulate a 
$K$-memory MDP but compute its optimal policy in the form of 
$K'$-memory using dynamic programming, as this would result in inconsistent policy updates.
To circumvent this issue,
we can run model-free algorithms, e.g., Q-learning which only requires a form of the policy in advance, to see the outcome $K'$-memory strategy.

Therefore, we generalize our previous findings to a broader class of games, called
 the \textit{Iterated Traveler's Dilemma} (ITD)~\cite{dasler2010iterated,tovsic2016learning}, which repeats over the matrix game of traveler's dilemma~\cite{basu1994traveler} with payoffs given as,
\begin{equation}
u_i(a_i, a_{-i}) \triangleq \min(a_i, a_{-i}) + 2\cdot sign(a_{-i} - a_i)
\end{equation}
%\vspace{-1mm}
where each action $a_i$ is an integer variable, also known as a \textit{bid}.
Please note that when $i \in \{1,2\}$ and $a_i \in \{0, 1\}$ for each $i$, it reduces the aforementioned prisoner's dilemma.
In our study, we consider $i \in \{1,2\}$ with a more fine-grained action space $a_i \in \{0, 1, \cdots 10\}$ to render a harder computational problem.
As the investigation of ITD is a relatively new area, we also provide some justification for its importance in Appendix~\ref{app:traveler}.

We implement the \textit{N-Tit(s)-for-M-Tat(s)} in spirit, as there are no longer well-defined notions of \textit{defections} or \textit{cooperations}.
In this context, if an agent finds that its opponent's last bid is smaller than its own last bid, this will be interpreted as a \textit{defection}, while \textit{cooperation} is defined in the opposite manner.
Tabular Q-learning is leveraged to compute the best responses for various memory length against the \textit{N-Tit(s)-for-M-Tat(s)} for different values of $M$.
For a specific value of $M$, we compute the best response using memory lengths ranging from 0 to $(M+1)$.
We illustrate the total (undiscounted) payoff for 100 rounds in Figure~\ref{fig:itd_results}.
It turns out that, when restricted to $M'$-memory, the best response against \textit{N-Tit(s)-for-M-Tat(s)} is exactly \textit{$M'$-D-before-One-C}, given any $M' < M$.
For example, playing against \textit{N-Tit(s)-for-5-Tat(s)}, the best response restricted to only 3-memory will be \textit{Three-D-before-One-C}, resulting in the agent exploiting its opponent for 3/4 of the time, with a total payoff $= 10\times 25 + 11\times 75 = 1075$.

\subsection{The Pursuit Domain}

The aforementioned two games present clear social dilemmas,
such that the technique used in the proof of Theorem~\ref{thm:ne_beh} may only find NEs where both players \textit{defect} from the very beginning, regardless of the memory length utilized (cf. the aforementioned notebook\footnote{Please refer to \texttt{code/kMemNE\_full.ipynb} in the codebase.}).
Therefore, we also conduct some experiments on a more intricate testbed borrowed from the robotics community, namely the \textit{Pursuit} domain~\cite{gupta2017cooperative}.
In this task, 8 pursuer agents attempt to catch 20 random walkers (also called evaders).
Each pursuer agent can only observe a limited local range, and once 4 pursuers simultaneously overlap with the same evader, this evader will be removed from the game. An episode terminates immediately after all the evaders are removed.
Ideally, these 8 agents will devide into two teams for evader hunting.
It is actually a Partially Observable SG (POSG) rather than strictly an SG.
%\textcolor{red}{details in Appendix~\ref{app:pursuit}}
We aim to investigate: 1) whether longer memories will lead to improved NEs; 2) how well one agent can respond to the 7 others; 3) how well one team (four agents) can respond to the other team (the rest four).

The main results are presented in Figure~\ref{fig:pursuit_results}
focusing solely on the results obtained with DQN~\cite{mnih2013playing}, as it significantly outperforms other algorithms in this task.
\textit{Additional benchmarking results using other algorithms, e.g., A2C~\cite{mnih2016asynchronous} and PPO~\cite{schulman2017proximal}, along with relevant detailed settings, are provided in Appendix~\ref{app:pursuit} for the reader's reference.}
To increase the memory length, we simply stack the historical observations and actions.
For multi-agent learning in search for NEs, we equip each agent with an identical network and train them to learn independently.
As shown in Figure~\ref{fig:pursuit_results}(a), utilizing longer memory indeed helps the pursuers catch the evaders faster, indicating a better NE.
We extract the eventual strategy trained with 8-memory as it appears to be the best.
In the experiments depicted in Figure~\ref{fig:pursuit_results}(b), 7 agents are equipped with this pretrained 8-memory strategy, while the remaining agent learns from scratch to find the best response. 
In the experiments shown in Figure~\ref{fig:pursuit_results}(b), one team of 4 agents are equipped with this pretrained 8-memory strategy, leaving the remaining team of the rest 4 agents learning from scratch to find a best (``team'') response.
%\textcolor{teal}{1 cannot change to much}
As a result, using memories of length 4 and 8 is clearly better than using memories of length one.
However, 8-memory responses are not significantly distinguishable from 4-memory responses, which may be attributed to the fact that 4-memory strategies are already sufficient to serve as the best response, or possibly due to some representation error introduced by the deep neural network.

It is also interesting to note that, the improvement, which is reflected by the episode length, made by one agent with the other 7 agents fixed (as shown in Figure~\ref{fig:pursuit_results}(b)) is clearly less then that made by a team of agents with the other team fixed (as shown in Figure~\ref{fig:pursuit_results}(c)). As we examined, in the former case, there is typically one of the 7 fixed agents who occasionally collaborates with three of them and at other times with the remaining three, creating the pseudo-effect of two four-agent teams.
This observation may potentially explains why adding one more learning agent only leads to only incremental improvement.

%\vspace{-2mm}

\section{Conclusion}

In this work, we develop a theoretic framework to study constant-memory strategies.
The notion of best responses and equilibria are well-established.
In particular, we highlight that responding to mixed constant-memory strategies may be computationally hard, possibly even not computable.
These results can be seen as an extension of both~\cite{chen2017k,wang2019pure} (from repeated games to stochastic games)
and~\cite{zhu2025single} (from stationary strategies to K-memory ones).
We also conduct experiments on well-known social dilemmas as well as a multi-robot domain to verify those theoretic insights.

%Limitation/Future work

%%% The next two lines define, first, the bibliography style to be 
%%% applied, and, second, the bibliography file to be used.

%\clearpage
\bibliographystyle{ACM-Reference-Format} 
\bibliography{sample}

%%%%%%%%%%%%%%%%%%%%%%%%%%%%%%%%%%%%%%%%%%%%%%%%%%%%%%%%%%%%%%%%%%%%%%%%

\clearpage
\onecolumn
\appendix

%\tableofcontents

\section{Missing Proofs for the Theoretic Results}

\subsection{For Theorem~\ref{thm:kmem_br}}
\label{app:proof:thm_kmem_br}

\begin{proof}
Our proof by induction is inspired by~\cite{puterman2014markov} (cf. Chapter 5.5).

	Given an SG, and an opponent strategy profile $\pi_{-i}^\infty \in \Pi_{-i}^\infty$, the induced MDP in general is
	$\mathcal{M}^\infty(\pi_{-i}^\infty) = \langle \mathcal{H}^\infty\times \mathcal{S}, \mathcal{A}_i, T^\infty_{\pi_{-i}}, R^\infty_{\pi_{-i}}, \gamma \rangle$,
	\begin{itemize}
		\item $\mathcal{A}_i$ and $\gamma$ inherit from the previous setup,
		\item A state is now consisting the whole history plus the current environment state, i.e. $\mathcal{H}^\infty\times \mathcal{S}$,
		\item Transitions are now made also for the complete histories, as we have
		\[Pr(a_{-i}, S' | H, S, a_i) = T(S'|S,a)\pi_{-i}^\infty(a_{-i}|H, S)\]
		Therefore, for $(H', S')$, $(H, S) \in \mathcal{H}^\infty\times \mathcal{S}$, 
		\[
		T^\infty(H', S'| H, S, a_i) \triangleq
		\begin{cases}
%			\sum_{a_{-i} \in \mathcal{A}_{-i}}
			T(S'|S,a)\pi_{-i}^\infty(a_{-i}|H, S),
			& \text{ if } H' = [H, S, (a_i, a_{-i})] \\
			0,
			& \text{ otherwise}\\
		\end{cases}
		\]
		where $[H, S, (a_i, a_{-i})]$ means to concatenate the existing history and the latest state-action tuple, which is a deterministic operation.
		\item Rewards on the complete histories: $R^\infty(H, S, a_i) \triangleq \sum_{a_{-i} \in \mathcal{A}_{-i}}R_i(S,a)\pi_{-i}^\infty(a_{-i}|H, S)$,
	\end{itemize}
The above $\mathcal{M}^\infty$ is trivially a valid MDP because transitions are made among complete state trajectories where the Markov property must hold.
	
Now we will show that if there exists a $\pi_{-i}^K \in \Pi_{-i}^K$, such that
\begin{equation}
\pi_{-i}^\infty (a_i | (H^K, H^-), S) = \pi_{-i}^K(a_i | H^K, S)	
\label{eq:pi_proj}
\end{equation}
where $H^K = H[-\min\{K, len(H)\}:]$ and $H^- = H[:-\min\{K, len(H)\}]$ (the latest $K$ historical records and the remaining prefix),
then for the control policy of this MDP, it is sufficient for agent $i$ to restrict the attention to $\Pi_i^K$ instead of general $\Pi_i^\infty$. More specifically, given an $\pi_i^\infty \in \Pi_i^\infty$, it is possible to construct a memory-restricted alternative $\pi_i^K$ such that the following target equation holds
\begin{equation}
Pr^{\pi_i^K}(H^K,S,a_i) = Pr^{\pi_i^\infty}(H^K,S,a_i)
\label{eq:jointSA}
\end{equation}
where $Pr^{\pi}$ means the probability under the particular policy $\pi$. The above proof target is sufficient in terms of seeking for an equivalent solution because it directly pertains to the reward function.
We will show that such a strategy for agent $i$ can be constructed by the following, i.e. by marginalizing over histories happened earlier then $K$ steps ago,
\begin{equation}
\pi_i^K (a_i | H^K, S) = \sum_{H^-} \pi_i^\infty(a_i|H^K, H^-, S) Pr(H^-)
\label{eq:infty2k}
\end{equation}

We will prove this equation by induction.

\textit{For the base case}, when $|H| = 0$ which simply means $S$ is the initial state, then Equation~(\ref{eq:jointSA}) obviously holds.

\textit{For the inductive case}, we hypothesize that the following holds for all possible $(\hat H, \hat S)$ with $|\hat H| = t-1$,
\[
Pr^{\pi_i^K}(\hat H^K,\hat S,a_i) = Pr^{\pi_i^\infty}(\hat H^K,\hat S,a_i)
\]
Because of Equation~(\ref{eq:pi_proj}), we have
\begin{equation}
\begin{split}
Pr(a_{-i}, S' | H, S, a_i)
& = T(S'|S,a)\pi_{-i}^\infty(a_{-i}|(H^K, H^-), S) \\
& = T(S'|S,a)\pi_{-i}^K(a_{-i}|H^K, S) \\
& = Pr(a_{-i}, S' | H^K, S, a_i)
\label{eq:HtoHk_prob}
\end{split}
\end{equation}
Then for $|H| = t$, we have
\begin{equation}
\begin{split}
Pr^{\pi_i^K}(H^K, S)
& = \sum_{(\hat H^K, \hat S)} \sum_{a_i'} Pr^{\pi_i^K}(\hat H^K, \hat S, a_i')
\textcolor{teal}{Pr^{\pi_i^K}(H^K, S |\hat H^K, \hat S, a_i')}\\
& = \sum_{(\hat H^K, \hat S)} \sum_{a_i'} Pr^{\pi_i^\infty}(\hat H^K, \hat S, a_i') 
\textcolor{teal}{Pr^{\pi_i^\infty}(H^K, S |\hat H^K, \hat S, a_i')}\\
& = Pr^{\pi_i^\infty}(H^K, S)
\end{split}
\label{eq:hs_prob}
\end{equation}
The second equality directly follows from the inductive hypothesis and Equation~(\ref{eq:HtoHk_prob}).
Note that, for the terms in \textcolor{teal}{teal}, it does not matter which rollout policy is used, as $a_i'$ is conditioned.

Finally, we have
\[
\begin{split}
Pr^{\pi_i^K}(H^K, S, a_i)
& = Pr^{\pi_i^K}(H^K, S) \times Pr^{\pi_i^K}(a_i | H^K, S)	 \\
& = Pr^{\pi_i^K}(H^K, S) \times \pi_i^K (a_i | H^K, S) \\
& = Pr^{\pi_i^K}(H^K, S) \times \sum_{H^-} \pi_i^\infty(a_i|H^K, H^-, S) Pr(H^-)\\
& = Pr^{\pi_i^\infty}(H^K, S) \times Pr^{\pi_i^\infty}(a_i | H^K, S)	 \\
& = Pr^{\pi_i^\infty}(H^K, S, a_i)
\end{split}
\]
The third equality holds according to Equation~(\ref{eq:infty2k}), and the fourth equality directly follows from Equation~(\ref{eq:hs_prob}).

\end{proof}

\subsection{For Theorem~\ref{thm:ne_beh} (the contraction mapping part)}
\label{app:proof:thm_ne}

We will show that, given any strategy profile $\{\pi_i\}_{i\in \mathcal{N}}$, a unique solution, i.e., a set of values $\{v_i(\cdot,\cdot)\}_{i\in \mathcal{N}}$, for Equation~(\ref{app:eq:bee}) is guaranteed to exist. For simplicity, we use $v_i$ as a shorthand for $v_i|_{\pi_i}^{\pi_{-i}}$.

\begin{equation}
\begin{split}
& v_1(H,S) = 
 \sum_{a_1\in\mathcal{A}_1} \pi_1(a_1|H, S) 
\Big[ R^K_{\pi_{-1}}(H, S, a_1)
 + \gamma \sum_{H',S'} T^K_{\pi_{-1}}(H', S'| H, S, a_1) v_1|_{\pi_1}^{\pi_{-1}}(H',S') \Big] \\
& \cdots \\
& v_n(H,S) = 
 \sum_{a_n\in\mathcal{A}_n} \pi_n(a_n|H, S) 
\Big[ R^K_{\pi_{-n}}(H, S, a_n)
 + \gamma \sum_{H',S'} T^K_{\pi_{-n}}(H', S'| H, S, a_n) v_n|_{\pi_n}^{\pi_{-n}}(H',S') \Big]	
\end{split}
\label{app:eq:bee}	
\end{equation}

\begin{proof}

Let $\mathcal{V}$ denote the vector space of all possible value functions, where each $v\in \mathcal{V}$ is a function $\mathcal{N}\times\mathcal{H}^{\leq K} \times \mathcal{S} \mapsto \mathbb{R}$ (slightly reloading the notation $v_i(H,S)$). Let $\Xi: \mathcal{V} \mapsto \mathcal{V}$ denote the (multi-agent) Bellman optimality operator given as follows,
	\[
	\Xi(v)(i,H,S) = \sum_{a_i\in\mathcal{A}_i} \pi_i(a_i|H, S) \Big[ R^K_{\pi_{-i}}(H, S, a_i) + \gamma \sum_{H',S'} T^K_{\pi_{-i}}(H', S'| H, S, a_i) v(i, H',S') \Big]
	\]
	For the rest, we write $\Xi_v$ interchangeably with $\Xi(v)$ for better presentation.
	We use the infinity norm as the distance measure, defined as $\|v\|_\infty = \max_x|v(x)|$ for $v \in \mathcal{V}$.
	We then show for any two vectors $u, v \in \mathcal{V}$, we have $\|\Xi(u) - \Xi(v)\|_\infty \leq \gamma \|u - v\|_\infty$.
	
	\[
	\begin{split}
	|\Xi_u(i,H,S) - \Xi_v(i,H,S)|
	& = \Xi_u(i,H,S) - \Xi_v(i,H,S) \\
	& = \sum_{a_i\in\mathcal{A}_i} \pi_i(a_i|H, S) \Big[ R^K_{\pi_{-i}}(H, S, a_i) + \gamma \sum_{H',S'} T^K_{\pi_{-i}}(H', S'| H, S, a_i) u(i, H',S') \Big] \\
	& - \sum_{a_i\in\mathcal{A}_i} \pi_i(a_i|H, S) \Big[ R^K_{\pi_{-i}}(H, S, a_i) + \gamma \sum_{H',S'} T^K_{\pi_{-i}}(H', S'| H, S, a_i) v(i, H',S') \Big] \\
	& = \gamma \sum_{H',S'} T^K_{\pi_{-i}}(H', S'| H, S, a_i) \Big [ u(i, H',S') - v(i, H',S') \Big] \\
	& \leq \gamma \sum_{H',S'} T^K_{\pi_{-i}}(H', S'| H, S, a_i) | u(i, H',S') - v(i, H',S') | \\
	& \leq \gamma \sum_{H',S'} T^K_{\pi_{-i}}(H', S'| H, S, a_i) \| u - v \|_\infty \\
	& = \gamma \| u - v \|_\infty \sum_{H',S'} T^K_{\pi_{-i}}(H', S'| H, S, a_i) \\
	& = \gamma \| u - v \|_\infty
	\end{split}
	\]
	Overall, we have
	\[
	\|\Xi(u) - \Xi(v)\|_\infty = \max_{i,H,S} |\Xi_u(i,H,S) - \Xi_v(i,H,S)| \leq \gamma \|u - v\|_\infty
	\]
	Thus, $\Xi$ is a contraction mapping, and it naturally follows that $\Xi$ has only one unique fixed point.
\end{proof}

\subsection{For Theorem~\ref{thm:noneq_sg}}

\label{app:proof:thm_noneq_sg}

\begin{proof} Similarly as before, to evaluate an arbitrary $\pi_i$ under $\mathcal{M}(\pi^\iota_{-i})$ , one can establish the following
\[
\begin{split}
V_{mix}(S)
& = \sum_{a_i \in \mathcal{A}_i} \pi_i(a_i|S) \sum_{\iota} p_\iota \cdot Q_{\mathcal{M}(\pi^\iota_{-i})}(S, a_i) \\
& = \sum_{a_i \in \mathcal{A}_i} \pi_i(a_i|S)  \sum_{\iota} p_\iota \cdot \Big [ R_{\pi^\iota_{-i}}(S, a_i) + \gamma \sum_{S'} T_{\pi^\iota_{-i}}(S'|S, a_i)V_{\mathcal{M}(\pi^\iota_{-i})}(S')\Big ] \\
& = \sum_{a_i \in \mathcal{A}_i} \pi_i(a_i|S)  \sum_{\iota} p_\iota \cdot \Big [\sum_{a_i \in \mathcal{A}_{-i}}R_i(S,a)\pi^\iota_{-i}(a_{-i}|S) + \gamma \sum_{S'} \sum_{a_{-i} \in \mathcal{A}_{-i}}T(S'|S,a)\pi^\iota_{-i}(a_{-i}|S) V_{\mathcal{M}(\pi^\iota_{-i})}(S')\Big ] \\
& = \sum_{a_i \in \mathcal{A}_i} \pi_i(a_i|S) \Big [ \sum_{a_i \in \mathcal{A}_{-i}}R_i(S,a) \sum_{\iota} p_\iota\pi^\iota_{-i}(a_{-i}|S) + \gamma  \sum_{S'} \sum_{a_{-i} \in \mathcal{A}_{-i}}T(S'|S,a)\sum_{\iota} p_\iota\pi^\iota_{-i}(a_{-i}|S) V_{\mathcal{M}(\pi^\iota_{-i})}(S')\Big ] \\
&\textcolor{teal}{
\ \neq  \sum_{a_i \in \mathcal{A}_i} \pi_i(a_i|S)\Big [ \sum_{a_i \in \mathcal{A}_{-i}}R_i(S,a)
\underbrace{\sum_{\iota} p_\iota\pi^\iota_{-i}(a_{-i}|S)}_{\omega_{(\Pi^{0+}_{-i}, \vec p)}}
+ \gamma  \sum_{S'} \sum_{a_{-i} \in \mathcal{A}_{-i}}T(S'|S,a)
\underbrace{\sum_{\iota} p_\iota\pi^\iota_{-i}(a_{-i}|S)}_{\omega_{(\Pi^{0+}_{-i}, \vec p)}}
V_{mix}(S')\Big ]} \\
\end{split}
\]
The last equation does not necessarily hold as one cannot simply replace $ V_{\mathcal{M}(\pi^\iota_{-i})}$ with $V_{mix}$, as it will require to solve another totally different equation.
However, this particular equation is by definition the one that $V_{beh}$ should satisfy, i.e.,
\begin{equation}
\begin{split}
V_{beh}(S) = \sum_{a_i \in \mathcal{A}_i} \pi_i(a_i|S)\Big 
[  \sum_{a_i \in \mathcal{A}_{-i}}R_i(S,a)
\underbrace{\sum_{\iota} p_\iota\pi^\iota_{-i}(a_{-i}|S)}_{\omega_{(\Pi^{0+}_{-i}, \vec p)}} 
+  \gamma  \sum_{S'} \sum_{a_{-i} \in \mathcal{A}_{-i}}T(S'|S,a)
\underbrace{\sum_{\iota} p_\iota\pi^\iota_{-i}(a_{-i}|S)}_{\omega_{(\Pi^{0+}_{-i}, \vec p)}}
V_{beh}(S')\Big ]	
\end{split}
\label{eq:belief_induced_mdp}
\end{equation}
Hence, it is not necessarily the case that $V_{mix} = V_{beh}$.
\end{proof}

\subsection{For Theorem~\ref{thm:memk_br_pomdp}}

\label{app:proof:thm_memk_br_pomdp}

\begin{proof}
	Recall the corresponding POMDP
%	, and then show it does not overcomplicate the problem.
	is given as the tuple
	$\langle \mathcal{H}^K\times\mathcal{S}\times\Pi^{K+}_{-i}, \mathcal{A}_i, \mathcal{H}^K\times\mathcal{S}, \mathbf{T}, \mathbf{O}, \mathbf{R}, \gamma\rangle$,
	\begin{enumerate}
		\item States: $\mathcal{H}^K\times\mathcal{S}\times\Pi^{K+}_{-i}$ denote the set of underlying states. That is, a state in this POMDP is the history segment and environment state of the completely observable stochastic game augmented by the unobservable opponent strategies.
		\item As previously, $\mathcal{A}_i$ is the set of available control actions of agent $i$, and $\gamma$ the discount factor.
		\item Observations: $\mathcal{H}^K\times\mathcal{S}$ denote the set of observations that can be made by agent $i$.
		\item $\mathbf{T}: (\mathcal{H}^K\times\mathcal{S}\times\Pi^{K+}_{-i}) \times \mathcal{A}_i \mapsto \Delta(\mathcal{H}^K\times\mathcal{S}\times\Pi^{K+}_{-i})$ denote transition function, mathematically defined as
		\[ \mathbf{T}\Big((H', S', \pi_{-i}') \Big| (H, S, \pi_{-i}), a_i \Big) =
		\begin{cases}
			T^K_{\pi_{-i}}(H', S'| H, S, a_i) & \text{, if } \pi_{-i}' = \pi_{-i} \\
			0 & \text{, otherwise} 
		\end{cases}
		\]
		\item $\mathbf{O}: (\mathcal{H}^K\times\mathcal{S}\times\Pi^{K+}_{-i}) \mapsto \mathcal{H}^K\times\mathcal{S}$ denote the deterministic observation function, mathematically defined as
		\[
		\mathbf{O}\Big((H, S, \pi_{-i}')\Big) = (H, S)
		\]
		\item $\mathbf{R}: (\mathcal{H}^K\times\mathcal{S}\times\Pi^{K+}_{-i}) \times \mathcal{A}_i \mapsto \mathbb{R}$, mathematically defined as
		\[
		\mathbf{R}\Big((H, S, \pi_{-i}'), a_i\Big) = R^K_{\pi_{-i}}(H, S, a_i)		\]
	\end{enumerate}

Then, we need to show that such a reduction is correct, i.e., a solution maximizes agent $i$' expected payoff under the stochastic game w.r.t. the opponents' mixed strategy iff it maximizes the expected return in this reduced POMDP.
The argument is made by three steps:
\begin{enumerate}
	\item Given any initial state $S \in \mathcal{S}$, and any sequence of joint actions, the amount of historic information that an agent with perfect recall can possibly obtain will be the same at each timestep under both models.
	{\it
	\begin{itemize}
		\item The accumulated information that agent $i$ in the stochastic game can gather is the following set
		\[
		\{\vec q, S_0, a_{i,0}, a_{-i, 0}, S_1, a_{i,1}, a_{-i, 1}, \cdots, S_t\}
		\]
		and that in the reduced POMDP is all the historic observations
		\[
		\begin{split}
		     & \{\vec q, S_0\} \\
		\cup & \{(S_0, a_{i,0}, a_{-i, 0}), S_1\} \\
		\cup & \{(S_0, a_{i,0}, a_{-i, 0}), (S_1, a_{i,1}, a_{-i, 1}), S_2\} \\
		     & \cdots \\
		\cup & \{(S_{t-K}, a_{i,t-K}, a_{-i, t-K}), \cdots, (S_{t-1}, a_{i,t-1}, a_{-i, t-1}), S_{t}\} \\
		=    & \{\vec q, S_0, a_{i,0}, a_{-i, 0}, S_1, a_{i,1}, a_{-i, 1}, \cdots, S_t\}
		\end{split} 
		\]
	\end{itemize}
	}
	\item Given any initial state $S \in \mathcal{S}$, and any sequence of agent $i$' actions, the probability of reaching the same trajectory will be the same.
	{\it
	\begin{itemize}
		\item Because the opponents' actions are merely sampled from a constant-memory strategy.
	\end{itemize}
	}
	\item Given any initial state $S \in \mathcal{S}$, and policy that maps from all possible historic information to actions will result in the same payoff under both models.
	{\it
	\begin{itemize}
		\item Note that in an episode (or a match), the opponents will not switch to another strategy profile, therefore, the total return/payoff will solely depend on the probabilities of each possible trajectory under the two models, which is ensured to be the same by the aforementioned two points. 
	\end{itemize}
	}
\end{enumerate}

\end{proof}

\section{Nash Equilibria for Mixed Constant-memory Strategies}

\label{app:ne_mixed}
We examine whether a group of agents can form some equilibrium if all of them play mixed strategies, i.e., $\{(\Pi^{K+}_{i}, \vec p_i)\}_{i \in \mathcal{N}}$.
The story is that, if the support $\{\Pi^{K+}_{i}\}_{i \in \mathcal{N}}$ and a distribution over initial states $d_0\in \Delta(\mathcal{S})$ can be specified in the first place, the stochastic game can be further reduced to a normal-form game $\langle \mathcal{N}, \{\Pi^{K+}_{i}\}_{i \in \mathcal{N}}, \{u_i\}_{i \in \mathcal{N}} \rangle$,
%(with $u_i$ slighted overloaded \textcolor{red}{anywhere else??}) as follows,
\begin{enumerate}
	\item The game contains all agents $\mathcal{N}$,
	\item The action set of agent $i$ is $\Pi^{K+}_{i}$, i.e. to select a behavioral strategy therein,
	\item The payoff of agent $i$ is
	\[
	u_i(\pi_i, \pi_{-i}) = \sum_{S\in \mathcal{S}} d_0(S) \cdot \mathbb{E}_{(\pi_i, \pi_{-i})}\Big[ \sum_{t=0}^\infty \gamma^t R_{i,t} \Big| S_0=S \Big]
	\]
\end{enumerate}
Under this sense and provided that the reduced game is finite, invoking Nash's well-known existence theorem, we can conclude that there must exists a mixed strategy NE $\{\vec{p_i}^*\}_{i \in \mathcal{N}}$. That is, given fixed supports $\{\Pi^{K+}_{i}\}_{i \in \mathcal{N}}$, no one will be strictly better off by unitarily deviating from $\{\vec{p_i}^*\}_{i \in \mathcal{N}}$ to another distribution for mixing over its support strategies.
However, the application of this result remains an open (and perhaps even unjustified) problem.
One idea might be promising:
as we will see later, finding a behavioral strategy best response to a mixed strategy is computationally hard, but will it helps if it allows for finding a mixed strategy best response instead?
%\textcolor{red}{possibly beyond the scope of $\Pi^{K+}_{i}$} ?

\section{Details for the Iterated Prisoner's Dilemma}

\label{app:prisoner}

For the readers' convenience, we first echo the payoff matrix of the Prisoner's Dilemma in Table~\ref{app:tab:pd}, with $T>R>P>S$.

\begin{table}[ht]
\begin{tabular}{c|c|c|}
\cline{2-3}
                        & C      & D      \\ \hline
\multicolumn{1}{|c|}{C} & (R, R) & (S, T) \\ \hline
\multicolumn{1}{|c|}{D} & (T, S) & (P, P)  \\ \hline
\end{tabular}
\caption{The payoff matrix of the Prisoner's Dilemma}
\label{app:tab:pd}
\vspace{-5mm}
\end{table}

\subsection{Some Formal Results}

\begin{theorem}
	Given the opponent playing a ``$N$-Tit(s)-for-$M$-Tat(s)'' strategy, there exists a best response strategy that can be implemented with $(M-1)$-memory.
\end{theorem}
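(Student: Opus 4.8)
The plan is to reduce the problem to a finite MDP and then show that its optimal policy factors through very little memory. By Theorem~\ref{thm:kmem_br}, since the opponent's ``$N$-Tit(s)-for-$M$-Tat(s)'' strategy uses $\max(N,M)$ memory, best-responding to it is equivalent to solving the induced finite MDP $\mathcal{M}^{\max(N,M)}(\pi_{-i})$, which, by the MDP optimality lemma (Lemma~1), admits a deterministic stationary optimal policy. First I would replace the raw history-segment states by the opponent's \emph{internal automaton state}: a ``counting'' phase indexed by the number $c \in \{0,\dots,M-1\}$ of the agent's trailing defections, together with a ``retaliation'' phase indexed by the remaining retaliation length. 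Since the opponent is deterministic, this automaton state is a function of the history and yields a reduced MDP with the same optimal value, on which an optimal deterministic stationary policy $\pi_*$ exists.

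The core of the argument is a case analysis on $\pi_*$ at the automaton states, carried out via exchange and dominance arguments. In any retaliation state the opponent defects regardless, so playing $D$ weakly dominates $C$ (it earns $P \ge S$ and, under the sliding-window reading of the counter, keeps the window all-$D$ so that retaliation persists); hence once retaliation is entered, defecting forever, with value $P/(1-\gamma)$, is optimal. At a counting state with $c < M-1$ defecting strictly beats cooperating, since $T > R$ and the move $c \mapsto c+1$ never triggers retaliation. Thus the only genuine choice is at $c = M-1$: a stationary policy must either always reset (play $C$), producing the periodic \emph{$(M{-}1)$-D-before-One-C} pattern and keeping the opponent cooperating forever, or always trigger (play $D$), which leads to perpetual mutual defection and coincides with \emph{always defect}. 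I would then confirm, by a discounting exchange argument, that $(M{-}1)$-D-before-One-C is value-optimal among all policies that never trigger retaliation: each cooperation incurs discounted loss $(T-R)\gamma^{t}$, so the constraint ``no $M$ consecutive defections'' is best met by placing the single mandatory cooperation of each length-$M$ block as late as possible. Consequently the optimal value equals the maximum of the two candidate values, so at least one of the two candidates is a best response for every $\gamma$.

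It then remains to observe that both candidates lie in $\Pi_i^{M-1}$. The strategy \emph{always defect} is $0$-memory, hence $(M{-}1)$-memory for $M \ge 1$. The periodic \emph{$(M{-}1)$-D-before-One-C} strategy is implemented by inspecting only the agent's own last $M-1$ actions: play $C$ exactly when all of them are $D$ (equivalently, when the trailing own-defection count has reached $M-1$), and play $D$ otherwise; this is a function of a length-$(M{-}1)$ history segment and therefore belongs to $\Pi_i^{M-1}$. Since whichever candidate attains the optimal value is a best response, the claimed $(M{-}1)$-memory best response exists.

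The main obstacle I anticipate is making the retaliation-phase analysis fully rigorous, because it is sensitive to the exact semantics of the opponent's counter (consecutive defections versus a sliding window) and to what the opponent does when the agent defects through a retaliation block. The crucial fact to pin down is that no optimal policy ever needs to \emph{observe} the retaliation phase --- it either avoids it entirely or ignores it by defecting unconditionally --- which is precisely what eliminates the $N$-dependent part of the memory and brings the requirement down from $\max(N,M)$ to $M-1$. A secondary point requiring care is verifying that defecting strictly dominates cooperating at every counting state $c < M-1$ once continuation values, and not merely immediate rewards, are taken into account.
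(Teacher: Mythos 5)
Your proposal follows essentially the same route as the paper's own proof: invoke Theorem~\ref{thm:kmem_br} to reduce to a finite problem, identify the two candidate strategies --- the periodic \emph{$(M{-}1)$-D(s)-before-one-C} pattern and \emph{All-D} --- argue that one of the two is optimal depending on $\gamma$, and observe that both can be implemented with at most $(M-1)$ memory (the former by inspecting only the agent's own last $M-1$ actions, the latter being $0$-memory). The automaton-state reformulation you add is a cosmetic restructuring of the same argument.

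One step deserves flagging, because you state it as a definitive claim: ``once retaliation is entered, defecting forever, with value $P/(1-\gamma)$, is optimal.'' Under the standard reset semantics (the opponent restarts its $N$-round retaliation whenever the agent's last $M$ moves are all defections), this is false for $\gamma$ close to $1$: cooperating through the retaliation block costs at most $N$ rounds of payoff $S$ and then restores mutual cooperation, after which the exploitation pattern resumes, which strictly beats $P/(1-\gamma)$. This matters because your case analysis uses that claim to conclude that triggering retaliation is equivalent to \emph{All-D}, thereby dismissing trigger-then-escape strategies (payoff cycle $T^M S^N$ repeated, which would require more than $M-1$ memory to implement). However, this is precisely the same gap as in the paper's own proof, which simply asserts that after triggering retaliation ``the rest all will be changed to $(P,P)$'' without justifying why the agent would not escape. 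For the paper's payoff values (which satisfy $2R > T+S$) the trigger-escape cycle is dominated by the never-trigger pattern, so the theorem's conclusion survives, but neither your write-up nor the paper's establishes this; you at least identify the retaliation-phase semantics explicitly as the main obstacle. In short: same approach, same conclusion, and the same unproven step as the paper.
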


\begin{proof}
	As a ``$N$-Tit(s)-for-$M$-Tat(s)'' strategy is a $\max(N,M)$-memory strategy, then Theorem~\ref{thm:kmem_br} implies that there must exist a $\max(N,M)$-memory strategy serving as a best response.
	\textit{
	We then show that 1) there exists a strategy within $\max(N,M)$-memory that can result in the following payoff sequence; and 2) any strategy that can result in the following payoff sequence is a best response,
	}
	\[
\underbrace{(T, S), (T, S), \cdots, (T, S)}_{M - 1}, (R, R),
\underbrace{(T, S), (T, S), \cdots, (T, S)}_{M - 1}, (R, R), \cdots
	\]
One can construct a $M$-memory strategy as ``$X$-D(s)-before-one-C'', with $X=M-1$ here. By its name, it means to start with defection, and then cooperate only after $(M-1)$ defections.
As the opponent will retaliate only when being defected $M$ times, therefore, the constructed strategy will make the opponent cooperate all the time, hence the above payoff sequence.
In fact, implementing such a ``$(M-1)$-D(s)-before-one-C'' strategy only requires the agent to keep track of the past $(M-1)$ actions of its own: if it has played one defection in the past $(M-1)$ rounds, then keep cooperating, otherwise defect for one round. \textit{Hence, it is actually a $(M-1)$-memory strategy.}

Also, to have a better sequence, one should note that, it is impossible to ``flip'' every $(R,R)$ to $(T,S)$, as the opponent will definitely defect after the $M$-the defection.
The only way to better off is to flip some of the $(R,R)$'s to $(T,S)$'s without sacrificing too much of the future return.
In fact, it is only possible to flip the first $(R, R)$ to $(T,S)$ (i.e., by playing an ``All-D'' strategy), and the rest all will be changed to $(P, P)$. As we will show detailed calculations in the next subsection, when the agent is patient enough (i.e., with high discount factor), such a deviation is not profitable. Nevertheless, one should note that even when the agent is impatient, and therefore, adopts the ``All-D'' strategy, the strategy can be implemented with 0-memory.
\end{proof}

%We divide it into cases:
%\begin{enumerate}
%	\item When $N \leq M$, the only way is to flip the first $(R, R)$ to $(T,S)$ (i.e., by playing an ``All-D'' strategy), and the rest all will be changed to $(P, P)$. As we will show detailed calculations in the next subsection, when the agent is patient enough (i.e., with high discount factor), such a deviation is not profitable. Nevertheless, one should note that even when the agent is impatient, and therefore, adopts the ``All-D'' strategy, the strategy can be implemented with 0-memory.
%	\item When $N > M$. \textcolor{red}{Any $L$-memory can do better, with $M<L\leq N$?}
%\end{enumerate}

\subsection{Phase Transition}

We also mentioned that the best response will transit from a ``$X$-D(s)-before-one-C'' strategy to a ``All-D'' one, when the discounted factor keeps decreasing (i.e., the agents become less patient and more myopic). Now we formally derive the critical point of the discount factor that triggers such phase transition.

Assume the column player is playing a ``$N$-Tit(s)-for-$M$-Tat(s)'' strategy, we compute the discounted accumulated payoffs of the row player performing different responding strategies.
\begin{enumerate}
	\item When the row player plays a ``$(M-1)$-D(s)-before-one-C'' strategy, the payoff sequence will be
\[
\underbrace{(T, S), (T, S), \cdots, (T, S)}_{M - 1}, (R, R),
\underbrace{(T, S), (T, S), \cdots, (T, S)}_{M - 1}, (R, R), \cdots
\]
The discounted accumulated payoff will be
\begin{equation}
\begin{split}
& \big ( T + \gamma T + \cdots + \gamma^{M-2}T + \gamma^{M-1}R \big )
 +  \big ( \gamma^M T + \gamma^{M+1} T + \cdots + \gamma^{2M-2}T + \gamma^{2M-1}R  \big ) + \cdots\\
= & \frac{T}{1-\gamma^M} + \frac{\gamma T}{1-\gamma^M} + \cdots + \frac{\gamma^{M-2} T}{1-\gamma^M} + \frac{\gamma^{M-1}R}{1-\gamma^M} \\
= & \frac{\frac{T}{1-\gamma^M}(1-\gamma^{M-1})}{1-\gamma}
\end{split}
\label{app:eq:dbeforec}
\end{equation}

\item When the row play plays an ``All-D'' strategy, the payoff sequence will be
\[
\underbrace{(T, S), (T, S), \cdots, (T, S)}_{M}, \underbrace{(P, P), \cdots}_{\text{forever}}
\]
The discounted accumulated payoff will be
\begin{equation}
\begin{split}
& \big ( T + \gamma T + \cdots + \gamma^{M-2}T + \gamma^{M-1}T \big )
 +  \big ( \gamma^M P + \gamma^{M+1} P + \cdots \big )\\
= & \frac{T(1-\gamma^M)}{1-\gamma} + \frac{\gamma^M P}{1 - \gamma}
\end{split}
\end{equation}
\label{app:eq:alld}
\end{enumerate}

Compare Eq~(\ref{app:eq:dbeforec}) and Eq~(\ref{app:eq:alld}), we first simplify it to the following, and then solve $\gamma$ in terms of the other constants.
\begin{equation}
	T(1-\gamma^M)(1-\gamma^M) + P\gamma^M(1-\gamma^M) = T(1-\gamma^{M-1}) + R\gamma^{M-1}(1-\gamma)
\end{equation}
It is intractable to solve it manually. In fact, such an equation with its order being a variable is infeasible to solve even resorting to sophisticated libraries like \textit{SymPy}.\footnote{https://www.sympy.org/en/index.html}. Therefore, we substitute $M$ with concrete values first and then solve $\gamma$ using \textit{SymPy}.

We list the closed-form solutions for $M$ up to 3, and substitute \{T=2, R=1, P=0, S=-1\} to the final expression.
\begin{enumerate}
	\item When $M = 1$.
	\[
	\begin{split}
	\gamma & = \frac{R - T}{P - T} = \frac{1}{2}
	\end{split}
	\]
	The other solution is 1.
	\item When $M = 2$.
	\[
	\gamma = \frac{-P + T - \sqrt{(P^2 + 4PR - 6PT - 4RT + 5T^2)}}{2P - 2T}
	= -\frac{1}{2} + \frac{\sqrt{3}}{2} \approx 0.366025
	\]
	The other three solutions are 0, 1, and a negative real number.
	\item When $M = 3$.
	\[
	\begin{split}
	\gamma & = \frac{-\Big(\frac{1}{2}\sqrt{(-7 + \frac{27(-R + T)}{(P - T)})^2 + 32} - \frac{7}{2} + \frac{27(-R + T)}{2(P - T)}\Big)^{\frac{1}{3}}}{3} - \frac{1}{3} + \frac{2}{3\Big(\frac{1}{2}\sqrt{(-7 + \frac{27(-R + T)}{(P - T)} )^2 + 32} - \frac{7}{2} + \frac{27(-R + T)}{2(P - T)}\Big)^\frac{1}{3}} \\
	& = -1/3 - (-41/4 + 3*\sqrt{201}/4)^{1/3}/3 + 2/(3*(-41/4 + 3*\sqrt{201}/4)^{1/3}) \\
	& \approx 0.342508
	\end{split}
	\]
	The other four solutions are 0, 1, and two complex numbers.
\end{enumerate} 

\subsection{Sample Outputs of the Computed Best Responses}

We here present the computed best response for Player 1 against
Player 2 who plays a ``2-Tits-For-2-Tats'' strategy.

\begin{verbatim}
=======================================+
BR to [2 Tits For 2 Tats], val = 15.26 | 
+--------------+-------------+-------------+----------+
| Histories    | P1 action   | P2 action   |   P1 val |
+==============+=============+=============+==========+
| []           | D           | C           |  15.2632 |
+--------------+-------------+-------------+----------+
| [D, D]       | C           | C           |  14.7368 |
+--------------+-------------+-------------+----------+
| [D, C]       | C           | C           |  14.7368 |
+--------------+-------------+-------------+----------+
| [C, D]       | D           | C           |  15.2632 |
+--------------+-------------+-------------+----------+
| [C, C]       | D           | C           |  15.2632 |
+--------------+-------------+-------------+----------+
| [D, D, D, D] | C           | D           |  11.7368 |
+--------------+-------------+-------------+----------+
| [D, D, D, C] | C           | D           |  11.7368 |
+--------------+-------------+-------------+----------+
| [D, D, C, D] | D           | C           |  15.2632 |
+--------------+-------------+-------------+----------+
| [D, D, C, C] | D           | C           |  15.2632 |
+--------------+-------------+-------------+----------+
| [D, C, D, D] | C           | D           |  11.7368 |
+--------------+-------------+-------------+----------+
| [D, C, D, C] | C           | D           |  11.7368 |
+--------------+-------------+-------------+----------+
| [D, C, C, D] | D           | C           |  15.2632 |
+--------------+-------------+-------------+----------+
| [D, C, C, C] | D           | C           |  15.2632 |
+--------------+-------------+-------------+----------+
| [C, D, D, D] | C           | C           |  14.7368 |
+--------------+-------------+-------------+----------+
| [C, D, D, C] | C           | C           |  14.7368 |
+--------------+-------------+-------------+----------+
| [C, D, C, D] | D           | C           |  15.2632 |
+--------------+-------------+-------------+----------+
| [C, D, C, C] | D           | C           |  15.2632 |
+--------------+-------------+-------------+----------+
| [C, C, D, D] | C           | C           |  14.7368 |
+--------------+-------------+-------------+----------+
| [C, C, D, C] | C           | C           |  14.7368 |
+--------------+-------------+-------------+----------+
| [C, C, C, D] | D           | C           |  15.2632 |
+--------------+-------------+-------------+----------+
| [C, C, C, C] | D           | C           |  15.2632 |
+--------------+-------------+-------------+----------+
\end{verbatim}

\section{Details for the Iterated Traveler's Dilemma}

\label{app:traveler}

\subsection{Generalized (One-Shot) Traveler's Dilemma}

It is conventionally a two-player game, but here we introduce a generalized multi-player version and then present the two-player version as a special case. 
This domain is of great significance as one can see its connection with PD, auction with the same common value and negotiation.
A one-shot multiple-player Traveller's Dilemma~(TD) consists of three parameters, denoted as $TD(N,k,A)$, where $N = [1..n]$ is the set of $n$ participating agents, $k > 1$ is a constant coefficient, and $A \subseteq \mathbb{N}$ is a finite set of possible (non-negative) biddings.
Given a bidding profile $\vec{a}=(a_i, a_{-i})\in A^N$ that is simultaneously reported from all agents, the utility for agent $i$ is calculated as
\[
u_i(a_i, a_{-i}) = \min(a_i, a_{-i}) + k\cdot sign(\min(a_{-i}) - a_i)
\]
where we slightly abuse $\min$ by allowing it to first flatten all its arguments which might be a scaler or a vector and then return whichever element that is the minimum.

\begin{figure}[htbp]
	\centering
	\includegraphics[width=50mm]{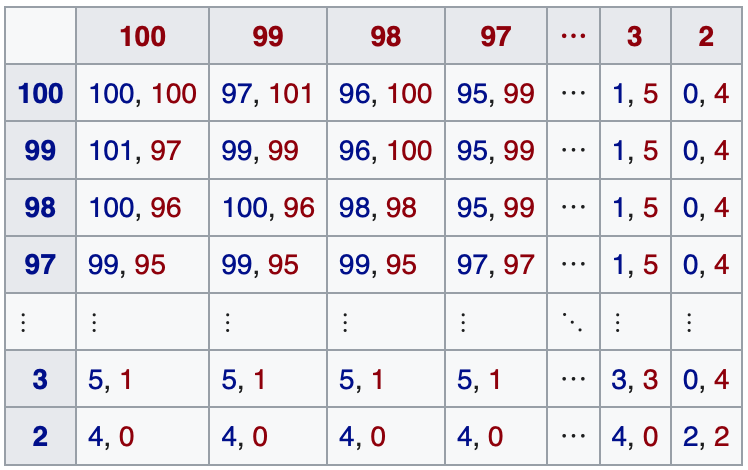}
	\caption{The payoff matrix for $TD(2,2,[2..100])$. The figure is borrowed from its wiki page.}
	\label{fig:td100}
\end{figure}

\paragraph{Profitable deviation.}
Suppose $a_i = \alpha$ and $a_{-i} = \alpha \cdot \mathbf{1}$, then
$u_i(a_i, a_{-i}) = u_i(\alpha, \alpha\cdot \mathbf{1}) = \alpha$.
Consider possible deviation as $a_i' \gets \alpha-x$,
the utility will be changed to
$u_i(a_i', a_{-i}) = \alpha - x + k$.
As long as there exists an $x$ such that $\alpha-x \in A$ and $k > x$, this will be a profitable deviation.
Note that one would never deviate to higher bids, say $a_i'' \gets\alpha + y \in A$, $u_i(a_i'', a_{-i}) = \alpha - k < \alpha = u_i(a_i, a_{-i})$.
In a nutshell, everyone is incentivized to bid slightly lower than the current lowest one.

\paragraph{Nash equilibrium.}
There is a unique NE, $a_1 = \cdots = a_n = \min(A)$, as none can bid even lower. Note that in terms of better response instead of best response, one does not have to bid lower than the current lowest one, she can just bid the same as the current lowest one. For example, $u_i(\alpha, (\alpha - 1) \cdot \mathbf{1}) = \alpha - 1 - k$, and $u_i(\alpha - 1, (\alpha - 1) \cdot \mathbf{1}) = \alpha - 1$.

\paragraph{Two-player traveller's dilemma.}
Figure~\ref{fig:td100} is an instance of $TD(N=2, k=2, A=[2..100])$.
For a detailed story for this game,
please refer to the wiki page\footnote{https://en.wikipedia.org/wiki/Traveler\%27s\_dilemma}.
In this $TD(2, 2, [2..100])$ game, each agent required to bid an integer in the interval of $[2..100]$.
The utility can be rewritten as
\[
\begin{cases}
	u_1(a_1, a_2) := \min(a_1, a_2) + 2\cdot sign(a_2 - a_1) \\
	u_2(a_1, a_2) := \min(a_1, a_2) + 2\cdot sign(a_1 - a_2)
\end{cases}
\]
%The utility matrix is shown in Figure \ref{fig:td100}.

\paragraph{Potentials.} We show that TD has a \textit{generalized ordinal potential} by introducing the following definitions first.
%\begin{definition}[Potential]
%A function $P: A^N \mapsto \mathbb{R}$ is called a potential, if
%\[
%\forall i\in N, \forall a_{-i} \in A^{N-1}, \forall a_i, a_i' \in A,
%u_i(a_i, a_{-i}) - u_i(a_i', a_{-i}) = P(a_i, a_{-i}) - P(a_i', a_{-i})
%\]
%\end{definition}
%\begin{definition}[Ordinal potential]
%A function $P: A^N \mapsto \mathbb{R}$ is called an ordinal potential, if
%\[
%\forall i\in N, \forall a_{-i} \in A^{N-1}, \forall a_i, a_i' \in A,
%u_i(a_i, a_{-i}) - u_i(a_i', a_{-i}) > 0 \iff P(a_i, a_{-i}) - P(a_i', a_{-i}) > 0
%\]
%\end{definition}
\begin{definition}[Generalized ordinal potential]
A function $P: A^N \mapsto \mathbb{R}$ is called a generalized ordinal potential, if
\[
\forall i\in N, \forall a_{-i} \in A^{N-1}, \forall a_i, a_i' \in A,
u_i(a_i, a_{-i}) - u_i(a_i', a_{-i}) > 0 \implies P(a_i, a_{-i}) - P(a_i', a_{-i}) > 0
\]
\end{definition}

We first show the above $TD(2, 2, [2..100])$ has a generalized ordinal potential as a warming up example, and then extend it to the general case.

\begin{theorem}
	$TD(2,2,[2..100])$ has a generalized ordinal potential.
\label{thm:TD2_potential} 
\end{theorem}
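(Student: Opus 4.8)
My plan is to prove the claim by showing that $TD(2,2,[2..100])$ has no \emph{strict improvement cycle}, from which a generalized ordinal potential follows at once. First I would recast each player's payoff into an ordinally equivalent form that exposes the undercutting structure. Fixing the opponent's bid $a_{-i}=c$, the payoff $u_i(a_i,c)=\min(a_i,c)+2\,\mathrm{sign}(c-a_i)$, viewed as a function of agent $i$'s own bid, is strictly increasing on $a_i\in\{2,\dots,c-1\}$ (where $u_i=a_i+2$), attains its unique maximum $c+1$ at the ``undercut-by-one'' bid $a_i=c-1$, drops to $c$ at $a_i=c$, and is \emph{constant} equal to $c-2$ for every $a_i>c$. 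Hence agent $i$'s improving moves are exactly those that raise the effective value $e_i=\min(a_1,a_2)-2R_i$, where $R_i\in\{0,1,2\}$ records whether $i$'s bid is strictly below, tied with, or strictly above the other's. The single most important fact to extract is that \textbf{whenever a player bids strictly above the opponent her payoff is independent of her own bid}; this tie is what will obstruct cycles.

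The core step is acyclicity, and since the game is finite and two-player it suffices to preclude strict improvement cycles among alternating unilateral deviations. I would begin with the minimal nontrivial case, a four-step cycle $(x,y)\to_{1}(x',y)\to_{2}(x',y')\to_{1}(x,y')\to_{2}(x,y)$, and show it cannot consist of four strict improvements. Taking without loss of generality $x<x'$, the requirement that agent $1$ strictly prefer the higher bid $x'$ against $y$ forces $x<y$; thus agent $2$, facing $x$ in the profile $(x,y)$, is strictly above and receives the minimal payoff $x-2$. For the closing step $(x,y')\to_{2}(x,y)$ to be a strict gain, agent $2$ would then have to sit strictly below $x$ in $(x,y')$, i.e.\ $y'<x-4$; but this places \emph{both} of agent $1$'s bids $x<x'$ strictly above $y'$, so by the constant-above-payoff fact the step $(x',y')\to_{1}(x,y')$ is a tie, contradicting its required strictness. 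The plan is then to promote this local argument to cycles of arbitrary length by locating, within any candidate cycle, a player who stays strictly above her opponent across the two profiles adjacent to an extremal (e.g.\ minimal-$\min$) position, and deriving the same forced tie.

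Granting acyclicity, the transitive closure of the single-deviation strict-improvement relation is an irreflexive partial order on the finite profile set; assigning increasing real values to any linear extension yields a function $P$ that strictly increases along every strict unilateral improvement, which is precisely a generalized ordinal potential (equivalently, one may invoke the characterization of Monderer and Shapley that finite games with the finite improvement property admit such a potential). I expect the main obstacle to be the general-length acyclicity rather than this existence step, for two reasons. The obvious Lyapunov candidate, the minimum bid, is \emph{not} monotone under improvements --- a player far below the opponent strictly improves by \emph{raising} her bid toward the undercut point, which increases $\min$, whereas undercutting decreases it --- and moreover no closed-form potential of the natural ``lowest bid plus who-is-lowest'' shape can exist: the below-region role would require $P$ to track the low bid, while the above-region role would require $P$ to track that same bid reduced by the fixed additive gap $2k=4$, which is irreconcilable. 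Consequently the argument must be genuinely combinatorial, and it is exactly the rank-based, constant-above-payoff mechanism used in the four-cycle case that I would carry over to obtain the general $TD(N,k,A)$ result.
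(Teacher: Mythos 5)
Your route is genuinely different from the paper's. For this theorem the paper writes down an explicit potential (off the diagonal, essentially $P = 2(101-\max)-(\max-\min-1)$; on the diagonal, $P = 2(101-\alpha)-1$) and verifies cases, whereas you prove acyclicity of the strict unilateral-improvement relation and then obtain a generalized ordinal potential from a linear extension (equivalently, the Monderer--Shapley FIP characterization) --- which is in fact the route the paper itself takes for the later, general $TD(N,k,A)$ theorem. What you have written is correct as far as it goes: the payoff normalization (strictly increasing in own bid below the opponent, unique maximum at undercut-by-one, constant at $c-2$ above) is right, the four-cycle analysis is airtight ($x<x'$ improving forces $x<y$; the closing step forces $y'<x-4$; then both $x$ and $x'$ sit strictly above $y'$, so the third step is payoff-neutral, a contradiction), and the linear-extension existence step is standard. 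A point in favor of your route: your warning that the constant-payoff-above region obstructs simple closed-form potentials is exactly what defeats the paper's own construction. Against $a_2=50$, deviating from $a_1'=2$ to $a_1=100$ raises player 1's payoff from $4$ to $48$, yet the paper's $P$ drops from $55$ to $-47$; this violation is invisible in the paper's verification because it restricts attention ``w.l.o.g.'' to $a_1=a_2-1$, a restriction that is not actually without loss of generality.

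The genuine gap is the step you defer: promoting the four-cycle argument to cycles of arbitrary length, which is the heart of the proof. Moreover, the mechanism you propose (``locate a player who stays strictly above her opponent adjacent to a minimal-$\min$ profile and derive a forced tie'') misses one case that is \emph{not} a tie: at a minimal-$\min$ profile $(x,y)$ with $x<y$ and $x=m^*$, the above player can strictly improve by moving to the tie $(x,x)$, since her payoff rises from $x-2$ to $x$. The constant-above-payoff fact does not exclude this move; you need the further observation that from the tie $(x,x)$ the only strict improvement available to either player is the undercut to $x-1$, which pushes the minimum below $m^*$ and contradicts extremality of $m^*$ over the cycle. With that, the argument does close: (i) in a minimal cycle, consecutive moves are by different players (otherwise merge two consecutive improvements by the same player into one and shorten the cycle); (ii) no tie profile can realize the cycle's minimal $\min$, by the undercut observation; (iii) at a non-tie minimal-$\min$ profile, the above player can neither strictly improve into it (entering from above or from the tie does not raise her payoff, and entering from below would put a profile with $\min$ smaller than $m^*$ on the cycle) nor strictly improve out of it (moving below $x$ violates minimality, and moving to the tie lands in case (ii)); hence both moves adjacent to that profile are by the low player, contradicting (i). As submitted, your proposal proves only the four-cycle case of the statement; writing out (i)--(iii) completes it, and then your argument is not only valid but arguably more reliable than the paper's explicit construction.
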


\begin{proof}
	We prove it by construction. We will construct such a $P$ divided by cases.
\begin{enumerate}
	\item If $a_1 = a_2 = \alpha$, then $P(\alpha, \alpha) = 2\times (101 - \alpha) - 1$
	\item If $a_1 = a_2 - 1 = \alpha - 1$, then $P(\alpha-1, \alpha) = 2\times (101 - \alpha)$. Same for $P(\alpha, \alpha-1)$.
	\item If $a_1 = a_2 - 1 - x = \alpha - 1 - x$, then $P(\alpha-1-x, \alpha) = 2\times (101 - \alpha) - x $. Also, same for $P(\alpha, \alpha-1-x)$.
\end{enumerate}
\begin{figure}[htbp]
	\centering
	\includegraphics[width=40mm]{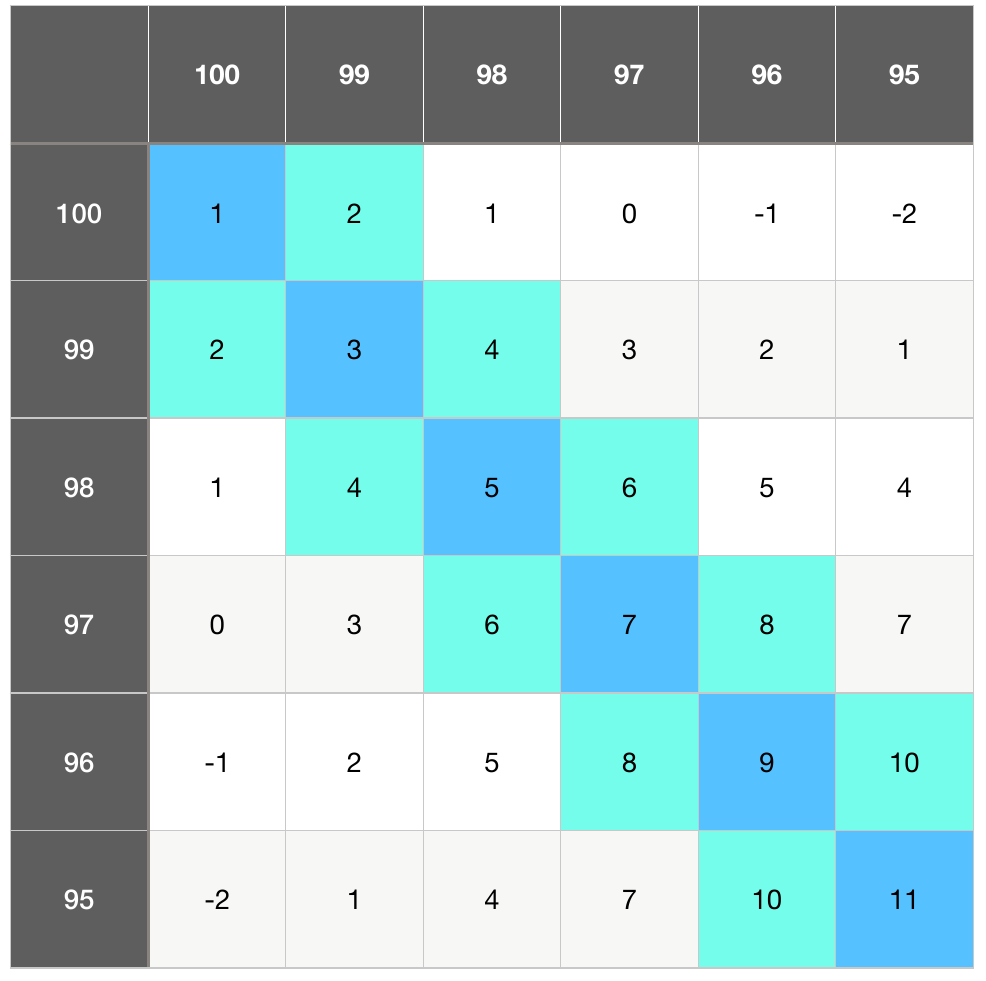}
	\caption{A general ordinal potential of $TD(2,2,[2..100])$.}
	\label{fig:potential}
\end{figure}

One can then examine the definition. W.l.o.g., we might as well consider fixing $a_2$, for two possible bids of agent 1, $a_1$ and $a_1'$,
\begin{enumerate}
	\item If $a_1 > a_2 + 1$ and $a_1' > a_2 + 1$, $u_1(a_1, a_2) = u_1(a_1', a_2) = a_2 - 2$. The premise of the implication is not satisfied, hence the implication is vacuously true.
	\item If $a_1 = a_1'$, the definition is also vacuously true. 
	\item If $a_1 = a_2 - 1$, then $u_1(a_1, a_2)= a_2 - 1 + 2 = a_2 + 1$, and $P(a_1, a_2)= 2 \times (101 - a_2)$, by the following sub-cases,
	\begin{enumerate}
		\item If $a_1' > a_2 + 1$, then
		\[
		\begin{cases}
		u_1(a_1', a_2)  = a_2 - 2  \\
		P(a_1', a_2) =2\times(101-a_1') - (a_1'-1-a_2) = 202 - 3a_1'+a_2+1 < 202-2a_2-2	
		\end{cases}
		\]
		\item If $a_1' = a_2 + 1$, then
		\[
		\begin{cases}
		u_1(a_1', a_2)= a_2 - 2 \\
		P(a_1', a_2)=2\times(101-a_1') = 202-2a_2-2
		\end{cases}
		\]
		\item If $a_1' = a_2 $, then
		\[
		\begin{cases}
		u_1(a_1', a_2)= a_2 \\
		P(a_1', a_2)=2\times(101-a_2) - 1 = 202-2a_2-1
		\end{cases} 
		\]
		\item If $a_1' = a_2 - 1$, then $a_1 = a_1'$, and the definition is vacuously true.
		\item If $a_1' < a_2 - 1 $, then
		\[
		\begin{cases}
		u_1(a_1', a_2)= a_1' + 2 < a_2 + 1 \\
		P(a_1', a_2)= 2\times (101 - a_2) - (a_2-1-a_1') = 202-3a_2+a_1'+1 < 202-2a_2
		\end{cases}
		\]
	\end{enumerate}
	All the above sub-cases satisfy $P(a_1, a_2) > P(a_1', a_2)$ whenever $u_1(a_1, a_2) > u_1(a_1', a_2)$.
\end{enumerate}
\end{proof}

The proof of Theorem~\ref{thm:TD2_potential} provides one with an illustrative example where a generalized ordinal potential function $P$ can be directly devised along with its rough structure.
Further, we show in general the multi-player TD game has a generalized ordinal potential, however, by a powerful but rather intuitive lemma so as to make our lives much easier, instead of explicitly coming up with a desired function $P$.

\begin{definition}[Improvement path]
An improvement path with respect to a bidding profile $\vec a$ is a maximal sequence of profitable deviations starting with $\vec a$. 
\end{definition}

\begin{lemma}[Finite improvement property] A game is said to have the finite improvement property~(FIP) if every improvement path is finite. Every finite game has a generalized ordinal potential iff it has the FIP.	
\end{lemma}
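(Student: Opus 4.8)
The plan is to prove both directions of the equivalence by viewing the improvement relation on the finite set of action profiles $A^N$ as a directed graph, and exploiting the interplay between finiteness, acyclicity, and the existence of a real-valued potential. Throughout I will use that every profitable unilateral deviation corresponds to a single edge in this graph, which is exactly the object the generalized ordinal potential must track.

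For the forward direction (a generalized ordinal potential implies the FIP), the argument is short. Suppose $P$ is a generalized ordinal potential, and let $\vec a^0, \vec a^1, \vec a^2, \ldots$ be an improvement path. Each consecutive pair differs by a profitable deviation of some player $i$, so $u_i(\vec a^{t+1}) > u_i(\vec a^t)$, and the defining property of $P$ gives $P(\vec a^{t+1}) > P(\vec a^t)$. Hence $P$ strictly increases along the path. Since the game is finite, $P$ takes only finitely many values on $A^N$, so a strictly increasing sequence can never revisit a profile and must terminate. Thus every improvement path is finite.

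For the converse (the FIP implies a generalized ordinal potential), I would encode the structure as a directed graph $G$ on the vertex set $A^N$, placing an edge $\vec a \to \vec b$ whenever $\vec b$ is obtained from $\vec a$ by a single profitable deviation. The FIP says $G$ admits no infinite directed walk; since $A^N$ is finite, this is equivalent to $G$ being acyclic, because any cycle could be traversed forever to yield an infinite improvement path. The transitive closure of an acyclic relation on a finite set is a strict partial order, and I would then invoke the standard fact that every finite strict partial order admits a linear extension. Realizing that extension by an injective rank function $P: A^N \mapsto \mathbb{R}$ (for instance assigning the integers $1, \ldots, |A^N|$ in a topological order) yields $P(\vec a) < P(\vec b)$ whenever $\vec a \to \vec b$; since every profitable deviation is precisely such an edge, $P$ satisfies the defining inequality of a generalized ordinal potential.

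The main obstacle is the converse direction, and specifically the step of turning the purely combinatorial acyclicity property into a numerical potential. The crucial observations are that finiteness upgrades ``no infinite walk'' to ``no cycle,'' and that acyclicity of the improvement relation is exactly what permits a consistent global ranking; the forward direction and the graph bookkeeping are routine by comparison. I would also flag precisely where finiteness is used: it forces termination of any strictly increasing $P$-sequence in the forward direction, and it guarantees that the finite acyclic improvement relation extends to a linear order realizable over $\mathbb{R}$ in the converse.
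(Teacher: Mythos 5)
Your proof is correct. Note that the paper itself gives no proof of this lemma: it is stated as a known result (it is the classical lemma of Monderer and Shapley on ordinal potential games) and is used as a black box to establish that $TD(N,k,A)$ has a generalized ordinal potential. Your argument is exactly the standard proof of that classical result: strict monotonicity of $P$ along improvement paths plus finiteness of $A^N$ for one direction, and for the converse, acyclicity of the improvement graph followed by a linear extension (topological order) realized as an injective rank function. Two small points you handle implicitly but soundly: the paper's FIP is phrased via \emph{maximal} improvement paths, and this matches your ``no infinite directed walk'' formulation because an infinite sequence is vacuously maximal; and on a finite vertex set ``no infinite walk'' and ``no directed cycle'' coincide, since an infinite walk must revisit a vertex and hence contains a cycle, while a cycle can be traversed forever.
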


\begin{theorem}
	Any $TD(N,k,A)$ has a generalized ordinal potential.
\end{theorem}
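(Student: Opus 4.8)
The plan is to avoid exhibiting a potential function explicitly and instead invoke the \textbf{Finite Improvement Property} lemma. Since $TD(N,k,A)$ has finitely many agents and a finite bidding set $A$, its strategy space $A^N$ is finite; by the lemma it therefore suffices to prove the game has the FIP, i.e.\ that every improvement path is finite. As the game is finite, an improvement path can fail to terminate only by repeating a profile, so the whole task reduces to showing that \emph{no improvement cycle exists}.

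First I would pin down the shape of a profitable deviation. Writing $m_{-i}=\min(a_{-i})$, the utility $u_i(a_i,a_{-i})=\min(a_i,a_{-i})+k\cdot sign(m_{-i}-a_i)$ depends on $a_i$ only through its comparison with $m_{-i}$: it equals $a_i+k$ when $a_i<m_{-i}$, equals $m_{-i}$ when $a_i=m_{-i}$, and equals $m_{-i}-k$ when $a_i>m_{-i}$. Because $k>1$, the best undercut value $m_{-i}-1+k$ exceeds the tie value $m_{-i}$, which in turn exceeds the overbid value $m_{-i}-k$; the latter is the globally smallest utility available to agent $i$, so no deviation into the region $a_i>m_{-i}$ can ever be an improvement. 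Consequently every profitable deviation lands at a bid $a_i'\le m_{-i}$, i.e.\ the deviating agent always becomes a (weak) global minimizer.

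This yields the primary monotone invariant. After the move the new maximum is $\max_{j\ne i}a_j$, since the mover now sits weakly below all others, and this is at most the old maximum $\max(a_i,\max_{j\ne i}a_j)$; hence \emph{the maximum bid never increases along an improvement path}. Around a cycle every tracked quantity must return to its initial value, so the maximum is constant there, equal to some fixed $M$. The problem is thereby reduced to excluding cycles along which the maximum stays at $M$.

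I expect this last step to be the main obstacle. Under profitable deviations the individual bids and simple aggregates are all non-monotone: for instance the within-undercut climb $(a,M)\mapsto(M-1,M)$ with $a<M-1$ raises the mover's bid and the bid-sum while shrinking the min--max gap, whereas undercutting a tie, $(M,M)\mapsto(M,M-1)$, lowers them while enlarging the gap, so neither $\sum_i a_i$ nor the min--max gap is monotone. This is precisely the inverted, piecewise behaviour already visible in the explicit two-player construction of Theorem~\ref{thm:TD2_potential}, where the tie profile is valued just below the undercut-by-one profile. I would resolve it by a nested, lexicographic argument that peels the top tier with $M$ held fixed: track the multiset of bids from the top, showing that climbing moves (tightening an undercut toward $M-1$, or joining a tie) strictly advance a bounded secondary measure, while descending moves strictly shrink the upper tiers, so that no profile can recur; once the top tier is forced to undercut, the maximum drops below $M$, contradicting constancy. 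Finiteness of every improvement path then follows, and the FIP lemma delivers the generalized ordinal potential.
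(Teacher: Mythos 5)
Your high-level route is the same as the paper's: invoke the finite-improvement-property (FIP) lemma and show that no improvement cycle exists. However, your argument rests on a false key claim. You assert that the overbid payoff $m_{-i}-k$ is the globally smallest utility available to agent $i$, hence that no deviation into the region $a_i>m_{-i}$ can ever be an improvement. This is wrong: whenever $a_i<m_{-i}-2k$, the undercut payoff $a_i+k$ is strictly below $m_{-i}-k$, so overbidding \emph{is} a profitable deviation. Concretely, take $n=2$, $k=2$, $A=\{0,1,\dots,10\}$ and the profile $(0,7)$: agent $1$ currently gets $0+k=2$, while deviating to $a_1'=8>7$ yields $\min(8,7)+k\cdot sign(7-8)=7-2=5>2$. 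Overbidding is indeed never a \emph{best} response (it is dominated by undercutting by one), but improvement paths admit any profitable deviation, not only best responses. This one overlooked move destroys both of your structural claims: the deviator does not become a weak global minimizer, and the maximum bid does increase (from $7$ to $8$ in the example), so your ``primary monotone invariant'' fails. The invariant is salvageable, but only by an argument you do not give: overbidding requires a lowest-to-second-lowest gap exceeding $2k$; every cycle must contain a bid-decreasing move (up-moves strictly increase the bid sum); and after any bid-decreasing move that gap is below $2k$ and no subsequent move can push it back to $2k$ or more, so no cycle contains an overbid and the maximum is constant along any cycle.

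Second, even granting that reduction, your final step is a plan rather than a proof: you say you ``would resolve it by a nested, lexicographic argument'' peeling the top tier, but you never define the secondary measure or verify that climbing and descending moves behave as claimed, and you yourself flag this as the main obstacle. So the proposal is incomplete exactly where the real work lies. For calibration against the source: the paper's own proof is a two-sentence argument with essentially the same defect---it asserts that every profitable deviation is a weakly downward move by a non-lowest bidder, overlooking precisely the strict minimizer's upward deviations (raising toward the second-lowest bid, tying, and overbidding); the same overlooked overbid move even breaks the explicit potential constructed in the proof of Theorem~\ref{thm:TD2_potential} (the deviation $(2,7)\to(9,7)$ raises $u_1$ from $4$ to $5$ but lowers that $P$ from $184$ to $183$). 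So your attempt is no less careful than the paper's, but as written it contains a false lemma and an unfinished core step, and therefore does not establish the theorem.
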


\begin{proof}
	Starting from any bidding profile, a profitable deviation is for a bidder that is not currently the lowest bidder (including tied bids) to bid less than or equal to the currently lowest one. The improvement path cannot be any longer if every one reaches the lowest possible bid.
\end{proof}

\subsection{The Iterated Traveler's Dilemma}

An iterated traveler's dilemma (ITD) consists of a Markov chain of TDs.
More specifically, an ITD is a 5-tuple $\langle N, k, [\underline{A}..\overline{A}] , p, \gamma \rangle$, given as follows,
\begin{enumerate}
	\item For all $A\subseteq [\underline{A}..\overline{A}]$, $\langle N,k,A \rangle $ is a valid TD defined as previously, i.e. $TD(N,k,A)$.
	\item Given the current dilemma $TD(N,k,[A_1..A_2])$, and a bidding profile $\vec{a} \in [A_1..A_2]^N$, the successor dilemma can be changed to $TD(N,k,[op_1(\vec a)..op_2(\vec a)])$ w.p. $p$, or stay the same otherwise. In particular, $op_1$ and $op_2$ are the two operators that aggregates the last bidding profiles. For example, $op_1(\cdot)$ can a constant function that always outputs $\underline{A}$, and $op_2(\cdot)=\max(\cdot)$, implying an ITD whose available bidding space may shrink. Formally, a transition function is defined as
	\[
	T\Big(S' \Big| S=TD(N,k,[A_1..A_2]), \vec a \Big)=
	\begin{cases}
		p,& \text{ if } S'=TD(N,k,[op_1(\vec a)..op_2(\vec a)]) \\
		1-p,& \text{ if } S'=TD(N,k,[A_1..A_2]) \\
		0,& \text{ otherwise}
	\end{cases}
	\]
	\item A reward function that assigns each agent an immediate signal is defined as
	\[
	R_i\Big( TD(N,k,A), \vec a \Big) = 
	\begin{cases}
		u_i(\vec a;TD(N,k,A)),& \text{ if } a_1 = a_2 = \cdots = a_n \\
		0,& \text{ otherwise}
	\end{cases}
	\]
	\item The game ends immediately after when $a_1 = a_2 = \cdots = a_n$, and the rewards of this final round will also be collected. The overall objective for agent $i$ is to maximize the discounted accumulated rewards $\sum_{t=0}^{T-1}\gamma^t R_i^t$, where $T$ is the number of transitions in this episode.
\end{enumerate}

%The game will then transit to $TD(\max(a_1, a_2))$ (shrink) w.p. 0.5 and stay at $TD(M)$ w.p. 0.5.
The game will end according to some rules, e.g., when the number of rounds exceeds a specified number (as in the main body of this paper), or when all agents bid the same (a way harder version).

\subsection{Sample Curves of the Training Phase}

With some main results summarized in Figure~\ref{fig:itd_results}, we here also provide the detailed training curve, as shown in Figure~\ref{app:fig:pd_training}.

\begin{figure}[htbp]
	\centering
	\includegraphics[width=56mm]{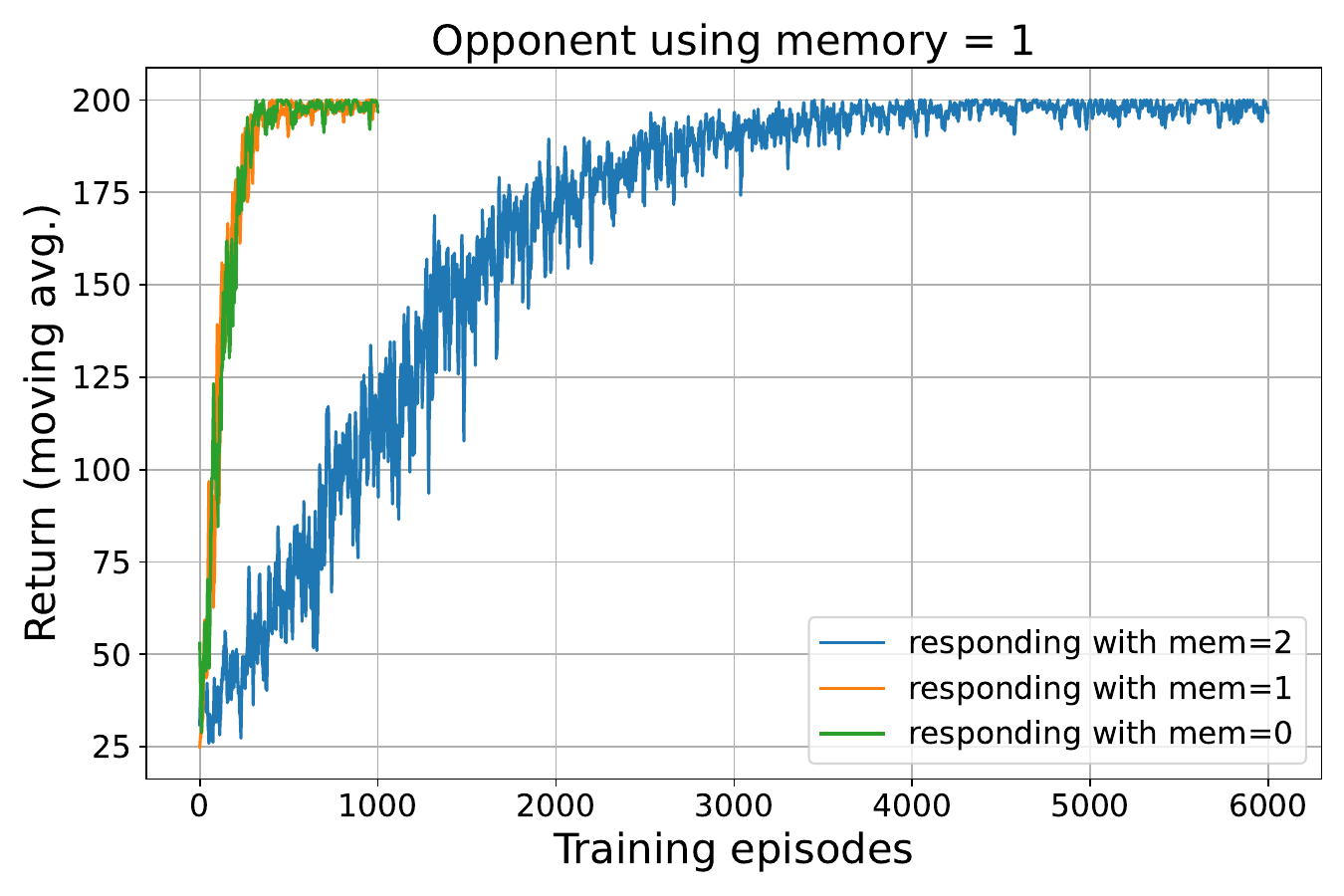}
	\includegraphics[width=56mm]{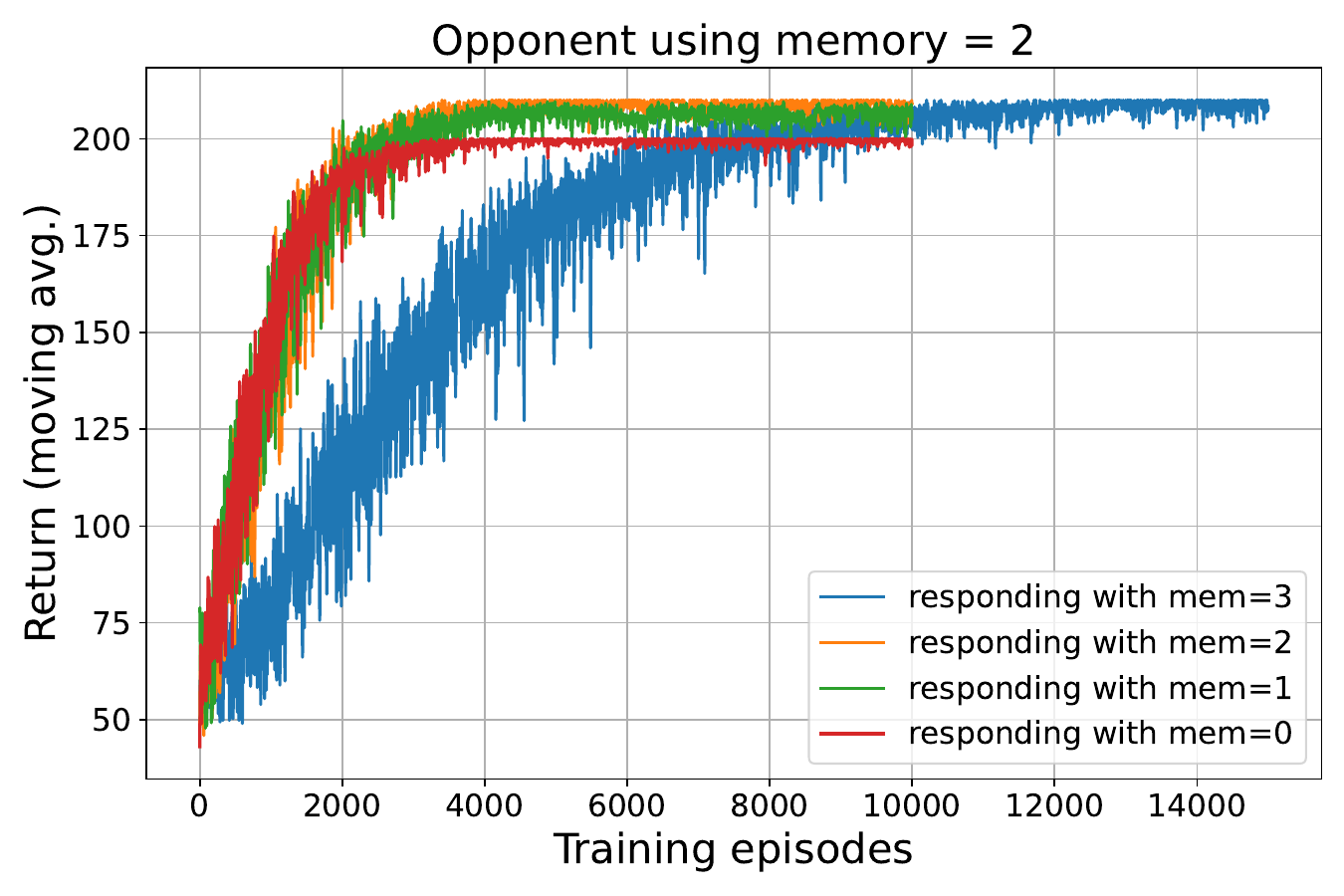}
	\includegraphics[width=56mm]{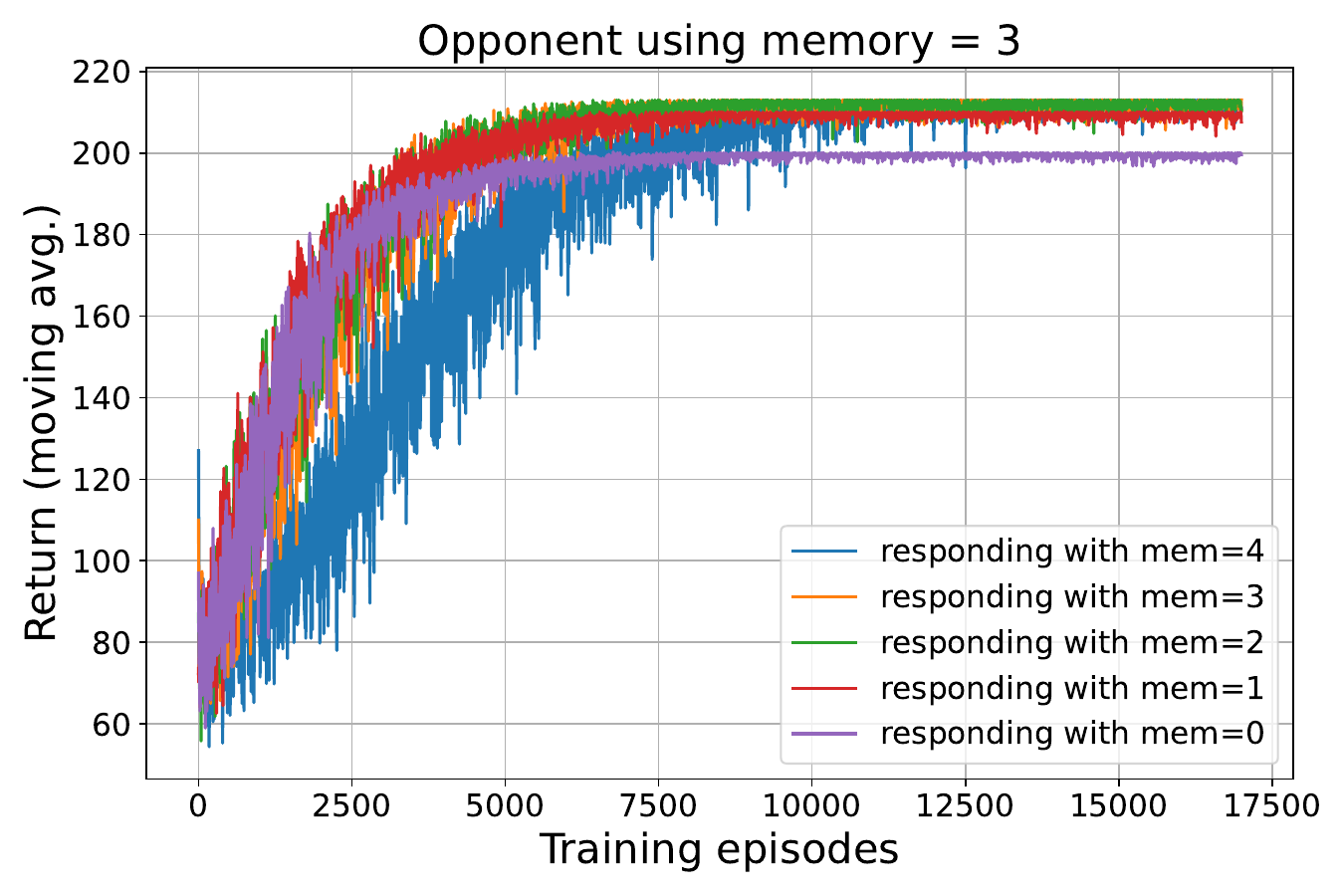}
	\includegraphics[width=56mm]{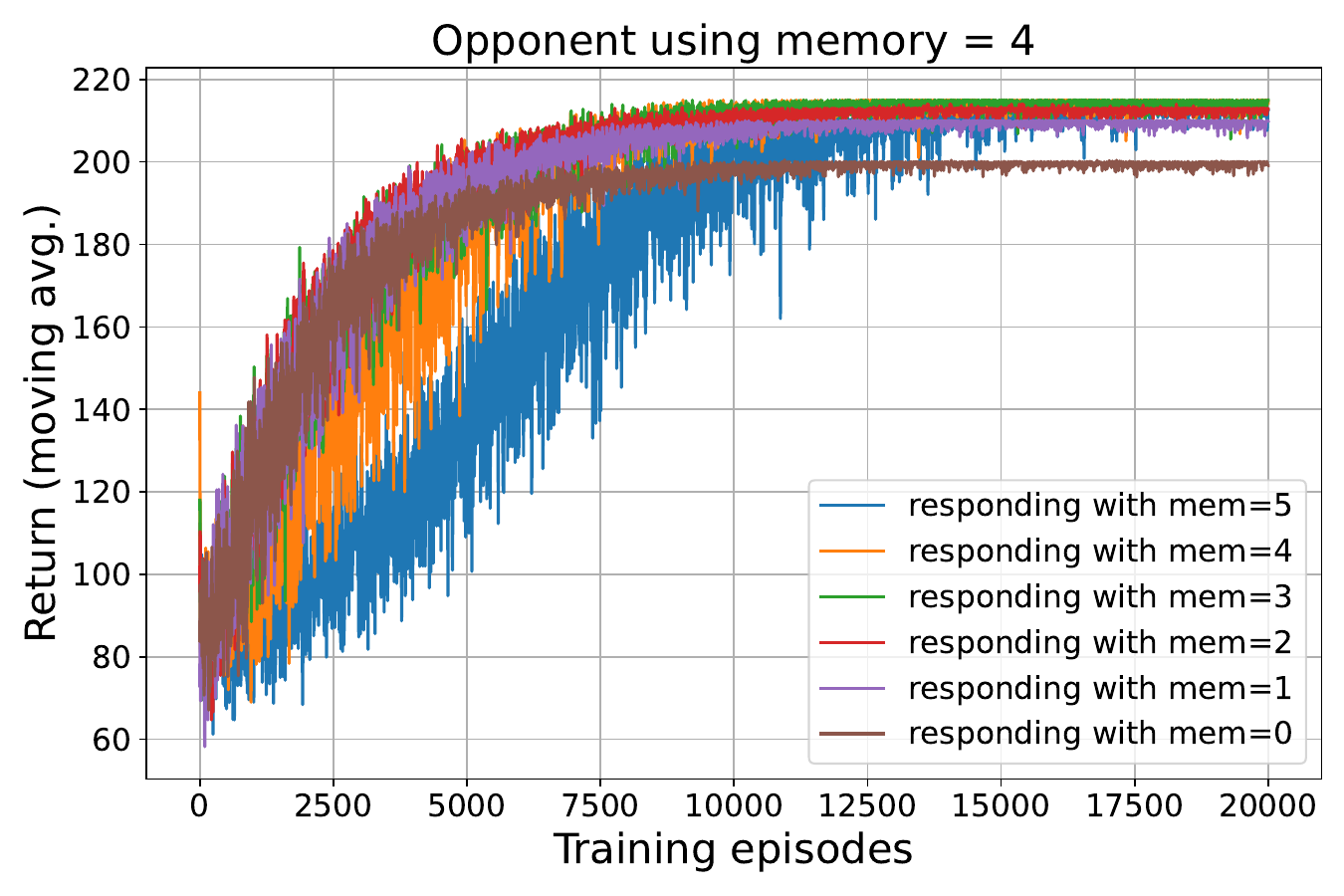}
	\includegraphics[width=56mm]{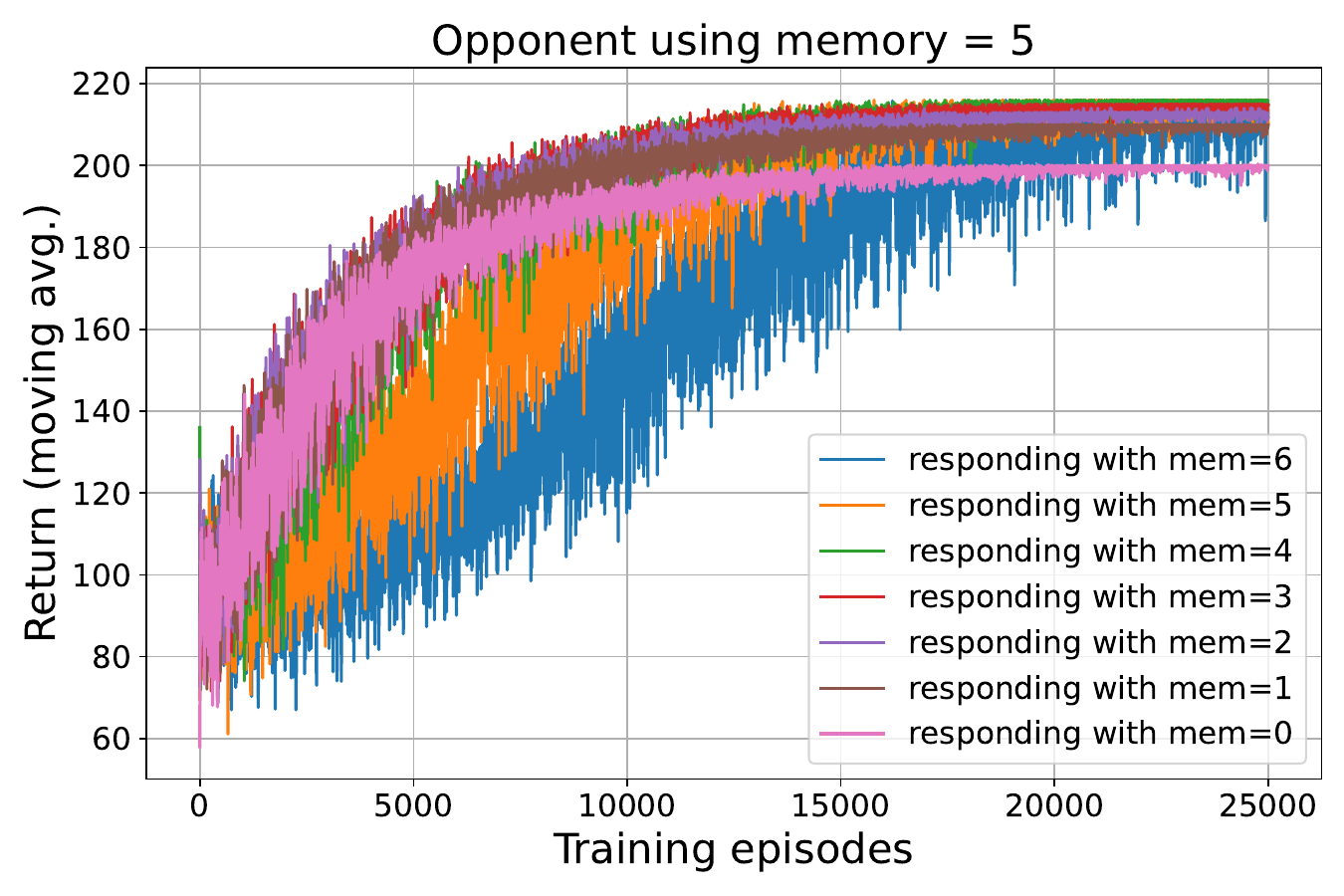}
	\caption{Training process for the Iterated Traveler's Dilemma. Each sub-figure denotes the best response (using various memory lengths) against a particular length of the memory used by the opponent.}
	\label{app:fig:pd_training}
\end{figure}

\section{Details for the Pursuit Domain}

\label{app:pursuit}

\begin{figure}[htbp]
	\centering
	\includegraphics[width=40mm]{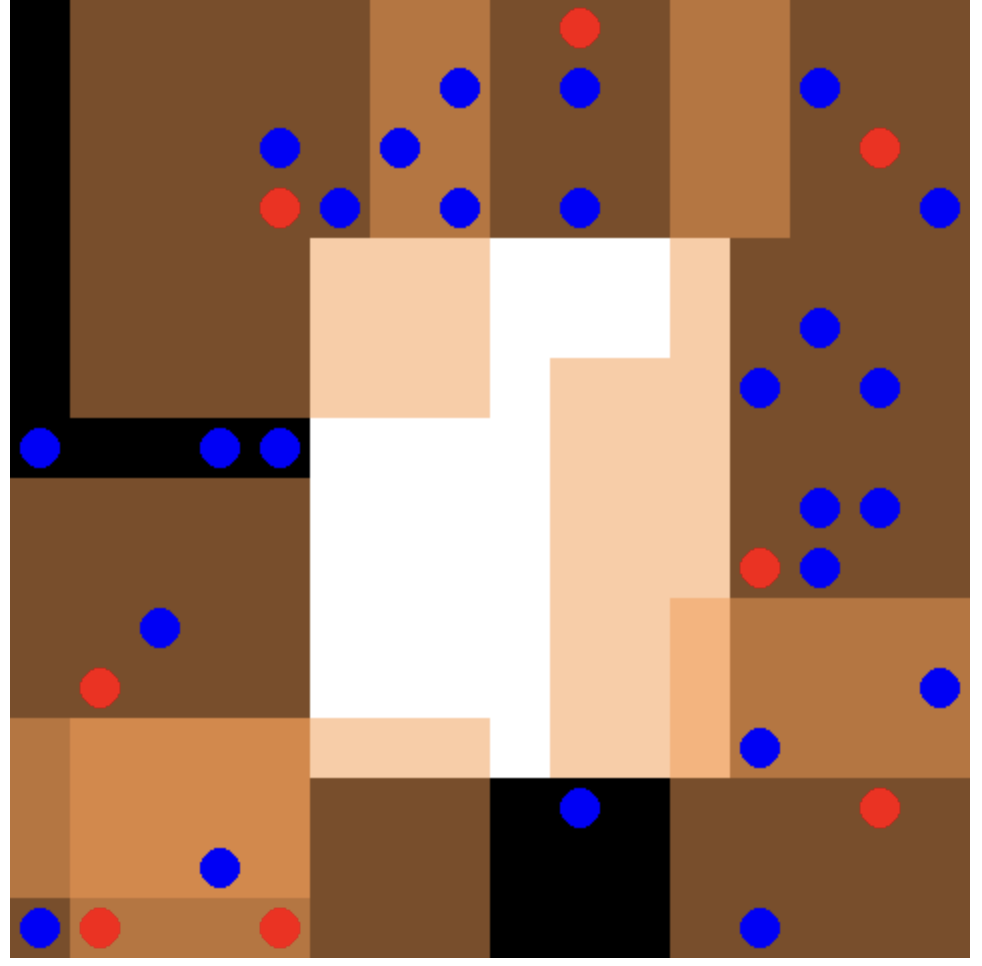}
	\caption{An illustration of the \textit{Pursuit} domain, where red dots are pursuers and blue dots are evaders. The orange squared centered at each red dot is the observed local area of that agent.}
	\label{app:fig:pursuit}
\end{figure}

\subsection{Detailed Experimental Settings}

\subsubsection{Environment Setup}
The Pursuit testbed~\cite{gupta2017cooperative}, illustrated in Figure~\ref{app:fig:pursuit},  allows the users to specify a few parameters, in order to deliver a customized environment.
We have involved four configurations, as listed in Table~\ref{app:tab:config}.
Specifically, we use the first three configurations to benchmark different RL algorithms in the next subsection, while the fourth configuration is used in the experiment in the main text.

\begin{table}[!ht]
\begin{tabular}{@{}lrrrrrrrrr@{}}
\toprule
Config            & \multicolumn{1}{l}{max\_cycles} & \multicolumn{1}{l}{width} & \multicolumn{1}{l}{height} & \multicolumn{1}{l}{\#evaders} & \multicolumn{1}{l}{\#pursuers} & \multicolumn{1}{l}{obs\_range} & \multicolumn{1}{l}{tag\_reward} & \multicolumn{1}{l}{catch\_reward} & \multicolumn{1}{l}{urgency\_reward} \\ \midrule
\texttt{HighCatch}         & 300                             & 10                          & 10                          & 10                             & 8                               & 7                              & 0.1                             & 5                                 & -0.1                                \\
\texttt{HighTag}           &                                 &                             &                             &                                &                                 &                                & 2                               & 0.1                               & -0.1                                \\
\texttt{SameTagCatch}      &                                 &                             &                             &                                &                                 &                                & 1                               & 1                                 & -0.1                                \\
\texttt{HighCatchLarge} &                                 & 16                          & 16                          & 20                             & 8                               & 7                              & 0.1                             & 3                                 & -0.1                                \\ \bottomrule
\end{tabular}
\vspace{2mm}
\caption{The detailed parameters of each configurations involved}
\label{app:tab:config}
\end{table}

\subsubsection{Hardware}
We use Linux Servers with NVIDIA GeForce RTX 3090 GPUs.

\subsubsection{Network Architecture}

All RL policies are equipped with a convolutional preprocessing network.
For this preprocessing network, we use three sequential Convolution layers, namely
\begin{enumerate}
	\item
	\texttt{Conv2d(input\_channels=3, output\_channels=32, kernel\_size=4, stride=1, padding=0)},
	\item 
    \texttt{Conv2d(input\_channels=32, output\_channels=64, kernel\_size=2, stride=1, padding=0)},
    \item
    \texttt{Conv2d(input\_channels=64, output\_channels=64, kernel\_size=2, stride=1, padding=0)},
\end{enumerate}
with a \texttt{ReLU} activation followed after each layer. Finally all the features are flattened and projected into a 512-dimensional vector.
For main body of the policy/value network, we use three-layer MLPs of the hidden dimension [512, 256, 256].

\subsubsection{RL Algorithm Parameters} We mainly adopt the implementation by~\cite{stable-baselines3}. For learning to find NEs under multi-agent settings, we equip each agent with the same network and RL algorithm, and make them learn independently. Specially, for the detail algorithmic parameters,
\begin{enumerate}
	\item In DQN, we have \texttt{batch\_size=256, exploration\_fraction=0.2};
	\item In PPO, we have \texttt{batch\_size=256}, making the policy/value network of the same architecture but updated independently;
	\item In A2C, we have \texttt{ent\_coef=0.01, vf\_coef=0.5,  n\_steps=400}, and same network setup with PPO.
\end{enumerate}
Other parameters that are not mentioned are set to be their default values.

\subsection{Benchmarking Different Deep RL Algorithms}

\begin{figure}[htbp]
	\centering
	\includegraphics[width=57mm]{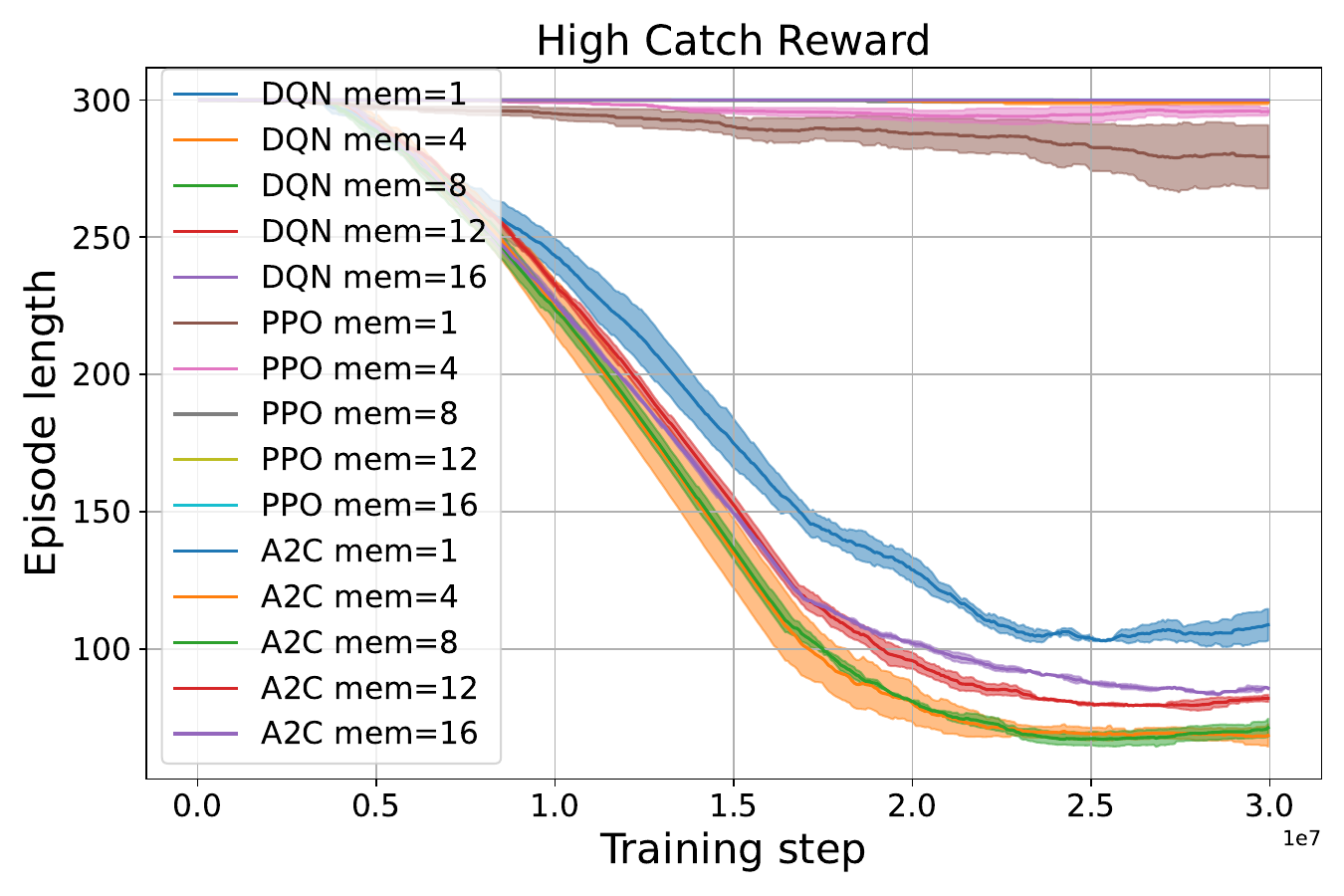}
	\includegraphics[width=57mm]{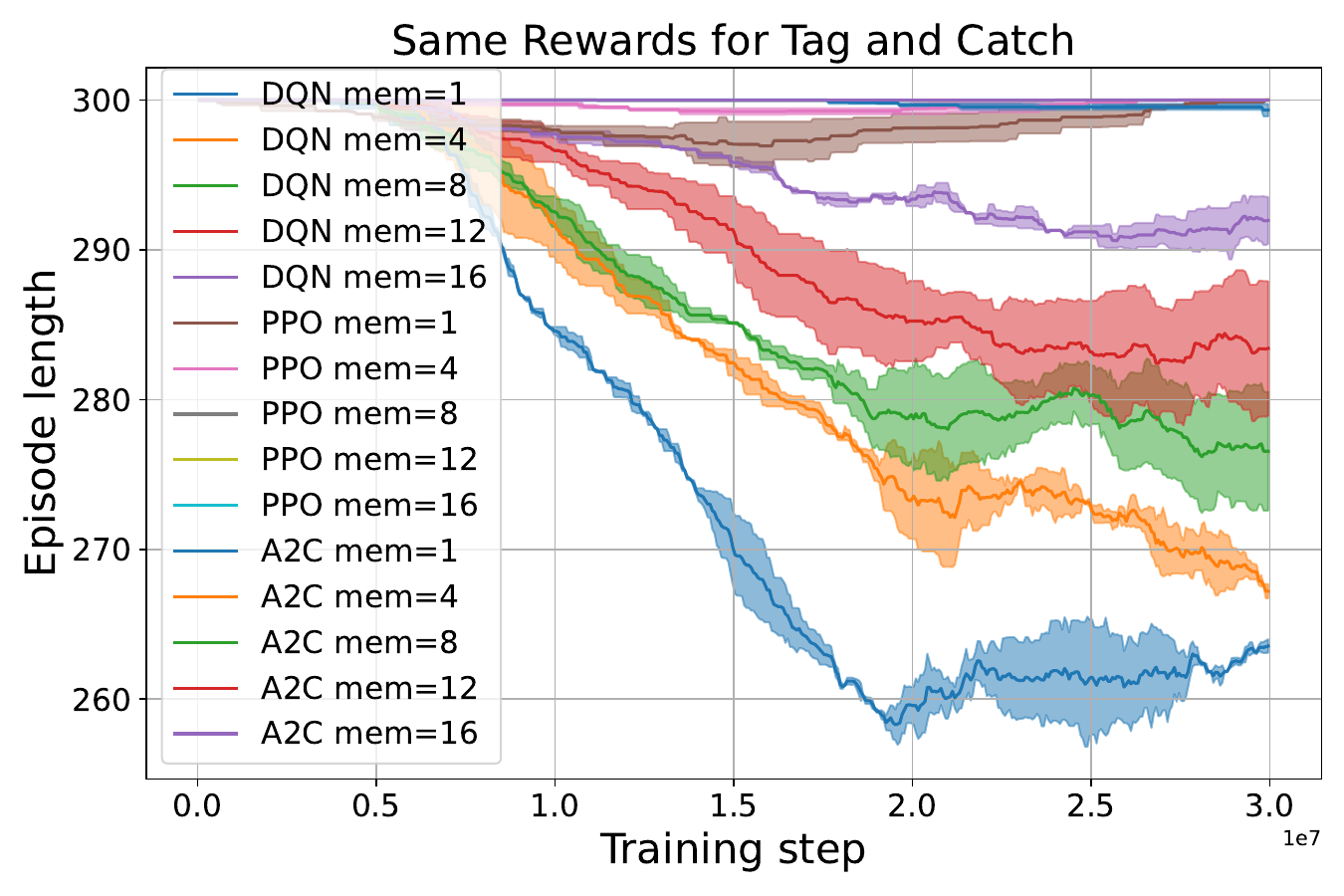}
	\includegraphics[width=57mm]{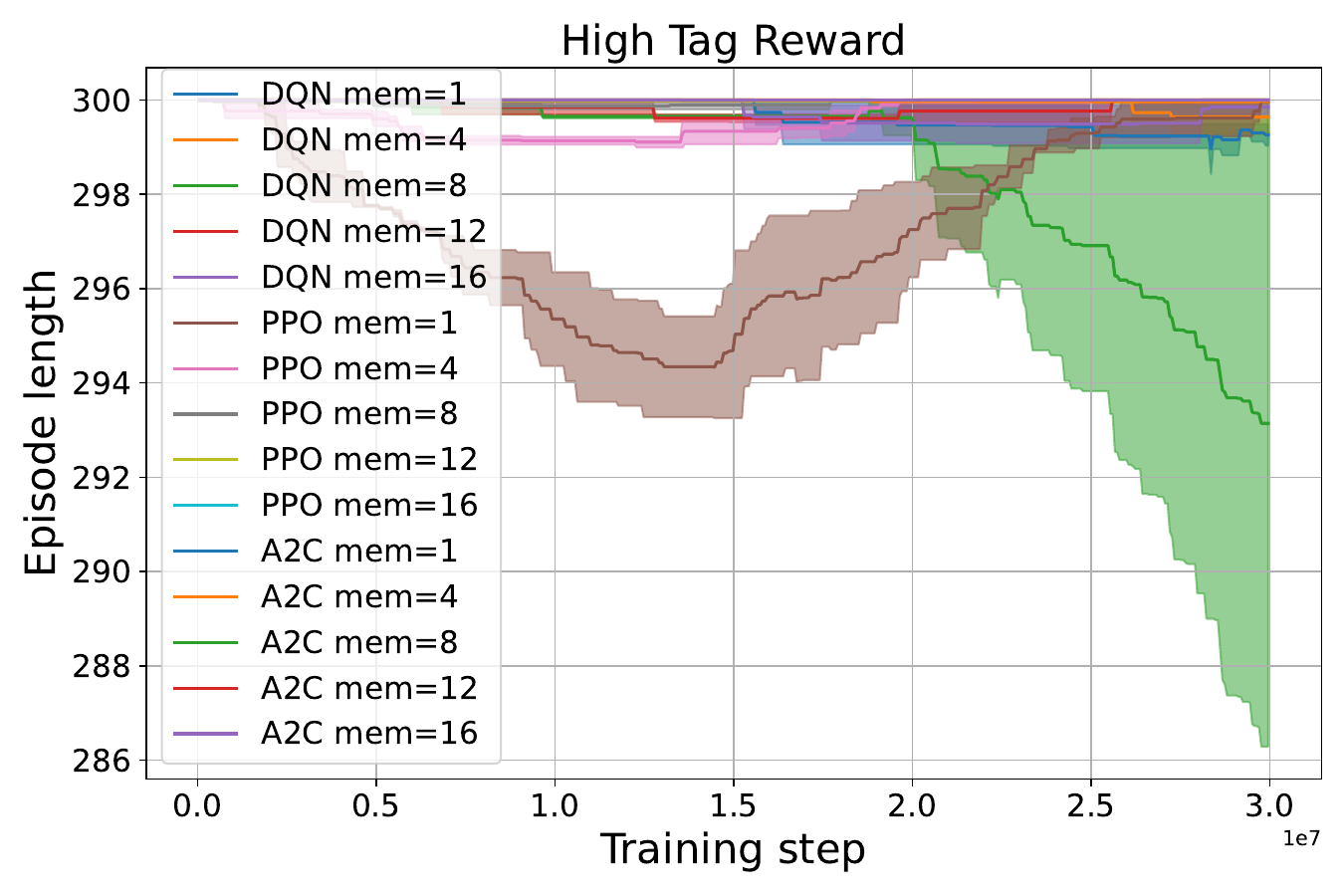}
	\caption{Benchmarking results using different algorithms and memory of varying lengths for the Pursuit domain instantiated under three configurations. From left to right, each figure represents the experimental results for the configuration \texttt{HighCatch}, \texttt{SameTagCatch}, and \texttt{HighTag}, respectively.}
	\label{app:fig:bm_pursuit}
\end{figure}

Additional to the results presented in the main text that are obtained using DQN.
We benchmark all three algorithms, including the other two, namely PPO and A2C, with the results summarized in Figure~\ref{app:fig:bm_pursuit}.
Among these three, DQN performs always the most effectively:
\begin{enumerate}
	\item For \texttt{HighCatch}, DQN can always effectively make the hunting process shorter. A longer-memory setting leads to a shorter period of hunting. In contrast, PPO and A2C do not result in successful cooperative hunting strategies.
	\item For \texttt{SameTagCatch}, the way we set-up the rewards shall lead to a solution where each agents is supposed to first tag evaders on its own, but ends up catching and removing them under cooperation with the pursuers.
	Therefore, DQN agents equipped with long memories tend to hunt those evaders more ``slowly'', leaving more time for themselves to tag the evaders to obtain sufficient rewards before the game terminates.
	In contrast, we found that although the episode lengths of PPO and A2C remain high, they are not learning to tag agents, indicated by their low returns during training.
	\item For \texttt{HighTag}, agents shall figure out that cooperatively catching evaders is not a desired strategy; rather, independently tagging the evaders without removing them from the game is supposed to be the best strategy. Thus, in most cases, the agents operates for the full episode. Except for one case of DQN with 8-memory, it seems that the agent is still confused about whether to adopt a tagging-without-hunting strategy.
	   
\end{enumerate}

\end{document}